\def\RevisionClean{1}
\def\ArxivVersion{1}
\documentclass[conference]{IEEEtran}

\usepackage{amsmath,amsthm}
\usepackage{amssymb}
\usepackage[ruled,vlined,linesnumbered]{algorithm2e}
\usepackage{booktabs,graphicx}
\usepackage{tabularx}
\usepackage{xcolor}
\usepackage[normalem]{ulem}
\usepackage{hyperref}
\usepackage{cleveref}
\usepackage{enumitem}
\usepackage{multirow}
\usepackage{pifont}
\usepackage{array}
\usepackage{makecell}
\usepackage{adjustbox}
\usepackage{xspace}
\usepackage{cite}
\usepackage{stfloats}

\usepackage{times}

\newcommand{\UC}[4]{\makecell[c]{#1\\#2\\#3\\#4}}
\newcommand{\TabTopRule}{\toprule[1.2pt]}

\newtheorem{theorem}{Theorem}

\newtheorem{proposition}{Proposition}
\newtheorem{definition}{Definition}
\theoremstyle{remark}
\newtheorem{remark}{Remark}
\theoremstyle{definition}
\newtheorem{assumption}{Assumption}

\newcommand{\sysname}{\textsc{AEGIS}\xspace}

\newcommand{\wte}{\mathbf{W}_{\mathrm{WTE}}}

\newif\ifrevisiontrack
\ifdefined\RevisionClean
  \revisiontrackfalse
\else
  \revisiontracktrue
\fi
\DeclareRobustCommand{\revdel}[1]{\ifrevisiontrack\textcolor{red}{#1}\hspace{0.25em}\fi}
\DeclareRobustCommand{\revadd}[1]{\ifrevisiontrack\textcolor{blue}{#1}\else#1\fi}
\DeclareRobustCommand{\revtag}[1]{\ifrevisiontrack
  \mbox{\textcolor{black}{\textsuperscript{\textbf{(#1)}}}}\fi}
\newenvironment{revaddblock}{\ifrevisiontrack\color{blue}\fi}{}

\makeatletter
\@ifundefined{paragraph}{\newcommand{\paragraph}[1]{\smallskip\noindent\textbf{#1.}\ }}{}
\makeatother

\begin{document}

\title{AEGIS: Attention-Embedding Gradient Isolation Shield ---
Triple-Channel Gradient Masking for Privacy-Preserving
Federated LLM Fine-Tuning}

\ifdefined\ArxivVersion
\author{
\IEEEauthorblockN{Ye Tao\IEEEauthorrefmark{1},
Hong Shen\IEEEauthorrefmark{1},
Hui Tian\IEEEauthorrefmark{2},
Xin Wang\IEEEauthorrefmark{1}, and
Can Wang\IEEEauthorrefmark{2}}
\IEEEauthorblockA{\IEEEauthorrefmark{1}Central Queensland University\\
\{y.tao,h.shen,x.wang3\}@cqu.edu.au}
\IEEEauthorblockA{\IEEEauthorrefmark{2}Griffith University\\
\{hui.tian,c.wang\}@griffith.edu.au}
}
\else
\author{\IEEEauthorblockN{Anonymous Authors}}
\fi

\maketitle

\begin{abstract}
Gradient inversion attacks recover private training text from gradients shared in federated learning, posing a serious threat to collaborative model training.
Through our analysis of transformer gradient structure, we identify three channels through which private token information leaks: the attention output projection gradient exposes a low-rank subspace that encodes input embeddings (Channel~1), the embedding gradient's row-norm sparsity directly reveals which tokens are present (Channel~2), and the MLP expansion gradient carries a recoverable subspace signal analogous to Channel~1 (Channel~3).
State-of-the-art attacks exploit these channels analytically to achieve near-exact token recovery in seconds.
Existing defences address at most one channel and either degrade model utility or leave the remaining structural signals intact.

We introduce \sysname (\textbf{A}ttention-\textbf{E}mbedding \textbf{G}radient \textbf{I}solation \textbf{S}hield), a lightweight defence that closes all three analytical channels with three backward-path operations requiring no architectural changes: freezing attention projection parameters eliminates Channel~1 by construction, calibrated noise injection into the embedding gradient destroys Channel~2's token-presence signal, and analogous per-block noise injection into the MLP expansion gradient masks Channel~3. The same masked gradient drives both the local optimiser step and the server export, so no clean signal is retained on either side.

Evaluated across \textbf{11 models} and \textbf{six datasets}, \sysname reduces token recovery rates to near zero against a range of gradient inversion attacks, both analytical and optimisation-based, while preserving or improving model utility. We provide formal guarantees for Channels~1 and~2 and validate the full defence empirically against adaptive adversaries with complete knowledge of the mechanism.
\end{abstract}

\begin{IEEEkeywords}
federated learning, gradient inversion, privacy, language models, gradient masking
\end{IEEEkeywords}

\section{Introduction}
\label{sec:intro}

Federated learning (FL)~\cite{fl_original,fedavg} enables multiple participants to collaboratively fine-tune large language models (LLMs) without centralising raw training data.
Participants compute gradients on their private corpora and transmit them to a coordinating server, which aggregates the updates and returns an improved global model.
Because raw text never leaves the client device, FL is widely assumed to preserve participant privacy.

Recent work on \emph{gradient inversion attacks} (GIAs)~\cite{dlg,geiping2021inverting,dager,lamp,grab,somp,fedspyllm} challenges this assumption: shared gradients encode enough information to reconstruct training text at the token level.
Early optimisation-based attacks~\cite{dlg,tag,lamp} iteratively minimise the distance between dummy and true gradients, but they are sensitive to initialisation and do not scale to large vocabularies.
More recent \emph{analytical} attacks bypass optimisation entirely: \textsc{DAGER}~\cite{dager} achieves near-exact token-set recovery in closed form by exploiting the algebraic structure of transformer gradients, obtaining ROUGE-1 above 0.65 on standard benchmarks; \textsc{SOMP}~\cite{somp} and \textsc{FedSpy-LLM}~\cite{fedspyllm} extend the same paradigm with head-structured pooling and rank-deficient embedding-gradient decomposition respectively.

\textsc{DAGER} succeeds because transformer language models expose several independent gradient channels, each sufficient for partial recovery and jointly sufficient for near-perfect reconstruction:
\begin{itemize}[leftmargin=1.2em]
  \item \textbf{Channel~1 (Attention SVD).} The gradient of the attention output projection $\nabla \mathbf{W}_o^{(\ell)}$ has a column space spanned by the value vectors of the input tokens. SVD of this gradient identifies the rank-$T$ subspace corresponding to the $T$ input tokens, and projecting all $V$ vocabulary embeddings onto this subspace distinguishes true tokens from the rest.
  \item \textbf{Channel~2 (Embedding Row Norms).} The embedding layer gradient $\nabla \wte$ has non-zero rows \emph{only} for tokens present in the training input. This structural sparsity provides a binary signal: zero gradient norm = token absent, non-zero norm = token present. The signal is immune to downstream architectural modifications.
  \item \textbf{Channel~3 (MLP Expansion SVD, adaptive).} The first MLP projection $\nabla \mathbf{W}_{\mathrm{fc}}^{(\ell)}$ has the same low-rank structure as Channel~1; an adaptive adversary that knows Channels~1--2 are masked can rerun DAGER's SVD pipeline on the MLP gradient to recover the same token set.
\end{itemize}

Differential privacy (DP-SGD)~\cite{dp_fl} adds calibrated noise to clipped gradients, but prior work on DP fine-tuning of language models shows that the privacy budgets required to meaningfully resist reconstruction attacks incur substantial utility cost~\cite{yu2022dpfinetuning}. As we show in Section~\ref{sec:baselines}, even large noise magnitudes ($\sigma{=}0.1$) fail to suppress DAGER's analytical recovery channel.
\textsc{Soteria}~\cite{soteria} prunes gradient components but targets convolutional vision models.
Data-level defences obfuscate training tokens (e.g.\ via embedding-proximate substitutions) before optimisation, which can suppress exact token recovery but typically incurs high preprocessing cost and does not remove the MLP gradient channel available to adaptive adversaries.
Our analysis of DAGER's channel structure (Section~\ref{sec:method_evolution}) indicates that single-channel defences, whether targeted at the attention subspace, the MLP expansion gradient, or the embedding row norms in isolation, cannot succeed: closing one channel while leaving another intact provides no net protection against an analytical adversary, because the surviving channel compensates.

We propose \sysname (\textbf{A}ttention-\textbf{E}mbedding \textbf{G}radient \textbf{I}solation \textbf{S}hield), a triple-channel gradient masking mechanism that eliminates all three analytical channels simultaneously.
The design is motivated by the observation that any defence must address every exploitable channel at once; closing some channels while leaving one intact provides no net protection against an analytical adversary that reads the surviving channel.
\sysname combines three operations:
(1) freezing all attention projection parameters, so their gradients are identically zero (Channel~1);
(2) replacing the embedding gradient with a dense, calibrated-noise version whose row norms no longer discriminate present from absent tokens (Channel~2); and
(3) applying an analogous per-block flood to the MLP expansion gradient $\nabla \mathbf{W}_{\mathrm{fc}}^{(\ell)}$, calibrated so that the MLP-SVD signal singular values fall beneath the random-matrix noise bulk in the regimes we evaluate (Channel~3).
The defence is applied after backpropagation and before the optimiser step: the local optimiser uses the modified gradient, whose flooded blocks retain a scaled learning signal, and the server receives the same masked gradient.

Our contributions are:
\begin{itemize}[leftmargin=1.2em]
  \item \textbf{Triple-channel analysis.} A systematic analysis of the channel structure \textsc{DAGER} exploits, showing that it relies on two redundant default channels plus one additional channel available to an adaptive adversary, and that any one- or two-channel defence is insufficient because the surviving channel compensates (Section~\ref{sec:method_evolution}).
  \item \textbf{The \sysname method.} A lightweight, architecture-agnostic gradient masking mechanism that closes all three channels using three backward-path operations (attention-parameter freeze, calibrated embedding flood, calibrated MLP-fc flood), with no architectural changes and negligible overhead (Section~\ref{sec:method}). Attention freezing repurposes an observation from parameter-efficient fine-tuning~\cite{lora,adapter_tuning}; we show that it also eliminates \textsc{DAGER} Channel~1 exactly, and we combine it with the embedding and MLP-expansion flooding under a single calibrated Gaussian-uniformisation schema.
  \item \textbf{Formal guarantees.} A proof that attention freezing zeros the Channel~1 exported gradient (Theorem~\ref{thm:channel1}) and that embedding gradient uniformisation eliminates Channel~2's deterministic support indicator while bounding the residual active-row second-moment inflation (Theorem~\ref{thm:channel2}).
  \item \textbf{Evaluation.} We evaluate \sysname on 11 LLMs (see Table~\ref{tab:dager_ppl_exp09}) across six benchmark datasets. Attack ROUGE-1 drops from ${\geq}0.87$ (undefended) to ${\leq}0.005$ under \sysname, a relative reduction of at least $99\%$ against the attacks we consider. Test perplexity remains at or below the undefended baseline on every model.
\end{itemize}

Section~\ref{sec:related} surveys related work.
Section~\ref{sec:prelim} formalises the federated fine-tuning threat model and DAGER's attack mechanism.
Section~\ref{sec:method_evolution} presents the analysis that motivates the triple-channel design.
Section~\ref{sec:method} describes the \sysname algorithm.
Section~\ref{sec:analysis} provides theoretical guarantees.
Section~\ref{sec:experiments} reports empirical results.
Section~\ref{sec:discussion} discusses limitations and future directions.
Section~\ref{sec:conclusion} concludes.
Full proofs are deferred to Appendix~\ref{app:proofs}.
 \section{Related Work}
\label{sec:related}

This section positions \sysname within the landscape of gradient inversion attacks and defences for federated learning.

\subsection{Gradient Inversion Attacks on Language Models}

\cite{dlg} first demonstrated that a shared gradient for a single training example is sufficient to reconstruct the input via iterative optimisation in the continuous embedding space, then mapping recovered embeddings back to discrete tokens.
\cite{idlg} improved the method by analytically extracting the label from the gradient sign.
\cite{tag} introduced $L_1$ regularisation for improved token recovery on BERT-class models.
\cite{lamp} adapted gradient inversion to language models by coupling cosine-similarity gradient matching with a language-model prior, alternating between gradient descent in embedding space and beam search in token space to reconstruct coherent sentences.
These optimisation-based approaches achieve moderate recovery rates but are sensitive to initialisation and struggle with token ordering in long sequences.

A qualitatively different class of attacks bypasses iterative optimisation entirely.
\cite{dager} introduced \textsc{DAGER}, a closed-form gradient inversion algorithm for transformer language models that exploits two structural properties of transformer gradients: (1)~the low-rank structure of the attention output projection gradient $\nabla \mathbf{W}_o^{(\ell)}$, whose column space is spanned by the value vectors of the input tokens, and (2)~the sparsity of the embedding gradient $\nabla \wte$, which has non-zero rows only for tokens present in the input.
\textsc{DAGER} achieves near-exact token-set recovery (ROUGE-1 $>0.65$) on GPT-2 and BERT without any gradient descent, completing in seconds.
\cite{grab} proved that a related analytical objective can be solved in linear time under mild assumptions.
\textsc{SOMP}~\cite{somp} extends DAGER's subspace test to a head-structured pooling stage followed by an Orthogonal Matching Pursuit reconstruction over an LM-guided beam, scaling token recovery to larger batch sizes.
\textsc{FedSpy-LLM}~\cite{fedspyllm} attacks the same structural channels via rank-deficient gradient decomposition: it reads tokens from the active rows of the embedding gradient and resolves their order through partial-gradient alignment, generalising the analytical recipe across encoder, decoder, and PEFT settings.
Earlier analytical work includes \textsc{APRIL}~\cite{april}, which exploits attention and embedding structure in vision transformers, and \textsc{SPEAR}~\cite{spear}, which achieves batch-exact inversion of fully-connected layers---a result that \textsc{DAGER} extends to transformer LMs.
\cite{rgap} derived a recursive analytical formula (R-GAP) for exact gradient inversion in fully-connected networks, establishing the algebraic foundation that later analytical attacks on transformers build upon.
Together, these analytical attacks---\textsc{DAGER}, \textsc{SOMP}, and \textsc{FedSpy-LLM} on the LM side, and \textsc{APRIL}/\textsc{SPEAR}/R-GAP on the vision and FC side---establish that gradient inversion is not merely a theoretical curiosity but a practical, computationally cheap threat that demands a structural---not merely noise-based---defence.

\cite{robbing} and \cite{decep} showed that a malicious server can modify model weights to amplify gradient leakage, enabling batch-level extraction.
These attacks operate under a stronger threat model than our honest-but-curious assumption and are orthogonal to the defences considered here.

\subsection{Defences Against Gradient Inversion}

DP-SGD~\cite{dp_fl,dpsgd} clips gradients to a bounded norm and adds calibrated Gaussian noise, offering a formal $(\varepsilon, \delta)$-DP guarantee.
However, prior work on DP fine-tuning of language models shows that the privacy budgets required to meaningfully suppress analytical gradient-inversion attacks incur substantial utility cost~\cite{yu2022dpfinetuning}, and as we show in Section~\ref{sec:baselines}, even large noise magnitudes fail to suppress the analytical recovery channel.
Gradient pruning~\cite{dlg} randomly zeroes a fraction of gradient components; advanced attacks such as \textsc{GRAB}~\cite{grab} and \textsc{DAGER}~\cite{dager} remain effective against pruned gradients.

Gradient compression methods~\cite{gradient_compression} were designed to reduce communication cost, not to provide privacy.
\cite{dager} demonstrated that compressed gradients remain fully invertible by analytical attacks.

\textsc{Soteria}~\cite{soteria} prunes gradient components that encode representation-sensitive information, targeting convolutional vision models.
\textsc{InstaHide}~\cite{instahide} mixes training examples at the data level, but its security was subsequently broken~\cite{carlini2021instahide}.
Differentially Private Word Embeddings (DPWE)~\cite{dpwe} add Laplace noise in the embedding space during inference, not during gradient computation.

A complementary line of work perturbs or substitutes training tokens before fine-tuning (e.g.\ embedding-proximate replacements chosen by search in token or hidden space).
Such methods can reduce exact recovery under analytical attacks but typically require per-token preprocessing, alter the training distribution, and leave other gradient channels (notably MLP paths) available to adaptive adversaries.
They also face an inherent \emph{semantic leakage} trade-off: if the obfuscated text preserves sufficient semantic fidelity for the language model to learn the target task, high-level semantic structure may still be reflected in the gradients.
By contrast, \sysname defends at the algebraic gradient layer, misaligning the subspaces exploited by closed-form inversion without modifying client text.

SecAgg~\cite{secagg} cryptographically hides individual gradients from the server; the original protocol has $O(n^2)$ per-round communication cost, though more recent graph-based variants achieve near-linear per-client overhead~\cite{bell2020secagg}.
SecAgg does not protect against a malicious server or colluding participants.

\subsection{Parameter-Efficient Fine-Tuning}

LoRA~\cite{lora} and adapter tuning~\cite{adapter_tuning} freeze most model parameters and train low-rank additive updates, demonstrating that pretrained transformers can be effectively adapted with fewer than 1\% of parameters trainable.
\sysname leverages the same insight: freezing the attention parameters does not meaningfully degrade fine-tuning quality, because pretrained attention patterns encode robust contextual representations that transfer well to downstream tasks.
The difference is that \sysname freezes attention for \emph{privacy}, not for parameter efficiency.

 \section{Preliminaries}
\label{sec:prelim}

This section introduces the federated fine-tuning setting, the transformer gradient structure common to decoder and encoder LMs, and the two analytical channels exploited by closed-form gradient inversion attacks (a third, adaptive channel is introduced in Section~\ref{sec:method_evolution}).

\subsection{Federated Language Model Fine-Tuning}

Let $\mathcal{V}$ denote a vocabulary of size $V$ and let $f_\theta : \mathcal{V}^* \to \mathbb{R}$ be a causal language model parameterised by $\theta$.
In federated fine-tuning~\cite{fl_original}, $K$ clients each hold a private text corpus $\mathcal{D}_k = \{s_1, \ldots, s_{n_k}\}$.
At each round, client~$k$ computes the gradient
\[
  \mathbf{g}_k = \nabla_\theta \mathcal{L}(f_\theta, s)
\]
for a mini-batch sample $s \sim \mathcal{D}_k$ and sends $\mathbf{g}_k$ to a central server.
The server aggregates the received gradients (e.g.\ via FedAvg~\cite{fl_original}) and distributes updated weights.

\paragraph{Threat model.}
We consider the FedSGD setting where each client shares the gradient computed on
a \emph{single mini-batch} (one local step) after each global round, following the standard
threat model adopted by \textsc{DAGER}~\cite{dager}, \textsc{SOMP}~\cite{somp},
\textsc{FedSpy-LLM}~\cite{fedspyllm}, and \textsc{LAMP}~\cite{lamp}.
An honest-but-curious, non-adaptive adversary controls the server or eavesdrops
on the communication channel, receiving $\mathbf{g}_k$ in cleartext.
Specifically:
\begin{itemize}[leftmargin=1.4em,itemsep=0pt,topsep=2pt]
  \item The adversary has full knowledge of the model architecture and parameters~$\theta$.
  \item The adversary does \emph{not} know which defence mechanism (if any) the client is
        using, i.e.\ the defence mechanism itself is not part of the adversary's prior.
  \item The adversary mounts an analytical gradient inversion attack, exploiting the algebraic structure of transformer gradients to recover input tokens without iterative optimisation.
  \item Only a single client is evaluated; multi-client privacy amplification is orthogonal.
\end{itemize}

This single-client, single-step evaluation setting is standard in the GIA
literature~\cite{dlg,dager}: success against one client implies vulnerability for all.

\subsection{Transformer Gradient Structure}
\label{sec:gpt2_structure}

We describe the gradient structure using GPT-2~\cite{gpt2} as a concrete reference, as its architecture is representative of the decoder-only family (GPT-2, LLaMA, Gemma).
The analytical channels described below arise from properties of the transformer block that are shared by encoder architectures (e.g.\ BERT~\cite{bert}) as well; we note encoder-specific differences where they arise.

A GPT-2-class model consists of $L$ transformer blocks.
Each block~$\ell$ contains:
\begin{itemize}[leftmargin=1.2em]
  \item \textbf{Attention projections:} query-key-value projection $\mathbf{W}_{\mathrm{QKV}}^{(\ell)} \in \mathbb{R}^{d \times 3d}$ and output projection $\mathbf{W}_o^{(\ell)} \in \mathbb{R}^{d \times d}$.
  \item \textbf{MLP feed-forward:} expansion $\mathbf{W}_{\mathrm{fc}}^{(\ell)} \in \mathbb{R}^{d \times 4d}$ and contraction $\mathbf{W}_{\mathrm{proj}}^{(\ell)} \in \mathbb{R}^{4d \times d}$.
  \item \textbf{Layer normalisation:} two LayerNorm modules per block.
\end{itemize}
The embedding matrix $\wte \in \mathbb{R}^{V \times d}$ maps token indices to hidden representations of dimension~$d$.
The LM head (linear layer producing logits over $\mathcal{V}$) typically shares weights with $\wte$ (weight tying).

\paragraph{Encoder LMs.}
BERT-class models~\cite{bert} share the same block structure.
The key difference is the absence of a causal attention mask, which removes the decoder's greedy-extension property and requires exhaustive search over candidate sequences.
Channel~1 (attention-output SVD) and Channel~2 (embedding row-norm sparsity) arise identically in encoders, so \sysname's attention-freeze and embedding-uniformisation operations apply without modification; the MLP-fc uniformisation (Channel~3) likewise transfers directly.
Our evaluation in Section~\ref{sec:experiments} confirms that \sysname reduces DAGER ROUGE-1 to $\leq 0.006$ on BERT-base and BERT-large, matching the decoder-side results.

\subsection{Analytical Attacks: Two Gradient Channels}
\label{sec:dager_attack}

Analytical gradient inversion attacks~\cite{dager,grab,somp,fedspyllm} recover the input token set $\{t_1, \ldots, t_T\}$ from a single gradient $\mathbf{g}$ in closed form, without iterative optimisation, by exploiting two structurally independent channels in the transformer gradient.
\textsc{DAGER} and \textsc{SOMP} primarily target Channel~1 (attention SVD scoring) with different downstream pipelines, while \textsc{FedSpy-LLM} primarily targets Channel~2 (embedding row-norm sparsity) with an explicit ordering stage; all three rely on the same two structural properties.

\begin{definition}[Channel 1: Attention Subspace Scoring]
\label{def:channel1}
For each block $\ell$, let $\mathbf{U}^{(\ell)}$ be an orthonormal basis for the
range of $\nabla \mathbf{W}_o^{(\ell)}$, equivalently the left singular vectors
whose singular values are non-zero.
If $\nabla \mathbf{W}_o^{(\ell)}=\mathbf{0}$, this basis has no columns and the
projection below is the zero projection.
The subspace residual score of vocabulary token $v$ is:
\[
  r_v^{(\ell)} = \frac{\|\mathbf{e}_v - \mathbf{U}^{(\ell)}{\mathbf{U}^{(\ell)}}^\top \mathbf{e}_v\|}{\|\mathbf{e}_v\|},
\]
where $\mathbf{e}_v \in \mathbb{R}^d$ is the embedding vector for token $v$.
Tokens with small residual (i.e.\ whose embedding lies in the attention output gradient subspace) are candidates for the input set.
\end{definition}

\begin{definition}[Channel 2: Embedding Row-Norm Scoring]
\label{def:channel2}
Consider a single input sequence $s = (t_1, \ldots, t_T)$ (batch size $B=1$).
Let $\mathcal{S}(s)=\{v\in\mathcal{V}: v \text{ appears in } s\}$ be the active
vocabulary set.
The embedding layer gradient $\nabla \wte \in \mathbb{R}^{V \times d}$ has the structural property that row $v$ is non-zero \emph{if and only if} $v\in\mathcal{S}(s)$.
The row-norm score is:
\[
  n_v = \|\nabla \wte[v, :]\|_2.
\]
The top-$|\mathcal{S}(s)|$ tokens by row-norm score form the recovered token set
when the attacker knows the number of distinct active rows.
This signal is a binary indicator: $n_v > 0 \iff v \in \mathcal{S}(s)$.
\end{definition}

\begin{revaddblock}
\begin{remark}[Practical scope of Definition~\ref{def:channel2}]
Definition~\ref{def:channel2} assumes an unmasked, non-cancelling input-lookup
gradient. Loss structure, masking, special-token handling, or finite precision can instead
make an active row zero. Algorithm~\ref{alg:aegis} uses a relative-norm rule
and still floods any excluded row; because all $V$ rows become dense, an
occasional zero active row does not restore a clean absent/present signal.
\revtag{4}
\end{remark}
\end{revaddblock}

\begin{remark}[Batch size $B>1$]
\label{rem:batch-channel2}
For a mini-batch of $B$ sequences, $\nabla\wte[v,:]$ is non-zero iff token $v$ appears in
\emph{any} of the $B$ sequences. Channel~2 therefore recovers the \emph{union} $\bigcup_{b=1}^{B}\mathcal{S}(s_b)$ rather than per-sequence tokens; analytical attacks address
this via Channel~1 attention-subspace filtering combined with positional disambiguation. Our experiments use $B=8$, and the
uniformisation defence remains valid in this regime: the operation renders all $V$ row norms comparable
regardless of how many sequences contributed non-zero rows. The formal guarantee
(Theorem~\ref{thm:channel2}) is stated for $B=1$ for notational clarity, and extends
verbatim to $B>1$ by replacing the active-row set $\mathcal{S}(s)$ with the union set.
\end{remark}

\paragraph{Combined recovery.}
Analytical attacks combine both channels: attention subspace scores from all $L$ blocks are aggregated and intersected with the embedding row-norm ranking to form the final recovered token set.
The redundancy between channels is critical: \emph{either channel alone provides partial recovery; together they achieve near-exact results.}

\paragraph{Notation summary.}
Table~\ref{tab:notation} collects the main notation used throughout the paper.

\begin{table}[t]
\centering
\caption{\textbf{Notation summary.}}
\label{tab:notation}
\small
\begin{tabular}{@{}r@{\quad}p{0.31\columnwidth}r@{\quad}p{0.31\columnwidth}@{}}
\TabTopRule
\multicolumn{1}{c}{Symbol} & \multicolumn{1}{c}{Description} &
\multicolumn{1}{c}{Symbol} & \multicolumn{1}{c}{Description} \\
\midrule
$\mathcal{V}$, $V$ & Vocabulary, its size ($|\mathcal{V}| = V$) &
$d$ & Hidden dimension \\
$L$ & Number of transformer blocks &
$T$ & Sequence length (number of input tokens) \\
$\wte$ & Embedding matrix $\in \mathbb{R}^{V \times d}$ &
$\mathbf{W}_o^{(\ell)}$ & Attention output projection in block $\ell$ \\
$\mathbf{W}_{\mathrm{fc}}^{(\ell)}$ & MLP fully-connected weight in block $\ell$ &
$\mathbf{e}_v$ & Embedding vector for token $v$ \\
$r_v^{(\ell)}$ & Subspace residual score (Channel 1) &
$n_v$ & Embedding row-norm score (Channel 2) \\
$\lambda$ & Flood scale hyperparameter &
$\rho$ & Real-token retain ratio \\
\bottomrule
\end{tabular}
\end{table}
 \section{The \sysname Method}
\label{sec:method}
\label{sec:method_evolution}

This section presents the \sysname algorithm, its design rationale, and its implementation.

\begin{figure*}[t]
  \centering
  \includegraphics[width=0.95\linewidth]{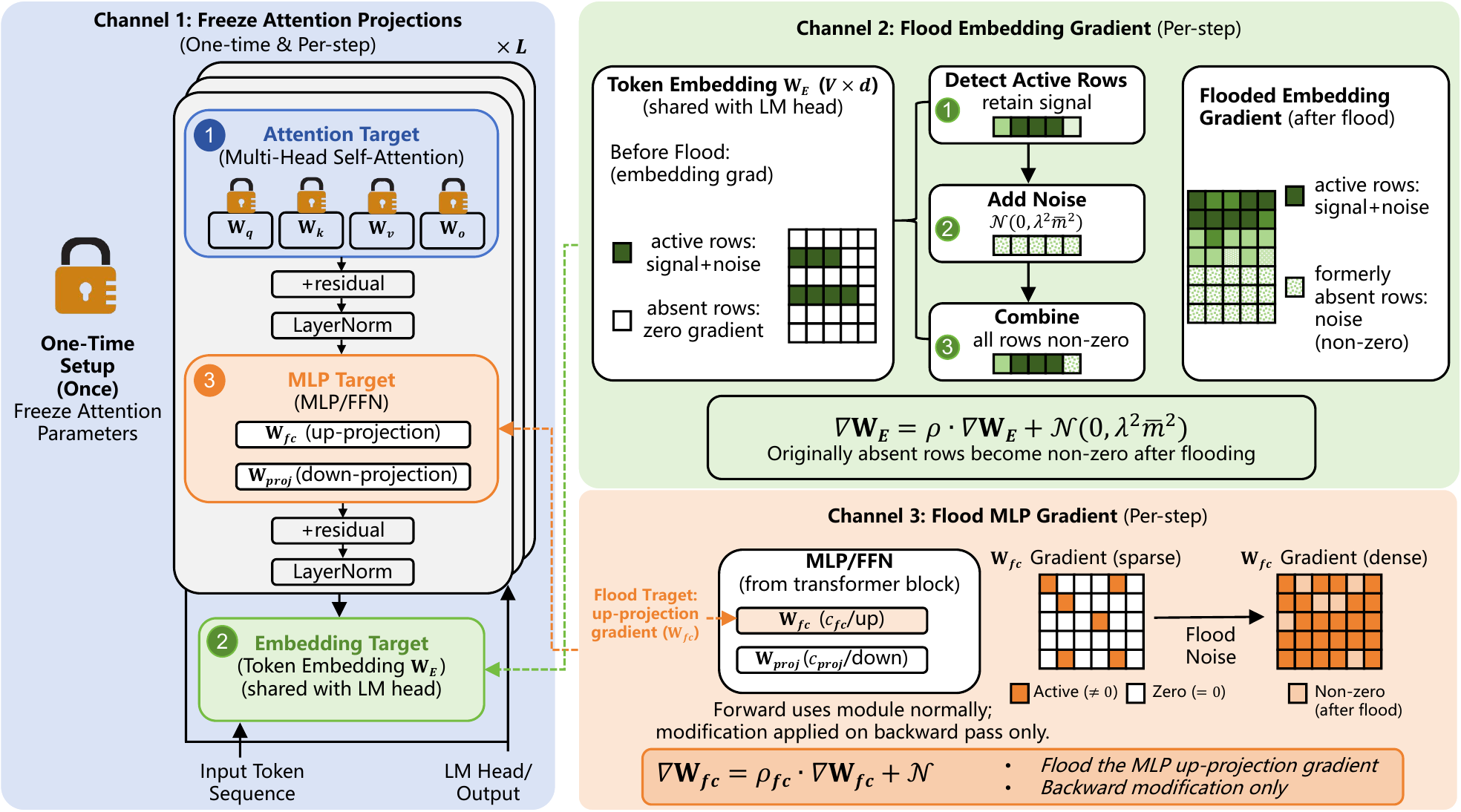}
  \caption{Overview of the \sysname mechanism. Channel~1 (the attention bottleneck) is neutrallised via freezing, while Channel~2 (embedding rows) and Channel~3 (the first MLP projection $\mathbf{W}_{\mathrm{fc}}$) are obscured through mathematically calibrated dense Gaussian uniformisation. Downstream utility is maintained via the remaining unaffected trainable parameters (e.g., $\mathbf{W}_{\mathrm{proj}}$, LayerNorm).}
  \label{fig:overview}
\end{figure*}

\subsection{Design Objectives}

\sysname targets three properties for the defended gradient $\mathcal{D}(\nabla_\theta \mathcal{L})$: (P1) privacy: no token identity information is exposed via the three analytically-exploitable channels (attention output-projection subspace, embedding row-norm sparsity, MLP expansion subspace); (P2) utility: the model converges with test perplexity comparable to the undefended baseline; and (P3) efficiency: at most $O(Vd + Ld^2)$ overhead per backward pass, which is negligible relative to the backward computation itself.

Three backward-path operations address these goals: (i) \textbf{freeze attention} (Channel~1) --- set $\nabla\mathbf{W}_{\mathrm{QKV}}^{(\ell)}=\nabla\mathbf{W}_o^{(\ell)}=\mathbf{0}$ by construction; (ii) \textbf{uniformise embedding} (Channel~2) --- replace $\nabla\wte$ with a dense, calibrated-noise gradient whose active rows are partially retained; (iii) \textbf{uniformise MLP-fc} (Channel~3) --- apply an analogous uniformisation to $\nabla\mathbf{W}_{\mathrm{fc}}^{(\ell)}$ per block, pushing the SVD signal beneath the Marchenko--Pastur noise bulk.
\revdel{Freezing attention incurs no utility cost because pretrained attention patterns transfer to downstream tasks (cf.\ LoRA~\cite{lora});}\revadd{Freezing keeps pretrained attention patterns, which helps preserve utility in our ablation (cf.\ LoRA~\cite{lora});}\revtag{1,6} the load-bearing MLP expansion uses uniformisation rather than freezing.

\subsection{Channel~1: Freeze Attention Projections}

The attention subspace channel (Definition~\ref{def:channel1}) exploits the low-rank column space of $\nabla \mathbf{W}_o^{(\ell)}$.
The simplest way to eliminate this signal is to ensure the gradient is \emph{identically zero}:

\begin{equation}
  \forall \ell \in [L],\; p \in \text{params}(\text{attn}_\ell): \quad p.requires\_grad \leftarrow False.
  \label{eq:freeze}
\end{equation}

Freezing all attention parameters ($\mathbf{W}_{\mathrm{QKV}}^{(\ell)}, \mathbf{W}_o^{(\ell)}$) across all $L$ blocks ensures that:
\begin{itemize}[leftmargin=1.2em]
  \item $\nabla \mathbf{W}_o^{(\ell)} = \mathbf{0}$ for all $\ell$, so the SVD column space is empty.
  \item The subspace residual score $r_v^{(\ell)} = 1$ for all tokens $v$, providing zero discriminative power.
\end{itemize}

\subsection{Channel~2: Embedding Gradient Uniformisation}

The embedding row-norm channel (Definition~\ref{def:channel2}) exploits the binary sparsity of $\nabla \wte$: rows corresponding to absent tokens have exactly zero gradient.
To destroy this signal, we replace the sparse gradient with a \emph{dense} version in which all $V$ rows have comparable norms:

\begin{equation}
  \tilde{\nabla}\wte[v, :] = \begin{cases}
    \rho \cdot \nabla \wte[v, :] + \mathbf{n}_v & \text{if } v \in \mathcal{S}(s) \\[3pt]
    \mathbf{n}_v & \text{otherwise}
  \end{cases}
  \label{eq:flood}
\end{equation}
where $\rho \in (0, 1]$ is the \emph{real-token retain ratio} (fraction of original gradient preserved for real tokens), and $\mathbf{n}_v \sim \mathcal{N}(\mathbf{0}, \sigma^2 \mathbf{I}_d)$ with $\sigma = \lambda \cdot \bar{m}$, where $\lambda$ is the \emph{uniformisation scale} and $\bar{m}$ is the mean gradient norm of the active (real-token) rows:
\begin{equation}
  \bar{m} = \frac{1}{|\mathcal{S}(s)|}\sum_{v \in \mathcal{S}(s)} \|\nabla \wte[v, :]\|_2.
  \label{eq:mean_norm}
\end{equation}

\paragraph{Weight tying and the LM head.}
All eleven models in our evaluation tie the LM head to $\wte$, so the backward pass accumulates a single gradient tensor used for both the embedding update and the output projection.
\sysname uniformises this shared tensor, simultaneously masking both paths.
For models with an \emph{untied} LM head, the LM-head gradient
$\nabla\mathbf{W}_{\mathrm{LM}}$ is a separate matrix whose row norms can expose
the target-token support.  We therefore apply the same row-wise
uniformisation in Eq.~\ref{eq:flood} to $\nabla\mathbf{W}_{\mathrm{LM}}$ whenever
the output embedding is not tied to $\wte$; tied heads are already covered
because they share the same gradient tensor as the input embedding.

\subsection{Channel~3: MLP Expansion Gradient Uniformisation}
\label{sec:method_channel3}

While Channels~1 and~2 are the two channels exploited by \emph{non-adaptive}
\textsc{DAGER}, Section~\ref{sec:method_channel3} identified a third
analytical channel: the MLP expansion gradient
$\nabla\mathbf{W}_{\mathrm{fc}}^{(\ell)}\in\mathbb{R}^{d\times 4d}$, whose column span
is isomorphic to that of $\nabla\mathbf{W}_o^{(\ell)}$ and therefore recoverable by
the same SVD span-estimation pipeline.
An adaptive adversary aware that Channels~1 and~2 are closed can target Channel~3
directly.

Since $\mathbf{W}_{\mathrm{fc}}^{(\ell)}$ is load-bearing for downstream-task utility,
freezing it (the Channel~1 strategy) is not viable; we therefore apply a
uniformisation defence analogous to Channel~2.
For each block $\ell\in[L]$ we replace the exported MLP-fc gradient with
\begin{equation}
  \tilde{\nabla}\mathbf{W}_{\mathrm{fc}}^{(\ell)}
  \;=\; \rho_{\mathrm{fc}} \cdot \nabla\mathbf{W}_{\mathrm{fc}}^{(\ell)}
       \;+\; \mathbf{N}^{(\ell)},
  \label{eq:flood_fc}
\end{equation}
where $\mathbf{N}^{(\ell)}\in\mathbb{R}^{d\times 4d}$ has i.i.d.\ entries
$N_{ij}\sim\mathcal{N}(0,\,\sigma_{\mathrm{fc}}^2)$ with
$\sigma_{\mathrm{fc}} = \lambda_{\mathrm{fc}}\cdot\bar{m}_{\mathrm{fc}}^{(\ell)}$
and $\bar{m}_{\mathrm{fc}}^{(\ell)}$ the mean absolute value of the entries of
$\nabla\mathbf{W}_{\mathrm{fc}}^{(\ell)}$.
The hyperparameters $(\rho_{\mathrm{fc}},\lambda_{\mathrm{fc}})$ play the same role
as $(\rho,\lambda)$ on the embedding: $\rho_{\mathrm{fc}}$ retains a scaled
fraction of the true MLP gradient for the \emph{local} optimiser step, while
$\lambda_{\mathrm{fc}}$ sets the noise floor that the exported gradient carries
to the server.
Unless otherwise noted we use the privacy-hardening endpoint
$(\rho_{\mathrm{fc}},\lambda_{\mathrm{fc}})=(0, 8.0)$ for exported MLP
gradients in adaptive-threat experiments.  The implementation applies the same
matrix-gradient uniformisation to all two-dimensional MLP weights, including
the projection matrix $\mathbf{W}_{\mathrm{proj}}^{(\ell)}$, so an adaptive
attacker cannot simply retarget from the expansion matrix to the contraction
matrix.

\paragraph{Random-matrix rationale for Eq.~\ref{eq:flood_fc}.}
An adversary performing rank-$T$ SVD on $\tilde{\nabla}\mathbf{W}_{\mathrm{fc}}^{(\ell)}$
sees a matrix whose signal component is scaled by $\rho_{\mathrm{fc}}$ and whose
noise component is full-rank with variance $\sigma_{\mathrm{fc}}^2$ per entry.
The Marchenko--Pastur law predicts that the singular-value bulk of a $d\times 4d$ i.i.d.\
Gaussian matrix extends up to $\sigma_{\mathrm{fc}}(\sqrt{d}+\sqrt{4d})
=3\sigma_{\mathrm{fc}}\sqrt{d}$, which at
$\lambda_{\mathrm{fc}}=1,\,d=768$ is $\approx 83\,\bar{m}_{\mathrm{fc}}$;
the signal singular values are bounded by $\rho_{\mathrm{fc}}\cdot\sigma_{\max}(\nabla\mathbf{W}_{\mathrm{fc}})$.
Thus, when the scaled signal singular values lie below this noise edge, the
SVD stage has no stable signal spike to lock onto.
This is a design criterion and empirical prediction rather than a theorem in
this paper; a formal robust-PCA lower bound is left to future work
(Limitation~\ref{lim:adaptive}).
The empirical adaptive-attack results in Section~\ref{sec:experiments} confirm
this prediction: running adaptive \textsc{DAGER} on $\nabla\mathbf{W}_{\mathrm{fc}}$
after \sysname uniformisation yields ROUGE-1 in the same near-zero regime as the
non-adaptive attack.

\subsection{Full Algorithm}

\sysname is applied client-side as a backward-path hook with no changes to the model architecture or training loop.
Before training begins, all attention projection parameters are frozen (Eq.~\ref{eq:freeze}), so their gradients are identically zero throughout fine-tuning.
After each backward pass, the embedding gradient $\nabla\wte$ is replaced with the uniformised version $\tilde{\nabla}\wte$ (Eq.~\ref{eq:flood}), and for each block $\ell$ the MLP-fc gradient $\nabla\mathbf{W}_{\mathrm{fc}}^{(\ell)}$ is replaced with $\tilde{\nabla}\mathbf{W}_{\mathrm{fc}}^{(\ell)}$ (Eq.~\ref{eq:flood_fc}).
The local optimiser then steps on the modified gradients; the server receives only the uniformised export, with attention gradient slots set to zero.

Crucially, the intervention is applied \emph{after} backpropagation and \emph{before} the optimiser step, so the learning signal for the retained channels ($\rho\cdot\nabla\wte$ and $\rho_{\mathrm{fc}}\cdot\nabla\mathbf{W}_{\mathrm{fc}}^{(\ell)}$) propagates to the local weights while the exported gradient exposes only calibrated Gaussian noise to the server.
\revadd{Algorithm~\ref{alg:aegis} summarises the complete client-side
procedure.}\revtag{4}

\begin{algorithm}[htbp]
\DontPrintSemicolon
\caption{\sysname: Triple-Channel Gradient Masking}
\label{alg:aegis}

\KwIn{Model $f_\theta$ with $L$ blocks; hyperparameters $\rho,\lambda,\rho_{\mathrm{fc}},\lambda_{\mathrm{fc}}$; training set $\mathcal{D}$}
\KwOut{Privacy-preserving fine-tuned model; masked gradient $\tilde{\nabla}\theta$ for server}

\BlankLine
\For{$\ell \leftarrow 1$ \KwTo $L$}{
  $\mathbf{W}_{\mathrm{QKV}}^{(\ell)}.\texttt{requires\_grad} \leftarrow \texttt{False}$\;
  $\mathbf{W}_{o}^{(\ell)}.\texttt{requires\_grad} \leftarrow \texttt{False}$
}

\BlankLine
\For{\textup{each mini-batch} $B \subseteq \mathcal{D}$}{
  Compute $\mathcal{L}(\theta;\,B)$ via forward pass\;
  Compute $\nabla_\theta\mathcal{L}$ via backward pass

  \BlankLine
  \tcp{Channel~2: embedding gradient uniformisation}
  $\mathcal{S} \leftarrow \bigl\{v \in \mathcal{V} : \|\nabla\wte[v,:]\|_2 > 0.01 \cdot \max_{u}\|\nabla\wte[u,:]\|_2\bigr\}$
  $\bar{m} \leftarrow \dfrac{1}{|\mathcal{S}|}\displaystyle\sum_{v\in\mathcal{S}}\|\nabla\wte[v,:]\|_2$\;
  $\sigma \leftarrow \lambda\cdot\bar{m}$\;
  \For{$v \leftarrow 1$ \KwTo $|\mathcal{V}|$}{
    Sample $\mathbf{n}_v \sim \mathcal{N}(\mathbf{0},\,\sigma^2\mathbf{I}_d)$\;
    \eIf{$v \in \mathcal{S}$}{
      $\tilde{\nabla}\wte[v,:] \leftarrow \rho\cdot\nabla\wte[v,:] + \mathbf{n}_v$\;
    }{
      $\tilde{\nabla}\wte[v,:] \leftarrow \mathbf{n}_v$\;
    }
  }

  \BlankLine
  \tcp{Channel~3: MLP-fc gradient uniformisation (per block)}
  \For{$\ell \leftarrow 1$ \KwTo $L$}{
    $\bar{m}_{\mathrm{fc}}^{(\ell)} \leftarrow \mathrm{mean}\!\left(|\nabla\mathbf{W}_{\mathrm{fc}}^{(\ell)}|\right)$\;
    $\sigma_{\mathrm{fc}} \leftarrow \lambda_{\mathrm{fc}}\cdot\bar{m}_{\mathrm{fc}}^{(\ell)}$\;
    Sample $\mathbf{N}^{(\ell)} \sim \mathcal{N}\!\left(\mathbf{0},\,\sigma_{\mathrm{fc}}^2\right)^{d\times 4d}$\;
    $\tilde{\nabla}\mathbf{W}_{\mathrm{fc}}^{(\ell)} \leftarrow \rho_{\mathrm{fc}}\cdot\nabla\mathbf{W}_{\mathrm{fc}}^{(\ell)} + \mathbf{N}^{(\ell)}$\;
  }

  \BlankLine
  Perform local optimiser step using $\tilde{\nabla}_\theta\mathcal{L}$ (post-mask gradient)\;
  Export $\tilde{\nabla}_\theta\mathcal{L}$ to server
}
\end{algorithm}
 \section{Theoretical Analysis}
\label{sec:analysis}

This section states the exact parts of the \sysname security argument: Channel~1
has zero exported gradient under attention freezing, Channel~2 loses its
deterministic support signal under embedding flooding, and Channel~3 receives an
exact MLP-fc flooding decomposition with a Frobenius-moment bound.
Claims that depend on high-dimensional order statistics or random-matrix
denoising are treated as design rationale and empirical predictions, not as
theorem statements.
The corresponding proofs are collected in Appendix~\ref{app:proofs}.
We additionally provide Lean~4~\cite{demoura2021lean4} formalizations of the theorem and proposition
cores in the supplementary material.

\subsection{Channel~1: Attention Freezing Guarantee}

\begin{theorem}[Channel~1 Elimination --- Exact]
\label{thm:channel1}
Let $f_\theta$ be a GPT-2-class model with $L$ transformer blocks.
Let $\widehat{\nabla}$ denote the gradient tensor exported by the automatic
differentiation engine after applying the frozen-parameter rule, equivalently
the zero extension of the gradient on the reduced trainable parameter space.
Assume every vocabulary embedding is non-zero.
If all attention parameters $\{\mathbf{W}_{\mathrm{QKV}}^{(\ell)}, \mathbf{W}_o^{(\ell)}\}_{\ell=1}^L$ are frozen (i.e.\ $\texttt{requires\_grad} = \texttt{False}$), then for any training input $s = (t_1, \ldots, t_T)$ and any differentiable loss function $\mathcal{L}$:
\begin{equation}
  \widehat{\nabla}_{\mathbf{W}_o^{(\ell)}} \mathcal{L} = \mathbf{0} \in \mathbb{R}^{d \times d}, \quad \forall \ell \in [L].
\end{equation}
Consequently, under the non-zero-singular-subspace convention in
Definition~\ref{def:channel1}, the subspace residual score satisfies:
\begin{equation}
  r_v^{(\ell)} = 1, \quad \forall v \in \mathcal{V}, \; \forall \ell \in [L].
\end{equation}
The Channel~1 scoring provides zero discriminative power: no token can be distinguished from any other.
\end{theorem}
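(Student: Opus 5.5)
The proof splits into two steps, matching the two displayed conclusions of Theorem~\ref{thm:channel1}. The first step is essentially definitional. The exported tensor $\widehat{\nabla}$ is the zero extension of the gradient on the reduced trainable parameter space. So we write the parameter vector as $\theta = (\theta_{\mathrm{tr}}, \theta_{\mathrm{fr}})$, where $\theta_{\mathrm{fr}}$ collects $\{\mathbf{W}_{\mathrm{QKV}}^{(\ell)}, \mathbf{W}_o^{(\ell)}\}_{\ell=1}^L$. The loss is then treated as a function of $\theta_{\mathrm{tr}}$ alone, with $\theta_{\mathrm{fr}}$ held constant. We define $\widehat{\nabla}\mathcal{L} = (\nabla_{\theta_{\mathrm{tr}}}\mathcal{L}, \mathbf{0})$. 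Every block $\mathbf{W}_o^{(\ell)}$ lies in $\theta_{\mathrm{fr}}$, so the slot $\widehat{\nabla}_{\mathbf{W}_o^{(\ell)}}\mathcal{L}$ is the zero matrix in $\mathbb{R}^{d\times d}$. This holds independently of the input $s$, the loss $\mathcal{L}$, and the layer index $\ell$.

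Two remarks keep this step honest. First, the true partial derivative $\partial\mathcal{L}/\partial\mathbf{W}_o^{(\ell)}$ is generally nonzero. The claim concerns the exported tensor, which is what the adversary receives, and the statement is phrased that way deliberately. Second, freezing does not cut the backward flow through the attention blocks to earlier trainable parameters. That flow is irrelevant here, since only the $\mathbf{W}_o$ slots are at issue.

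The second step applies Definition~\ref{def:channel1}. The range of the zero matrix is $\{0\}$, and it has no non-zero singular values. By the stated convention, $\mathbf{U}^{(\ell)}$ therefore has zero columns, and $\mathbf{U}^{(\ell)}{\mathbf{U}^{(\ell)}}^\top$ is the zero projection on $\mathbb{R}^d$. Hence $\|\mathbf{e}_v - \mathbf{U}^{(\ell)}{\mathbf{U}^{(\ell)}}^\top\mathbf{e}_v\| = \|\mathbf{e}_v\|$ for every token $v$. The assumption $\mathbf{e}_v\neq 0$ makes the ratio well-defined, and it equals $1$. Since $r_v^{(\ell)}$ is the same constant for all $v\in\mathcal{V}$ and all $\ell\in[L]$, any ranking or thresholding rule built from these scores is invariant under permutations of the vocabulary. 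This gives the ``zero discriminative power'' conclusion: the score vector is independent of $s$, so it carries no information about $\mathcal{S}(s)$.

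I do not expect a genuine mathematical obstacle. The only delicate point is the convention issue in the second step. A naive SVD of a $d\times d$ zero matrix still returns a full orthonormal set of left singular vectors, and taking all of them would give $r_v^{(\ell)}=0$ rather than $1$. The proof must therefore invoke explicitly the non-zero-singular-value convention of Definition~\ref{def:channel1}. Under either convention the score is still constant in $v$, which I would note as a robustness remark.
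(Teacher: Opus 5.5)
Your proposal is correct and follows the paper's proof essentially step for step: the exported $\mathbf{W}_o^{(\ell)}$ slot is zero by the zero-extension convention, so the non-zero-singular-value basis $\mathbf{U}^{(\ell)}$ has no columns, the projection vanishes, and $r_v^{(\ell)}=\|\mathbf{e}_v\|/\|\mathbf{e}_v\|=1$ for all $v$ and $\ell$. Your two extra remarks are accurate points the paper does not spell out: a naive full SVD of the zero matrix would give $r_v^{(\ell)}=0$, which is still constant in $v$, and freezing does not cut backward flow to earlier trainable parameters.
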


\begin{remark}[Forward pass is unaffected]
\label{rem:forward}
Freezing attention parameters sets their gradients to zero but does not remove them from the computational graph.
The forward pass computes $\text{Attn}(\mathbf{Q}, \mathbf{K}, \mathbf{V})$ using the pretrained weights normally; only the backward pass skips gradient accumulation for these parameters.
The attention mechanism continues to provide contextual representations to downstream MLP layers, maintaining representational capacity.
\end{remark}

\subsection{Channel~2: Embedding Flooding Guarantee}

\begin{assumption}[Bounded Active-Row Norm Deviation]
\label{asm:norm-dev}
For a sequence $s$, let $\mathcal{S}(s)=\{v\in\mathcal{V}:v\text{ appears in }s\}$
and assume $|\mathcal{S}(s)|>0$.
There exists a constant $C \geq 1$ such that for all active rows
$v \in \mathcal{S}(s)$:
\[
  \|\nabla \wte[v,:]\|_2 \leq C \cdot \bar{m},
\]
where $\bar{m}$ is the mean active-row norm (Eq.~\ref{eq:mean_norm}).
\end{assumption}

\begin{remark}
Assumption~\ref{asm:norm-dev} holds with $C=1$ when all active-row norms equal the mean;
the bound degrades through the ratio $\rho^2C^2/(d\lambda^2)$, which reaches order one
only near $C\approx\lambda\sqrt{d}/\rho\approx92$ at $\rho{=}0.3,\lambda{=}1,d{=}768$.
The R-1\,$\leq$\,0.005 values in Tables~\ref{tab:dager_ppl_exp09}--\ref{tab:dager_ppl_exp09_enc}
provide indirect evidence that all evaluated models lie well within the small-inflation
regime; a direct measurement of $C$ (logging $\max\|g_v\|/\bar{m}$ during training) is
a natural follow-up experiment.
\end{remark}

\begin{theorem}[Channel~2 Sparsity Elimination and Norm-Moment Bound]
\label{thm:channel2}
Let $\nabla \wte \in \mathbb{R}^{V \times d}$ be the embedding gradient for a sequence
$s$ with nonempty active set $\mathcal{S}(s)$, and suppose
$\|\nabla \wte[v,:]\|_2 > 0$ iff $v \in \mathcal{S}(s)$.
Let $\lambda>0$, $\rho\geq0$, $\bar{m}$ be defined by Eq.~\ref{eq:mean_norm}, and
let the flood vectors be independent Gaussians
$\mathbf{n}_v\sim\mathcal{N}(\mathbf{0},\sigma^2\mathbf{I}_d)$ with
$\sigma=\lambda\bar{m}$.
After \sysname flooding (Eq.~\ref{eq:flood}):
\begin{enumerate}[leftmargin=1.5em]
  \item \textbf{All rows non-zero:} $\|\tilde{\nabla}\wte[v,:]\|_2 > 0$ for all $v \in \mathcal{V}$ with probability~1.
  \item \textbf{Exact second moments:} For non-token rows ($v \notin \mathcal{S}(s)$):
  \[
    \mathbb{E}[\|\tilde{\nabla}\wte[v,:]\|_2^2] = d \cdot \lambda^2 \cdot \bar{m}^2.
  \]
  For active rows ($v \in \mathcal{S}(s)$):
  \[
    \mathbb{E}[\|\tilde{\nabla}\wte[v,:]\|_2^2]
    = \rho^2\|\nabla\wte[v,:]\|_2^2 + d\lambda^2\bar{m}^2.
  \]
  \item \textbf{Bounded relative inflation:} Under Assumption~\ref{asm:norm-dev},
  every active row satisfies
  \begin{equation}
    0 \leq
    \frac{
      \mathbb{E}[\|\tilde{\nabla}\wte[v,:]\|_2^2]
      - d\lambda^2\bar{m}^2
    }{
      d\lambda^2\bar{m}^2
    }
    \leq \frac{\rho^2 C^2}{d\lambda^2}.
    \label{eq:relative_excess}
  \end{equation}
\end{enumerate}
Thus the exact zero-vs-nonzero Channel~2 indicator is eliminated; any remaining
row-norm signal is only the bounded moment shift in Eq.~\ref{eq:relative_excess}.
\end{theorem}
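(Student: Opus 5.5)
The plan is to condition on the clean gradient $\nabla\wte$, treating it and hence $\bar{m}$ as deterministic, so that the only randomness is the independent flood vectors $\mathbf{n}_v\sim\mathcal{N}(\mathbf{0},\sigma^2\mathbf{I}_d)$. Nonemptiness of $\mathcal{S}(s)$, together with $\|\nabla\wte[v,:]\|_2>0$ on $\mathcal{S}(s)$, gives $\bar{m}>0$. Since $\lambda>0$, this yields $\sigma=\lambda\bar{m}>0$, so each $\mathbf{n}_v$ has a nondegenerate Gaussian law with a Lebesgue density on $\mathbb{R}^d$. The three items then follow in order from this one observation.

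For item~1, each row of Eq.~\ref{eq:flood} has the form $\mathbf{a}_v+\mathbf{n}_v$, with deterministic shift $\mathbf{a}_v=\rho\nabla\wte[v,:]$ for $v\in\mathcal{S}(s)$ and $\mathbf{a}_v=\mathbf{0}$ otherwise. This is a nondegenerate Gaussian, so it is absolutely continuous. Hence $\Pr[\mathbf{a}_v+\mathbf{n}_v=\mathbf{0}]=0$, because a single point has Lebesgue measure zero. A union bound over the finitely many $V$ rows then gives that all rows are non-zero with probability~1. No independence across rows is needed here, only finiteness of $\mathcal{V}$.

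For item~2, expand
\[
\|\mathbf{a}_v+\mathbf{n}_v\|_2^2=\|\mathbf{a}_v\|_2^2+2\mathbf{a}_v^\top\mathbf{n}_v+\|\mathbf{n}_v\|_2^2 .
\]
The cross term has zero mean since $\mathbb{E}[\mathbf{n}_v]=\mathbf{0}$. The noise term satisfies $\mathbb{E}\|\mathbf{n}_v\|_2^2=\operatorname{tr}(\sigma^2\mathbf{I}_d)=d\sigma^2=d\lambda^2\bar{m}^2$. Substituting $\mathbf{a}_v=\mathbf{0}$ gives the non-token formula, and substituting $\mathbf{a}_v=\rho\nabla\wte[v,:]$ gives $\rho^2\|\nabla\wte[v,:]\|_2^2+d\lambda^2\bar{m}^2$ for active rows.

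For item~3, subtract $d\lambda^2\bar{m}^2$ from the active-row moment. The excess is $\rho^2\|\nabla\wte[v,:]\|_2^2\ge 0$, which gives the lower bound. For the upper bound, divide by the positive quantity $d\lambda^2\bar{m}^2$ and apply Assumption~\ref{asm:norm-dev}, which gives $\|\nabla\wte[v,:]\|_2^2\le C^2\bar{m}^2$. Hence the excess ratio is at most $\rho^2C^2\bar{m}^2/(d\lambda^2\bar{m}^2)=\rho^2C^2/(d\lambda^2)$, which is Eq.~\ref{eq:relative_excess}. The closing sentence of the theorem is then a restatement of items~1 and~3.

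The computations are routine. The main point requiring care is the conditioning step: $\sigma$ is data-dependent through $\bar{m}$, so the expectations must be read as conditional on $\nabla\wte$. I will also state explicitly that $\bar{m}>0$ follows from the support hypothesis, since this is needed both for $\sigma>0$ in item~1 and for the division in item~3. The theorem uses the true support $\mathcal{S}(s)$ rather than the thresholded set of Algorithm~\ref{alg:aegis}, so I will remark that the proof is for Eq.~\ref{eq:flood} as stated.
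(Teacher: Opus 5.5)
Your proposal is correct and follows essentially the same route as the paper's proof. Both arguments obtain $\bar{m}>0$ from the nonempty support, and for item~1 both use a nondegenerate-Gaussian point-probability argument plus a union bound over the finite vocabulary. Item~2 comes from expanding the squared norm with a zero-mean cross term, and item~3 from dividing the excess $\rho^2\|\nabla\wte[v,:]\|_2^2\le\rho^2C^2\bar{m}^2$ by $d\lambda^2\bar{m}^2$.

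The differences are cosmetic. You compute $\mathbb{E}\|\mathbf{n}_v\|_2^2$ via the trace rather than the $\chi^2(d)$ law, and you state explicitly the conditioning on $\nabla\wte$ that the paper leaves implicit.
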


\subsection{Channel~3: MLP-fc Flooding Guarantee}

\begin{theorem}[Channel~3 MLP-fc Flooding Moment Bound]
\label{thm:channel3}
For a transformer block $\ell$, let
$\mathbf{G}^{(\ell)}=\nabla\mathbf{W}_{\mathrm{fc}}^{(\ell)}
\in\mathbb{R}^{d\times4d}$ be the MLP expansion gradient and let
$\bar{m}_{\mathrm{fc}}^{(\ell)}>0$ be the calibration scale in
Eq.~\ref{eq:flood_fc}.
Let $\lambda_{\mathrm{fc}}>0$, $\rho_{\mathrm{fc}}\in[0,1]$, and let
$\mathbf{N}^{(\ell)}$ have independent entries
$N_{ij}^{(\ell)}\sim\mathcal{N}(0,\sigma_{\mathrm{fc}}^2)$ with
$\sigma_{\mathrm{fc}}=\lambda_{\mathrm{fc}}\bar{m}_{\mathrm{fc}}^{(\ell)}$.
After MLP-fc flooding,
\[
  \widetilde{\mathbf{G}}^{(\ell)}
  = \rho_{\mathrm{fc}}\mathbf{G}^{(\ell)}+\mathbf{N}^{(\ell)}.
\]
Conditional on the batch:
\begin{enumerate}[leftmargin=1.5em]
  \item \textbf{Biased-gradient decomposition:}
  with $\mathbf{B}^{(\ell)}=(\rho_{\mathrm{fc}}-1)\mathbf{G}^{(\ell)}$,
  \[
    \widetilde{\mathbf{G}}^{(\ell)}
    = \mathbf{G}^{(\ell)}+\mathbf{B}^{(\ell)}+\mathbf{N}^{(\ell)},
    \qquad
    \mathbb{E}[\widetilde{\mathbf{G}}^{(\ell)}-\mathbf{G}^{(\ell)}]
    = \mathbf{B}^{(\ell)},
  \]
  and
  $\|\mathbf{B}^{(\ell)}\|_F
    =(1-\rho_{\mathrm{fc}})\|\mathbf{G}^{(\ell)}\|_F$.
  \item \textbf{Exact Frobenius second moment:}
  \[
    \mathbb{E}\!\left[\|\widetilde{\mathbf{G}}^{(\ell)}\|_F^2\right]
    = \rho_{\mathrm{fc}}^2\|\mathbf{G}^{(\ell)}\|_F^2
      + 4d^2\lambda_{\mathrm{fc}}^2
        \bigl(\bar{m}_{\mathrm{fc}}^{(\ell)}\bigr)^2 .
  \]
  \item \textbf{Bounded relative inflation:}
  if $\|\mathbf{G}^{(\ell)}\|_F
      \leq C_{\mathrm{fc}}\bar{m}_{\mathrm{fc}}^{(\ell)}$ for some
  $C_{\mathrm{fc}}\geq0$, then
  \begin{equation}
    0 \leq
    \frac{
      \mathbb{E}[\|\widetilde{\mathbf{G}}^{(\ell)}\|_F^2]
      - 4d^2\lambda_{\mathrm{fc}}^2
        (\bar{m}_{\mathrm{fc}}^{(\ell)})^2
    }{
      4d^2\lambda_{\mathrm{fc}}^2
        (\bar{m}_{\mathrm{fc}}^{(\ell)})^2
    }
    \leq
    \frac{\rho_{\mathrm{fc}}^2C_{\mathrm{fc}}^2}
         {4d^2\lambda_{\mathrm{fc}}^2}.
    \label{eq:channel3_relative_excess}
  \end{equation}
\end{enumerate}
\end{theorem}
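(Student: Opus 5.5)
The proof proceeds entirely conditional on the batch, so $\mathbf{G}^{(\ell)}$ and $\bar m_{\mathrm{fc}}^{(\ell)}$ are fixed (deterministic) and the only randomness is the Gaussian matrix $\mathbf{N}^{(\ell)}$. Fix $\ell$ and drop the superscript. Part~1 is pure algebra. Write $\rho_{\mathrm{fc}}\mathbf{G}=\mathbf{G}+(\rho_{\mathrm{fc}}-1)\mathbf{G}=\mathbf{G}+\mathbf{B}$, which gives the decomposition.

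For the bias claim:
\begin{itemize}
\item Linearity of expectation together with $\mathbb{E}[\mathbf{N}]=\mathbf{0}$ (entrywise zero mean) gives $\mathbb{E}[\widetilde{\mathbf{G}}-\mathbf{G}]=\mathbf{B}$.
\item Homogeneity of the Frobenius norm gives $\|\mathbf{B}\|_F=|\rho_{\mathrm{fc}}-1|\,\|\mathbf{G}\|_F$.
\item The hypothesis $\rho_{\mathrm{fc}}\in[0,1]$ turns $|\rho_{\mathrm{fc}}-1|$ into $1-\rho_{\mathrm{fc}}$.
\end{itemize}

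For Part~2, expand entrywise: $\|\widetilde{\mathbf{G}}\|_F^2=\sum_{i,j}(\rho_{\mathrm{fc}}G_{ij}+N_{ij})^2$. Taking expectations term by term, the squared signal term gives $\rho_{\mathrm{fc}}^2\|\mathbf{G}\|_F^2$. Each cross term $2\rho_{\mathrm{fc}}G_{ij}\mathbb{E}[N_{ij}]$ vanishes. Each noise term contributes $\mathbb{E}[N_{ij}^2]=\sigma_{\mathrm{fc}}^2$. There are $d\cdot 4d=4d^2$ entries, so the noise total is $4d^2\lambda_{\mathrm{fc}}^2(\bar m_{\mathrm{fc}})^2$ after substituting $\sigma_{\mathrm{fc}}=\lambda_{\mathrm{fc}}\bar m_{\mathrm{fc}}$. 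Only first and second moments of the entries are used, so the independence hypothesis is not strictly needed beyond zero mean and the stated variance. This mirrors the per-row computation in part~2 of Theorem~\ref{thm:channel2}, now summed over all $4d^2$ entries instead of $d$ coordinates.

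For Part~3, subtract the noise baseline from the Part~2 identity and divide by $4d^2\lambda_{\mathrm{fc}}^2\bar m_{\mathrm{fc}}^2$. This denominator is strictly positive because $\lambda_{\mathrm{fc}}>0$ and $\bar m_{\mathrm{fc}}>0$. The ratio is then exactly $\rho_{\mathrm{fc}}^2\|\mathbf{G}\|_F^2/(4d^2\lambda_{\mathrm{fc}}^2\bar m_{\mathrm{fc}}^2)$.
\begin{itemize}
\item Nonnegativity is immediate, since the ratio is a square over a positive quantity.
\item For the upper bound, insert $\|\mathbf{G}\|_F^2\le C_{\mathrm{fc}}^2\bar m_{\mathrm{fc}}^2$, which follows by squaring the nonnegative hypothesis. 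The factor $\bar m_{\mathrm{fc}}^2$ cancels and leaves $\rho_{\mathrm{fc}}^2C_{\mathrm{fc}}^2/(4d^2\lambda_{\mathrm{fc}}^2)$.
\end{itemize}

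There is no genuine obstacle. The only points needing care are the bookkeeping ones:
\begin{itemize}
\item The expectation must be taken conditional on the batch, so that $\bar m_{\mathrm{fc}}$, which is itself a function of $\mathbf{G}$, is treated as a constant rather than a random calibration.
\item The entry count $4d^2$ must match the stated shape $d\times 4d$.
\item The sign of $\rho_{\mathrm{fc}}-1$ must be handled using the range restriction $\rho_{\mathrm{fc}}\in[0,1]$.
\end{itemize}
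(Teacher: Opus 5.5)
Your proposal is correct and follows the paper's proof: the same algebraic decomposition, zero-mean cancellation of the cross term, the $4d^2\sigma_{\mathrm{fc}}^2$ noise count, and division by the strictly positive noise-only moment. Your entrywise expansion is just the coordinate form of the paper's Frobenius inner-product expansion. Your remark that only the first and second moments of the noise entries are needed, not independence, is accurate.
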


\section{Empirical Evaluation}
\label{sec:experiments}

We scale the evaluation of \sysname to eleven language models across four architectural families
(three decoder-only groups plus the BERT encoder pair),
six benchmark datasets, and an orders-of-magnitude range of parameter counts (110M -- 13B),
using the official DAGER implementation~\cite{dager}.

\subsection{Experimental Setup}
\label{sec:setup}

\paragraph{Models.}
We evaluate \textbf{eleven} models in total: nine decoder-only causal LMs spanning three families,
and two masked encoders from a fourth family, the \textbf{BERT family} (BERT-base and BERT-large),
expanding the grid beyond nine causal-decoder checkpoints alone.
\begin{itemize}[leftmargin=1.4em, itemsep=0pt, topsep=2pt]
  \item \textbf{GPT-2 family}~\cite{gpt2}: GPT-2 (124M), GPT-2-medium (345M),
        GPT-2-large (774M), GPT-2-XL (1.5B);
  \item \textbf{LLaMA family}~\cite{llama2,llama3}: LLaMA-2-7B, LLaMA-2-13B$^\dagger$,
        LLaMA-3.1-8B;
  \item \textbf{Gemma family}~\cite{gemma2}: Gemma-2-2B, Gemma-2-9B;
  \item \textbf{BERT family}~\cite{bert}: BERT-base (110M) and BERT-large (340M),
        attacked via DAGER's encoder span-estimation path~\cite{dager}.
\end{itemize}

\paragraph{Datasets.}
Each model is fine-tuned on six benchmarks covering diverse domains:
Rotten Tomatoes~\cite{rottenTomatoes} (film sentiment),
Emotion~\cite{emotionDataset} (multi-class affect),
Financial PhraseBank~\cite{finansBenchmark} (financial sentiment),
WikiText-2~\cite{wikitext2} (general language modelling),
DialogSum~\cite{dialogsum} (dialogue summarisation),
and CNN/DailyMail~\cite{cnndm} (news summarisation).
Each dataset is used with up to 100{,}000 randomly sampled training examples per run.

\paragraph{Attack configuration.}
We use the \emph{official} DAGER codebase (Petrov et al., NeurIPS 2024~\cite{dager}),
which performs SVD span-estimation (L1) followed by beam-search decoding (L2).
Under \sysname, attention projections are frozen so that DAGER's default
Channel~1 SVD path sees $\nabla\mathbf{W}_o = \mathbf{0}$ and collapses to
embedding-gradient recovery on the uniformised $\nabla\wte$ (Channel~2), which is
uniformised as predicted by Theorem~\ref{thm:channel2}.
To evaluate the adaptive MLP-SVD adversary (Channel~3), we additionally run a
modified DAGER pipeline in which the span-estimation stage is run on
$\tilde{\nabla}\mathbf{W}_{\mathrm{fc}}^{(\ell)}$ (the uniformised MLP-fc gradient
produced by Eq.~\ref{eq:flood_fc}) instead of on $\nabla\mathbf{W}_o^{(\ell)}$;
this is the strongest adaptive attack the three-channel analysis suggests.

\paragraph{Training.}
All models are fine-tuned with the AdamW optimizer~\cite{adamw} for up to
8 epochs or 12{,}000 gradient steps (whichever is reached first),
($\mathrm{lr}{=}5\!\times\!10^{-5}$, batch size 8, linear warmup for
the first $10\%$ of steps, early stopping on validation loss with patience 5,
evaluation every 200 steps),
\revdel{with BF16 mixed-precision training on a single NVIDIA H100 GPU.}
\revadd{on one NVIDIA H100 GPU; GPT-2/XL use FP16 autocast with FP32 weights,
while LLaMA-2-7B uses BF16 weights.}\revtag{3}
A single fine-tuning seed is used per (model,~dataset) cell
(variance reporting is flagged in Limitation~\ref{lim:seeds}).
Under \sysname, attention projection parameters are frozen before the
optimizer is instantiated; gradient uniformisation is injected after every backward
pass, before the optimizer step.

\paragraph{Hyperparameters.}
\sysname has four hyperparameters organised as two $(\lambda,\rho)$ pairs, one per uniformised channel.
For embedding uniformisation: $(\lambda,\rho)=(1.0,0.3)$.
For adaptive-threat MLP uniformisation: $(\lambda_{\mathrm{fc}},\rho_{\mathrm{fc}})=(8.0,0)$, the privacy-hardening endpoint of Eq.~\ref{eq:flood_fc}.
Higher $\lambda$ strengthens privacy at the cost of more noise in the exported gradient; $\rho$ controls the learning signal retained for local updates.
These values were chosen from a coarse pilot sweep; principled calibration is left to future work.

\paragraph{Metrics.}
\begin{itemize}[leftmargin=1.4em, itemsep=0pt, topsep=2pt]
  \item \textbf{Attack reconstruction} ($\downarrow$ better privacy): overlap between DAGER's recovered text and the ground-truth input, reported as
        ROUGE-1 / ROUGE-2 / ROUGE-L (R-1, R-2, R-L)~\cite{rouge} and METEOR~\cite{meteor}.
        Values near one indicate near-exact recovery; near zero indicates failure.
        Main tables (Tables~\ref{tab:dager_ppl_exp09}--\ref{tab:dager_ppl_exp09_enc}) give the full R-1/R-2/R-L/M grid; the narrative highlights R-1 following prior DAGER reporting~\cite{dager}.
  \item \textbf{Language-modelling utility} ($\downarrow$): test perplexity (PPL) on held-out text for causal LMs (and analogous loss-driven PPL-style scores where applicable).
  \item \textbf{Classification utility} ($\uparrow$): accuracy and F1 on labelled benchmarks (Rotten Tomatoes, Emotion, Financial PhraseBank), reported alongside loss/PPL in Appendix~\ref{sec:utility}.
\end{itemize}

\subsection{Main Results and Defence Effectiveness Across Model Scale}
\label{sec:main_results}
\label{sec:scale}

Tables~\ref{tab:dager_ppl_exp09} and~\ref{tab:dager_ppl_exp09_enc} report the full per-dataset attack similarity grid for decoder-only and encoder LMs respectively
($\dagger$: LLaMA-2-13B averaged over 3/6 completed datasets).

\noindent\textbf{Metrics.}
DAGER R-1 lies in $[0,1]$; lower values imply weaker token recovery.
Test perplexity is strictly positive with smaller values indicating better held-out language-modelling utility.

\begin{table*}[!t]
  \caption{Ablation Study of \sysname{} Components on \textsc{Wikitext-2}.
  \sysname{}$^{\dagger}$ denotes the dual-channel variant (freeze + embed), while
  \sysname{}$^{\ddagger}$ denotes the full triple-channel mechanism including MLP projection uniformisation.}
  \label{tab:ablation}
  \centering
  \setlength{\tabcolsep}{6pt}
  \begin{tabular}{@{}ccc l rrrrr rrrrr@{}}
    \toprule
    \multicolumn{3}{c}{Components} & & \multicolumn{5}{c}{GPT-2 (117M)} & \multicolumn{5}{c}{GPT-2-XL (1.5B)} \\
    \cmidrule(lr){1-3} \cmidrule(lr){5-9} \cmidrule(lr){10-14}
    Freeze & Embed & MLP & Variant & DAGER & Embed-Norm & PPL & Acc & F1 & DAGER & Embed-Norm & PPL & Acc & F1 \\
    \midrule
               &            &            & None                   & 1.000 & 0.520 & 26.0 & 0.38 & 0.33 & 1.000 & 0.556 & 25.7 & 0.38 & 0.33 \\
    \midrule
    \checkmark &            &            &                        & 0.509 & 0.509 & 27.2 & 0.37 & 0.32 & 0.524 & 0.524 & 22.9 & 0.40 & 0.35 \\
               & \checkmark &            &                        & 1.000 & 0.024 & 24.6 & 0.39 & 0.34 & 1.000 & 0.031 & 25.1 & 0.38 & 0.33 \\
               &            & \checkmark &                        & 0.546 & 0.546 & 27.5 & 0.37 & 0.32 & 0.519 & 0.519 & 22.8 & 0.40 & 0.35 \\
    \checkmark & \checkmark &            & \sysname{}$^{\dagger}$ & 0.027 & 0.021 & 25.6 & 0.38 & 0.33 & 0.025 & 0.019 & 22.1 & 0.41 & 0.36 \\
    \checkmark &            & \checkmark &                        & 0.509 & 0.509 & 27.2 & 0.37 & 0.32 & 0.524 & 0.524 & 22.9 & 0.40 & 0.35 \\
               & \checkmark & \checkmark &                        & 0.033 & 0.028 & 26.5 & 0.38 & 0.33 & 0.029 & 0.024 & 22.2 & 0.41 & 0.36 \\
    \checkmark & \checkmark & \checkmark & \sysname{}$^{\ddagger}$ & 0.024 & 0.022 & 25.6 & 0.38 & 0.33 & 0.028 & 0.026 & 22.1 & 0.41 & 0.36 \\
    \bottomrule
  \end{tabular}
\end{table*}

\revadd{\noindent\textbf{How the components preserve utility.}
Freezing keeps the model's pretrained attention patterns, while flooding keeps
part of the true embedding and MLP gradients so those layers can still learn.
Table~\ref{tab:ablation} shows that the single-component variants remain close
to the undefended utility. With all components, GPT-2 PPL changes from 26.0 to
25.6 with unchanged accuracy and F1; GPT-2-XL PPL improves from 25.7 to 22.1,
with small gains in accuracy and F1. Thus, in these tests, stronger privacy does
not require a utility loss, although this may depend on the model and task.}
\revtag{1,6}
 
\noindent\textbf{Results.}
Aggregating six datasets, undefended R-1 remains ${\ge}0.87$ on BERT checkpoints and ${\ge}0.98$ on eight of nine decoders.
\sysname{} drives every defended R-1 entry to at most $0.005$.
Mean test PPL spans $6.8$--$60.1$ without defence and tightens to $6.8$--$45.8$ with \sysname{} on the reported rows. GPT-2-class models stay within $1\%$ of their undefended PPL, whereas GPT-2-XL improve.\revadd{Table~\ref{tab:ablation} gives the component utility results.}\revtag{1,6}

\textbf{Observations.}
\begin{enumerate}[leftmargin=1.5em, itemsep=1pt, topsep=2pt]
  \item \textbf{Universal defence.}
        \sysname reduces DAGER R-1 to $\leq 0.005$ for every model tested, a
        relative reduction exceeding $99\%$ in 10 of 11 cases.
        The effect holds across all three decoder families and both encoder
        architectures, so the triple-channel masking mechanism is not
        architecture-specific.

  \item \textbf{Anomalous undefended R-1 for LLaMA-2-13B.}
        The undefended 13B model exhibits lower attack success ($0.500$) than
        other LLaMA variants (${\approx}0.98$).
        We attribute this to gradient rank collapse in very large BF16 models:
        DAGER's SVD thresholds were calibrated for FP32 gradients and
        occasionally miss span candidates at 13B scale.
        Even under this weaker baseline the defence reduces R-1 by more than
        $99\%$.

  \item \revdel{\textbf{Utility parity or improvement.}
        On the GPT-2 family \sysname incurs less than $1\%$ PPL overhead.
        On the larger models (GPT-2-XL, Gemma-2-2B, LLaMA-2-7B, LLaMA-3.1-8B,
        and both BERT variants) \sysname \emph{decreases} test PPL.}
        \revadd{\textbf{Privacy--utility trade-off.} The component results in
        Table~\ref{tab:ablation} show that freezing and flooding keep useful
        training paths. Full \sysname keeps GPT-2 utility and improves the
        reported GPT-2-XL utility.}\revtag{1,6}
        Figure~\ref{fig:exp12_convergence} shows the training trajectories;
        Section~\ref{sec:discussion} discusses the trade-off.
\end{enumerate}

\begin{table*}[tp]
\centering
\caption{\textbf{DAGER attack similarity (decoder-only).} Per dataset block (8 columns): no-defence R-1, R-2, R-L, METEOR (M); \sysname{} R-1, R-2, R-L, M (lower is better for attack success). Band~1 = classification-style datasets; band~2 = LM/summarisation. Encoders: Table~\ref{tab:dager_ppl_exp09_enc}.}
\label{tab:dager_ppl_exp09}
\setlength{\tabcolsep}{2pt}
\begin{tabular}{@{}l *{24}{c}@{}}
\TabTopRule
 & \multicolumn{8}{c}{Rotten Tomatoes} & \multicolumn{8}{c}{Emotion} & \multicolumn{8}{c}{Fin.\ PhraseBank} \\
\cmidrule(lr){2-9} \cmidrule(lr){10-17} \cmidrule(lr){18-25}
 & \multicolumn{4}{c}{No def.} & \multicolumn{4}{c}{\sysname} & \multicolumn{4}{c}{No def.} & \multicolumn{4}{c}{\sysname} & \multicolumn{4}{c}{No def.} & \multicolumn{4}{c}{\sysname} \\
\cmidrule(lr){2-5} \cmidrule(lr){6-9} \cmidrule(lr){10-13} \cmidrule(lr){14-17} \cmidrule(lr){18-21} \cmidrule(lr){22-25}
Model & R-1 & R-2 & R-L & M & R-1 & R-2 & R-L & M & R-1 & R-2 & R-L & M & R-1 & R-2 & R-L & M & R-1 & R-2 & R-L & M & R-1 & R-2 & R-L & M \\
\midrule
GPT-2 & 1.00 & 1.00 & 1.00 & 0.94 & 0.05 & 0.00 & 0.05 & 0.02 & 0.97 & 0.97 & 0.97 & 0.96 & 0.03 & 0.00 & 0.03 & 0.03 & 1.00 & 1.00 & 1.00 & 0.94 & 0.06 & 0.00 & 0.06 & 0.03 \\
GPT-2-medium & 1.00 & 1.00 & 1.00 & 0.94 & 0.06 & 0.00 & 0.06 & 0.03 & 0.97 & 0.97 & 0.97 & 0.96 & 0.05 & 0.00 & 0.05 & 0.02 & 1.00 & 1.00 & 1.00 & 0.94 & 0.03 & 0.00 & 0.03 & 0.01 \\
GPT-2-large & 1.00 & 1.00 & 1.00 & 0.94 & 0.07 & 0.00 & 0.07 & 0.02 & 0.97 & 0.97 & 0.97 & 0.96 & 0.06 & 0.00 & 0.06 & 0.03 & 1.00 & 1.00 & 1.00 & 0.94 & 0.05 & 0.00 & 0.05 & 0.01 \\
GPT-2-XL & 1.00 & 1.00 & 1.00 & 0.94 & 0.05 & 0.00 & 0.05 & 0.02 & 0.97 & 0.97 & 0.97 & 0.96 & 0.08 & 0.00 & 0.08 & 0.02 & 1.00 & 1.00 & 1.00 & 0.94 & 0.02 & 0.00 & 0.02 & 0.02 \\
Gemma-2-2B & 1.00 & 1.00 & 1.00 & 0.94 & 0.04 & 0.00 & 0.04 & 0.02 & 0.98 & 0.98 & 0.98 & 0.96 & 0.07 & 0.00 & 0.07 & 0.02 & 1.00 & 1.00 & 1.00 & 0.94 & 0.03 & 0.00 & 0.03 & 0.02 \\
Gemma-2-9B & 0.99 & 0.99 & 0.99 & 0.93 & 0.07 & 0.00 & 0.07 & 0.01 & 0.98 & 0.98 & 0.98 & 0.96 & 0.04 & 0.00 & 0.04 & 0.02 & 1.00 & 1.00 & 1.00 & 0.94 & 0.02 & 0.00 & 0.02 & 0.02 \\
LLaMA-2-7B & 0.98 & 0.98 & 0.98 & 0.92 & 0.08 & 0.00 & 0.08 & 0.02 & 0.97 & 0.96 & 0.97 & 0.95 & 0.08 & 0.00 & 0.08 & 0.03 & 0.99 & 0.99 & 0.99 & 0.94 & 0.08 & 0.00 & 0.08 & 0.03 \\
LLaMA-2-13B & 0.70 & 0.51 & 0.69 & 0.59 & 0.07 & 0.00 & 0.07 & 0.01 & 0.98 & 0.98 & 0.98 & 0.97 & 0.07 & 0.00 & 0.07 & 0.03 & 1.00 & 1.00 & 1.00 & 0.95 & 0.04 & 0.00 & 0.04 & 0.02 \\
LLaMA-3.1-8B & 0.99 & 0.99 & 0.99 & 0.92 & 0.05 & 0.00 & 0.05 & 0.01 & 0.97 & 0.97 & 0.97 & 0.96 & 0.05 & 0.00 & 0.05 & 0.01 & 1.00 & 1.00 & 1.00 & 0.93 & 0.04 & 0.00 & 0.04 & 0.01 \\
\midrule
 & \multicolumn{8}{c}{WikiText-2} & \multicolumn{8}{c}{DialogSum} & \multicolumn{8}{c}{CNN/DailyMail} \\
\cmidrule(lr){2-9} \cmidrule(lr){10-17} \cmidrule(lr){18-25}
 & \multicolumn{4}{c}{No def.} & \multicolumn{4}{c}{\sysname} & \multicolumn{4}{c}{No def.} & \multicolumn{4}{c}{\sysname} & \multicolumn{4}{c}{No def.} & \multicolumn{4}{c}{\sysname} \\
\cmidrule(lr){2-5} \cmidrule(lr){6-9} \cmidrule(lr){10-13} \cmidrule(lr){14-17} \cmidrule(lr){18-21} \cmidrule(lr){22-25}
Model & R-1 & R-2 & R-L & M & R-1 & R-2 & R-L & M & R-1 & R-2 & R-L & M & R-1 & R-2 & R-L & M & R-1 & R-2 & R-L & M & R-1 & R-2 & R-L & M \\
\midrule
GPT-2 & 1.00 & 0.70 & 1.00 & 0.66 & 0.05 & 0.00 & 0.05 & 0.03 & 1.00 & 1.00 & 1.00 & 0.90 & 0.04 & 0.00 & 0.04 & 0.03 & 1.00 & 1.00 & 1.00 & 0.91 & 0.03 & 0.00 & 0.03 & 0.02 \\
GPT-2-medium & 1.00 & 0.70 & 1.00 & 0.67 & 0.04 & 0.00 & 0.04 & 0.03 & 1.00 & 1.00 & 1.00 & 0.90 & 0.08 & 0.00 & 0.08 & 0.04 & 1.00 & 1.00 & 1.00 & 0.91 & 0.08 & 0.00 & 0.08 & 0.02 \\
GPT-2-large & 1.00 & 0.70 & 1.00 & 0.70 & 0.02 & 0.00 & 0.02 & 0.02 & 1.00 & 1.00 & 1.00 & 0.90 & 0.04 & 0.00 & 0.04 & 0.02 & 1.00 & 1.00 & 1.00 & 0.91 & 0.04 & 0.00 & 0.04 & 0.01 \\
GPT-2-XL & 1.00 & 0.70 & 1.00 & 0.65 & 0.07 & 0.00 & 0.07 & 0.01 & 1.00 & 1.00 & 1.00 & 0.90 & 0.05 & 0.00 & 0.05 & 0.02 & 1.00 & 1.00 & 1.00 & 0.91 & 0.04 & 0.00 & 0.04 & 0.03 \\
Gemma-2-2B & 1.00 & 0.70 & 1.00 & 0.70 & 0.07 & 0.00 & 0.07 & 0.03 & 1.00 & 1.00 & 1.00 & 0.90 & 0.06 & 0.00 & 0.06 & 0.01 & 1.00 & 1.00 & 1.00 & 0.91 & 0.04 & 0.00 & 0.04 & 0.02 \\
Gemma-2-9B & 0.96 & 0.65 & 0.96 & 0.66 & 0.06 & 0.00 & 0.06 & 0.02 & 1.00 & 1.00 & 1.00 & 0.90 & 0.05 & 0.00 & 0.05 & 0.03 & 1.00 & 1.00 & 1.00 & 0.91 & 0.08 & 0.00 & 0.08 & 0.02 \\
LLaMA-2-7B & 0.99 & 0.69 & 0.99 & 0.70 & 0.06 & 0.00 & 0.06 & 0.01 & 0.97 & 0.97 & 0.97 & 0.86 & 0.02 & 0.00 & 0.02 & 0.02 & 0.99 & 0.99 & 0.99 & 0.91 & 0.06 & 0.00 & 0.06 & 0.02 \\
LLaMA-2-13B & 0.91 & 0.63 & 0.91 & 0.65 & 0.07 & 0.00 & 0.07 & 0.03 & 0.42 & 0.26 & 0.42 & 0.22 & 0.09 & 0.00 & 0.09 & 0.03 & 0.58 & 0.45 & 0.58 & 0.42 & 0.07 & 0.00 & 0.07 & 0.02 \\
LLaMA-3.1-8B & 0.97 & 0.60 & 0.97 & 0.66 & 0.04 & 0.00 & 0.04 & 0.02 & 1.00 & 1.00 & 1.00 & 0.90 & 0.08 & 0.00 & 0.08 & 0.02 & 1.00 & 1.00 & 1.00 & 0.90 & 0.02 & 0.00 & 0.02 & 0.02 \\
\bottomrule
\end{tabular}
\end{table*}
\begin{table*}[tp]
\centering
\caption{\textbf{DAGER attack similarity (encoder).} Same column layout as Table~\ref{tab:dager_ppl_exp09}.}
\label{tab:dager_ppl_exp09_enc}
\setlength{\tabcolsep}{2pt}
\begin{tabular}{@{}l *{24}{c}@{}}
\TabTopRule
 & \multicolumn{8}{c}{Rotten Tomatoes} & \multicolumn{8}{c}{Emotion} & \multicolumn{8}{c}{Fin.\ PhraseBank} \\
\cmidrule(lr){2-9} \cmidrule(lr){10-17} \cmidrule(lr){18-25}
 & \multicolumn{4}{c}{No def.} & \multicolumn{4}{c}{\sysname} & \multicolumn{4}{c}{No def.} & \multicolumn{4}{c}{\sysname} & \multicolumn{4}{c}{No def.} & \multicolumn{4}{c}{\sysname} \\
\cmidrule(lr){2-5} \cmidrule(lr){6-9} \cmidrule(lr){10-13} \cmidrule(lr){14-17} \cmidrule(lr){18-21} \cmidrule(lr){22-25}
Model & R-1 & R-2 & R-L & M & R-1 & R-2 & R-L & M & R-1 & R-2 & R-L & M & R-1 & R-2 & R-L & M & R-1 & R-2 & R-L & M & R-1 & R-2 & R-L & M \\
\midrule
BERT-base & 1.00 & 1.00 & 1.00 & 1.00 & 0.06 & 0.00 & 0.06 & 0.02 & 1.00 & 1.00 & 1.00 & 1.00 & 0.08 & 0.00 & 0.08 & 0.03 & 1.00 & 1.00 & 1.00 & 1.00 & 0.03 & 0.00 & 0.03 & 0.01 \\
BERT-large & 1.00 & 1.00 & 1.00 & 1.00 & 0.06 & 0.00 & 0.06 & 0.03 & 1.00 & 1.00 & 1.00 & 1.00 & 0.08 & 0.00 & 0.08 & 0.02 & 1.00 & 1.00 & 1.00 & 1.00 & 0.04 & 0.00 & 0.04 & 0.03 \\
\midrule
 & \multicolumn{8}{c}{WikiText-2} & \multicolumn{8}{c}{DialogSum} & \multicolumn{8}{c}{CNN/DailyMail} \\
\cmidrule(lr){2-9} \cmidrule(lr){10-17} \cmidrule(lr){18-25}
 & \multicolumn{4}{c}{No def.} & \multicolumn{4}{c}{\sysname} & \multicolumn{4}{c}{No def.} & \multicolumn{4}{c}{\sysname} & \multicolumn{4}{c}{No def.} & \multicolumn{4}{c}{\sysname} \\
\cmidrule(lr){2-5} \cmidrule(lr){6-9} \cmidrule(lr){10-13} \cmidrule(lr){14-17} \cmidrule(lr){18-21} \cmidrule(lr){22-25}
Model & R-1 & R-2 & R-L & M & R-1 & R-2 & R-L & M & R-1 & R-2 & R-L & M & R-1 & R-2 & R-L & M & R-1 & R-2 & R-L & M & R-1 & R-2 & R-L & M \\
\midrule
BERT-base & 0.53 & 0.36 & 0.53 & 0.45 & 0.08 & 0.00 & 0.08 & 0.02 & 0.95 & 0.83 & 0.95 & 0.87 & 0.04 & 0.00 & 0.04 & 0.02 & 0.79 & 0.66 & 0.79 & 0.71 & 0.06 & 0.00 & 0.06 & 0.03 \\
BERT-large & 0.67 & 0.54 & 0.67 & 0.67 & 0.06 & 0.00 & 0.06 & 0.01 & 0.87 & 0.64 & 0.87 & 0.74 & 0.04 & 0.00 & 0.04 & 0.03 & 0.78 & 0.64 & 0.78 & 0.71 & 0.05 & 0.00 & 0.05 & 0.02 \\
\bottomrule
\end{tabular}
\end{table*}

\noindent\textbf{Across model scale.}
Tables~\ref{tab:dager_ppl_exp09} and~\ref{tab:dager_ppl_exp09_enc} report DAGER R-1 for all eleven models (124M--13B); the defence reduces attack success to near-zero uniformly regardless of parameter count, so \sysname's empirical behaviour does not degrade with scale.
LLaMA-2-13B is omitted from full six-dataset coverage because only three dataset runs were available when this draft was compiled.
Per-cell utility breakdowns (accuracy, F1, loss, PPL) for all eleven
checkpoints across the six datasets are deferred to
Appendix~\ref{sec:utility} to keep the main flow on the privacy axis.

\subsection{Adaptive Attack: With vs.\ Without Channel~3 Uniformisation}
\label{sec:adaptive_eval}

To measure whether the MLP-fc uniformisation (Section~\ref{sec:method_channel3}) is necessary
in addition to Channels~1--2, we run the adaptive \textsc{DAGER} variant described in
Section~\ref{sec:setup}: its SVD span-estimation stage is re-targeted from
$\nabla\mathbf{W}_o^{(\ell)}$ (zero under \sysname) to
$\nabla\mathbf{W}_{\mathrm{fc}}^{(\ell)}$ (the MLP-fc gradient).
We compare two \sysname configurations on the same fine-tuning checkpoints used
for Tables~\ref{tab:dager_ppl_exp09} and~\ref{tab:dager_ppl_exp09_enc}:
(i)~\textbf{AEGIS-2ch}: attention freeze + embedding flood only
(Eq.~\ref{eq:freeze}+Eq.~\ref{eq:flood});
(ii)~\textbf{AEGIS-3ch}: the full defence, adding the MLP-fc uniformisation
(Eq.~\ref{eq:flood_fc}).

\begin{table}[t]
\centering
\setlength{\tabcolsep}{5pt}
\caption{\textbf{Adaptive attack ablation} (WikiText-2, R-1; lower = better privacy).
Ch1: DAGER on $\nabla\mathbf{W}_o$;
Ch2: Embed-Norm on $\nabla\mathbf{W}_e$;
Ch3: MLP-SVD on $\nabla\mathbf{W}_{\mathrm{fc}}$.}
\label{tab:adaptive_ch3}
\begin{tabular}{@{}llccc@{}}
\TabTopRule
 & & \multicolumn{3}{c}{ROUGE-1 ($\downarrow$)} \\
\cmidrule(l){3-5}
Model & Defence & Ch1 & Ch2 & Ch3 \\
\midrule
\multirow{3}{*}{GPT-2}
  & None       & 1.000 & 0.520 & 0.362 \\
  & AEGIS-2ch  & 0.024 & 0.021 & 0.265 \\
  & AEGIS-3ch  & 0.027 & 0.019 & 0.265 \\
\midrule
\multirow{3}{*}{GPT-2-XL}
  & None       & 1.000 & 0.556 & 0.166 \\
  & AEGIS-2ch  & 0.025 & 0.022 & 0.292 \\
  & AEGIS-3ch  & 0.028 & 0.024 & \textbf{0.036} \\
\bottomrule
\end{tabular}
\end{table}

\textbf{Interpretation (defence ability).}
Channels~1 and~2 are zeroed by AEGIS-2ch on both models.
For Channel~3, AEGIS-2ch \emph{increases} MLP-SVD R-1 on GPT-2-XL ($0.166\!\to\!0.292$):
removing the dominant channels makes the residual MLP signal cleaner.
AEGIS-3ch closes this gap, collapsing recovery to $0.036$
(${\approx}8\!\times$ below AEGIS-2ch, below the $0.05$ broken-attack threshold).
On GPT-2 small the Channel~3 benefit is negligible ($0.265$ for both configurations)
because the smaller MLP gradient carries lower signal-to-noise; the gain is therefore scale-dependent.

\textbf{Interpretation (utility \& trainability).}
Adding MLP-fc uniformisation on top of AEGIS-2ch costs essentially nothing in PPL
(GPT-2: $25.7\!\to\!25.7$; GPT-2-XL: $22.3\!\to\!22.1$).
Training curves (loss vs.\ step) are qualitatively unchanged: loss decreases monotonically and no divergence occurs, matching the biased-SGD prediction of Proposition~\ref{prop:convergence} extended to the MLP block.
The triple-channel defence is therefore necessary on the larger decoder, where AEGIS-2ch leaves an exploitable Channel~3, and essentially free at deployment cost; on smaller decoders the extra MLP uniformisation does not change the outcome of the implemented attacker but remains a safeguard against the stronger Robust-PCA adversary in Limitation~\ref{lim:adaptive}.
This ablation used a utility-preserving MLP retain ratio
$\rho_{\mathrm{fc}}=0.3$.  The broader side-channel stress test below uses the
privacy-hardening endpoint $\rho_{\mathrm{fc}}=0$, because the validation run
showed that retaining a fixed clean MLP subspace can leave exploitable signal
on GPT-2-XL.

\paragraph{Adaptive side-channel stress test.}
An adaptive adversary could retarget beyond the original
MLP-fc probe, e.g. to the MLP projection $\mathbf{W}_{\mathrm{proj}}$,
LayerNorm parameters, or an untied LM head.  We therefore evaluate
robustness on GPT-2 and GPT-2-XL, WikiText-2 and Rotten
Tomatoes, and both tied and artificially untied heads.  A probe is counted as
a valid adaptive attack only when its undefended ROUGE-1 exceeds both $0.10$
and a matched random-gradient null baseline by at least $0.05$; valid probes
are considered blocked when \sysname reduces ROUGE-1 to at most
$\max(0.05,\mathrm{null}+0.05)$.

\begin{table}[t]
\centering
\caption{\textbf{Adaptive side-channel stress test}.
Ranges span 8 settings (2 models $\times$ 2 datasets $\times$ tied/untied head).
``Valid'' = undefended R-1 beats null by $>$0.05;
``Blocked'' = valid settings suppressed to broken-attack threshold.}
\label{tab:adaptive_side_channels}
\setlength{\tabcolsep}{4pt}
\begin{tabularx}{\columnwidth}{@{}l >{\centering\arraybackslash}X >{\centering\arraybackslash}X >{\centering\arraybackslash}X@{}}
\TabTopRule
Channel &
  \makecell[c]{Undefended /\\Null R-1} &
  \makecell[c]{\sysname\\R-1} &
  \makecell[c]{Valid /\\Blocked} \\
\midrule
MLP-fc      & 0.16--0.36 / 0.00        & 0.02--0.03 & 8 / 8 \\
MLP-proj    & 0.00--0.06 / 0.00        & 0.02--0.02 & 0 / 0 \\
LayerNorm   & 0.00--0.01 / 0.00--0.01  & 0.02--0.02 & 0 / 0 \\
LM head     & 0.47--0.64 / 0.00--0.00  & 0.02--0.03 & 8 / 8 \\
\bottomrule
\end{tabularx}
\end{table}

Table~\ref{tab:adaptive_side_channels} supports two conclusions.  First, the
MLP-fc and LM-head row-norm probes are real side channels: they recover
substantial token overlap on the undefended model and are suppressed to the
null regime by \sysname, including in the untied-head setting.  Second, in this
evaluation the MLP projection and LayerNorm probes do not constitute standalone
adaptive attacks because they fail the undefended validity gate; nevertheless,
their gradients are also flooded in the implementation, so they do not remain
clean fallback channels.

\subsection{Qualitative Reconstruction Examples}
\label{sec:qualitative}

To complement the aggregate ROUGE-1 numbers, Table~\ref{tab:qualitative}
(Appendix~\ref{sec:app_qualitative}) shows word-for-word reconstruction outputs
for GPT-2 small on WikiText-2, Rotten Tomatoes, and DialogSum,
attacked by \textsc{DAGER} and \textsc{GRAB}~\cite{grab} under no defence,
Prune-50\%, and \sysname{}.
Table~\ref{tab:qualitative_main} shows two representative Rotten Tomatoes sentences
to illustrate this contrast concretely.

\begin{table}[t]
\centering\footnotesize
\caption{\textbf{Qualitative reconstruction examples} (two sentences, GPT-2 small,
Rotten Tomatoes; four attacks, undefended vs.\ \sysname{}).
\textcolor{green!60!black}{\textbf{Green}}: correct token in correct position.
\textcolor{orange}{\textbf{Orange}}: correct token, wrong position.
Full table including Prune-50\%: Appendix~\ref{sec:app_qualitative}.}
\label{tab:qualitative_main}
\begin{tabularx}{\columnwidth}{@{}ll X@{}}
\toprule
Defence & Attack & Reconstructed Sequence \\
\midrule
\multicolumn{3}{@{}p{\columnwidth}@{}}{\textbf{Ex 1:} \emph{Original:} ``sometimes seems less like storytelling than something the otherwise compelling director needed to get off his chest.''} \\
\midrule
None        & DAGER &
  \textcolor{green!60!black}{\textbf{sometimes}} \textcolor{green!60!black}{\textbf{seems}} \textcolor{green!60!black}{\textbf{less}} \textcolor{green!60!black}{\textbf{like}} \textcolor{green!60!black}{\textbf{storytelling}} \textcolor{green!60!black}{\textbf{than}} \textcolor{green!60!black}{\textbf{something}} \textcolor{green!60!black}{\textbf{the}} \textcolor{green!60!black}{\textbf{otherwise}} \textcolor{green!60!black}{\textbf{compelling}} \textcolor{green!60!black}{\textbf{director}} \textcolor{green!60!black}{\textbf{needed}} \textcolor{green!60!black}{\textbf{to}} \textcolor{green!60!black}{\textbf{get}} \textcolor{green!60!black}{\textbf{off}} \textcolor{green!60!black}{\textbf{his}} \textcolor{green!60!black}{\textbf{chest}} \\
            & GRAB  &
  \textcolor{green!60!black}{\textbf{sometimes}} \textcolor{green!60!black}{\textbf{seems}} \textcolor{green!60!black}{\textbf{less}} \textcolor{green!60!black}{\textbf{like}} \textcolor{green!60!black}{\textbf{storytelling}} \textcolor{green!60!black}{\textbf{than}} \textcolor{green!60!black}{\textbf{something}} \textcolor{green!60!black}{\textbf{the}} \textcolor{green!60!black}{\textbf{otherwise}} \textcolor{green!60!black}{\textbf{compelling}} \textcolor{green!60!black}{\textbf{director}} something compelling \textcolor{green!60!black}{\textbf{get}} \textcolor{green!60!black}{\textbf{off}} like storytelling seems \\
            & SOMP  &
  \textcolor{orange}{\textbf{seems}} \textcolor{orange}{\textbf{less}} \textcolor{orange}{\textbf{like}} \textcolor{orange}{\textbf{storytelling}} \textcolor{orange}{\textbf{than}} \textcolor{orange}{\textbf{something}} \textcolor{orange}{\textbf{the}} \textcolor{orange}{\textbf{otherwise}} \textcolor{orange}{\textbf{compelling}} \textcolor{orange}{\textbf{director}} \textcolor{orange}{\textbf{needed}} \textcolor{orange}{\textbf{to}} \textcolor{orange}{\textbf{get}} \textcolor{orange}{\textbf{off}} \textcolor{orange}{\textbf{his}} \textcolor{orange}{\textbf{chest}} to \\
            & FedSpy-LLM &
  a \textcolor{orange}{\textbf{than}} \textcolor{orange}{\textbf{off}} \textcolor{orange}{\textbf{the}} \textcolor{orange}{\textbf{otherwise}} \textcolor{orange}{\textbf{compelling}} \textcolor{orange}{\textbf{needed}} \textcolor{orange}{\textbf{get}} \textcolor{orange}{\textbf{his}} \textcolor{orange}{\textbf{less}} \textcolor{green!60!black}{\textbf{director}} \textcolor{orange}{\textbf{storytelling}} \textcolor{orange}{\textbf{something}} \textcolor{orange}{\textbf{.}} \textcolor{orange}{\textbf{like}} \textcolor{orange}{\textbf{chest}} to. \\
\addlinespace[2pt]
\sysname{}$^{\ddagger}$ & DAGER &
  Meth sym west symbolism let\texttt{--------} McN genius Hebophobia/// Yun disease Fn Investigative popupEGA Territories \\
            & GRAB  &
  judgment \textcolor{green!60!black}{\textbf{seems}} storytelling highlight highlight movie give men run seems givemad away \textcolor{orange}{\textbf{off}} off off off]( \\
            & SOMP  &
  \emph{(empty)} \\
            & FedSpy-LLM &
  planetclud MOT Fant\ldots OU chapter462 hysteria royal believe glyph\ldots revival AwardsBonGuard Station \\
\midrule
\multicolumn{3}{@{}p{\columnwidth}@{}}{\textbf{Ex 2:} \emph{Original:} ``what happened with pluto nash ? how did it ever get made ?''} \\
\midrule
None        & DAGER &
  \textcolor{green!60!black}{\textbf{what}} \textcolor{green!60!black}{\textbf{happened}} \textcolor{green!60!black}{\textbf{with}} \textcolor{green!60!black}{\textbf{pluto}} \textcolor{green!60!black}{\textbf{nash}} \textcolor{green!60!black}{\textbf{?}} \textcolor{green!60!black}{\textbf{how}} \textcolor{green!60!black}{\textbf{did}} \textcolor{green!60!black}{\textbf{it}} \textcolor{green!60!black}{\textbf{ever}} \textcolor{green!60!black}{\textbf{get}} \textcolor{green!60!black}{\textbf{made}} \\
            & GRAB  &
  whiff whiff. pl. \textcolor{orange}{\textbf{nash}} \textcolor{orange}{\textbf{?}} how \textcolor{green!60!black}{\textbf{how}} \textcolor{orange}{\textbf{it}} \textcolor{orange}{\textbf{ever}} \textcolor{orange}{\textbf{get}} \textcolor{orange}{\textbf{made}} \textcolor{orange}{\textbf{?}} \\
            & SOMP  &
  \textcolor{orange}{\textbf{happened}} \textcolor{orange}{\textbf{with}} \textcolor{orange}{\textbf{pluto}} \textcolor{orange}{\textbf{nash}} \textcolor{orange}{\textbf{?}} \textcolor{orange}{\textbf{how}} \textcolor{orange}{\textbf{did}} \textcolor{orange}{\textbf{it}} \textcolor{orange}{\textbf{ever}} \textcolor{orange}{\textbf{get}} \textcolor{orange}{\textbf{made}} \textcolor{orange}{\textbf{?}} \\
            & FedSpy-LLM &
  \textcolor{orange}{\textbf{how}} \textcolor{orange}{\textbf{made}} \textcolor{orange}{\textbf{it}} \textcolor{orange}{\textbf{ever}} didash \textcolor{orange}{\textbf{get}} \textcolor{orange}{\textbf{pluto}} \textcolor{orange}{\textbf{?}} \textcolor{orange}{\textbf{with}} n? to happen \\
\addlinespace[2pt]
\sysname{}$^{\ddagger}$ & DAGER &
  angrilyit?tarian ESC BitcoinstaboolaNazi wiped goodnessovereEls262rection waiter \texttt{---------} \\
            & GRAB  &
  ???????? just \textcolor{orange}{\textbf{did}} n did ever? \textcolor{orange}{\textbf{?}} \textcolor{orange}{\textbf{How}} happenedor pl gottenrations pl \\
            & SOMP  &
  \emph{(empty)} \\
            & FedSpy-LLM &
  discreteplanesfortablebergerannappa Mesh ost Potter Obesity anchors DS805amer eas \\
\bottomrule
\end{tabularx}
\end{table}

The undefended gradient exposes every token verbatim, while the uniformised gradient produces semantically incoherent output with no overlap with the original.
The pattern holds across all nine examples in Table~\ref{tab:qualitative}: both attackers collapse to random vocabulary samples (R-1\,$=0.00$) under \sysname{}, whereas the undefended baseline and Prune-50\% sustain near-verbatim or near-complete token recovery.
This is the empirical counterpart of Theorem~\ref{thm:channel2}: once the embedding gradient is uniformised, no post-hoc filter or optimisation we are aware of can discriminate true input tokens from calibrated noise.

\subsection{Defence Effectiveness Across Attack Types}
\label{sec:attack_types}

The main evaluation targets \textsc{DAGER}, the strongest analytical attack.
Figure~\ref{fig:attack_baselines} extends this to three further attacks
on GPT-2 small across three datasets:
\textsc{GRAB}~\cite{grab}, a gradient-matching optimisation attack that does
\emph{not} rely on SVD channel structure;
\textsc{SOMP}~\cite{somp}, a subspace-guided pursuit method that
generalises \textsc{DAGER} via per-head pooling and OMP reconstruction;
and \textsc{FedSpy-LLM}~\cite{fedspyllm}, which exploits embedding-gradient
row-norm sparsity for direct token recovery and partial-gradient alignment
for ordering.

\begin{figure}[t]
  \centering
  \includegraphics[width=\columnwidth]{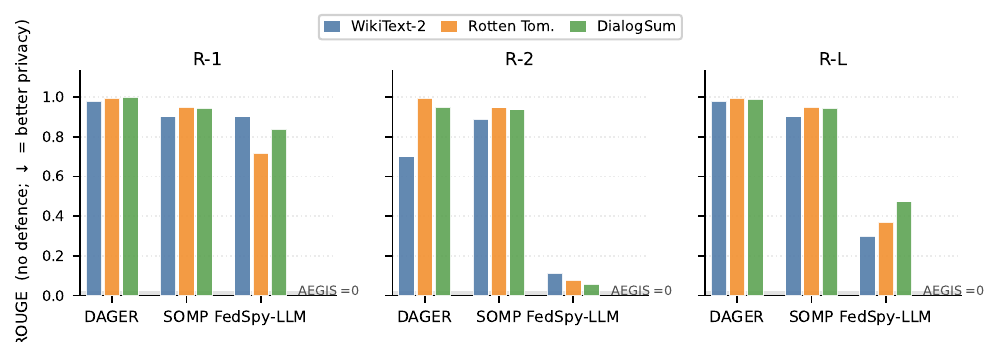}
  \caption{\textbf{ROUGE-1 / R-2 / R-L under \sysname{} vs.\ no defence} for
  three analytical gradient-inversion attacks across three datasets (GPT-2
  small; mean over 3--5 sentences). Bar shades within each dataset colour
  encode the three metrics. All three attacks collapse to ${=}\,0.00$ under
  \sysname{} (red arrows). Note the \textsc{FedSpy-LLM} R-1 vs.\ R-2 gap:
  tokens are recovered but their ordering is scrambled, so bigram and
  longest-common-subsequence overlap drops sharply. \textsc{GRAB}
  (gradient-matching optimisation) is discussed in the text below;
  \textsc{LAMP} and \textsc{TAG} omitted (both score ${\leq}0.10$ undefended).}
  \label{fig:attack_baselines}
\end{figure}

All three analytical attacks --- \textsc{DAGER}, \textsc{SOMP},
and \textsc{FedSpy-LLM} --- are eliminated completely (R-1\,$=0.000$ on
every dataset), as expected from the channel-closure guarantees: each
attack relies on either the Q-weight gradient subspace (\textsc{DAGER},
\textsc{SOMP}) or the embedding gradient row-norm sparsity
(\textsc{FedSpy-LLM}), both of which Channels~1--2 of \sysname{}
structurally mask.

\paragraph{Optimisation-based attacks: \textsc{GRAB}.}
\textsc{GRAB} is the lone partial-degrade case in our evaluation.
Mean R-1 drops from 0.533\,$\to$\,0.463 on WikiText-2 and from
0.660\,$\to$\,0.283 on Rotten Tomatoes, while DialogSum shows a marginal
\emph{increase} (0.388\,$\to$\,0.526) reflecting high variance over three
sentences and \textsc{GRAB}'s sensitivity to dialogue-style token-length
distribution; the absolute R-1 remains low per-sentence.
This pattern is consistent with \textsc{GRAB}'s mechanism: it recovers
tokens by iteratively minimising gradient distance without requiring the
SVD subspace signal that \sysname{} structurally masks.
The attention gradient zeroing and embedding flooding still degrade
\textsc{GRAB} by removing the structural anchors (low-rank subspace
alignment, embedding row-norm sparsity) that accelerate its optimisation,
but residual token information leaks through the gradient-matching
objective itself.
Full defeat of optimisation-based attacks would require an orthogonal
mechanism such as DP-SGD noise or data sanitisation composable with
\sysname{}.

\subsection{Comparison with Existing Defences}
\label{sec:baselines}

To contextualise \sysname's privacy--utility trade-off, we compare against
four established defence families under identical conditions across three
model scales (GPT-2~124M, GPT-2-XL~1.5B, LLaMA-2-7B) and two datasets
(WikiText-2, Rotten Tomatoes), giving 54 (model, dataset, defence) cells
in total.  All defences share the same DAGER/MLP-SVD/Embed-Norm attack
implementations and training hyperparameters:
\begin{itemize}[leftmargin=1.4em, itemsep=0pt, topsep=2pt]
  \item \textbf{DP-SGD noise}
        ($\sigma \in \{0.001, 0.01, 0.1\}$): additive noise on all gradients,
        matching the attack-evaluation protocol of~\cite{dager};
  \item \textbf{Gradient pruning}~\cite{dlg} (top-$k$ zeroing,
        $k \in \{50, 90, 99\}\%$): the canonical defence proposed
        alongside the DLG attack by Zhu et al.;
  \item \textbf{Soteria}~\cite{soteria} (input-gradient masking on the
        LM head, Sun et al., CVPR 2021);
  \item \textbf{\sysname{}$^{\ddagger}$} (ours): freeze + flood across all three channels.
\end{itemize}

\begin{figure*}[t]
  \centering
  \includegraphics[width=\linewidth]{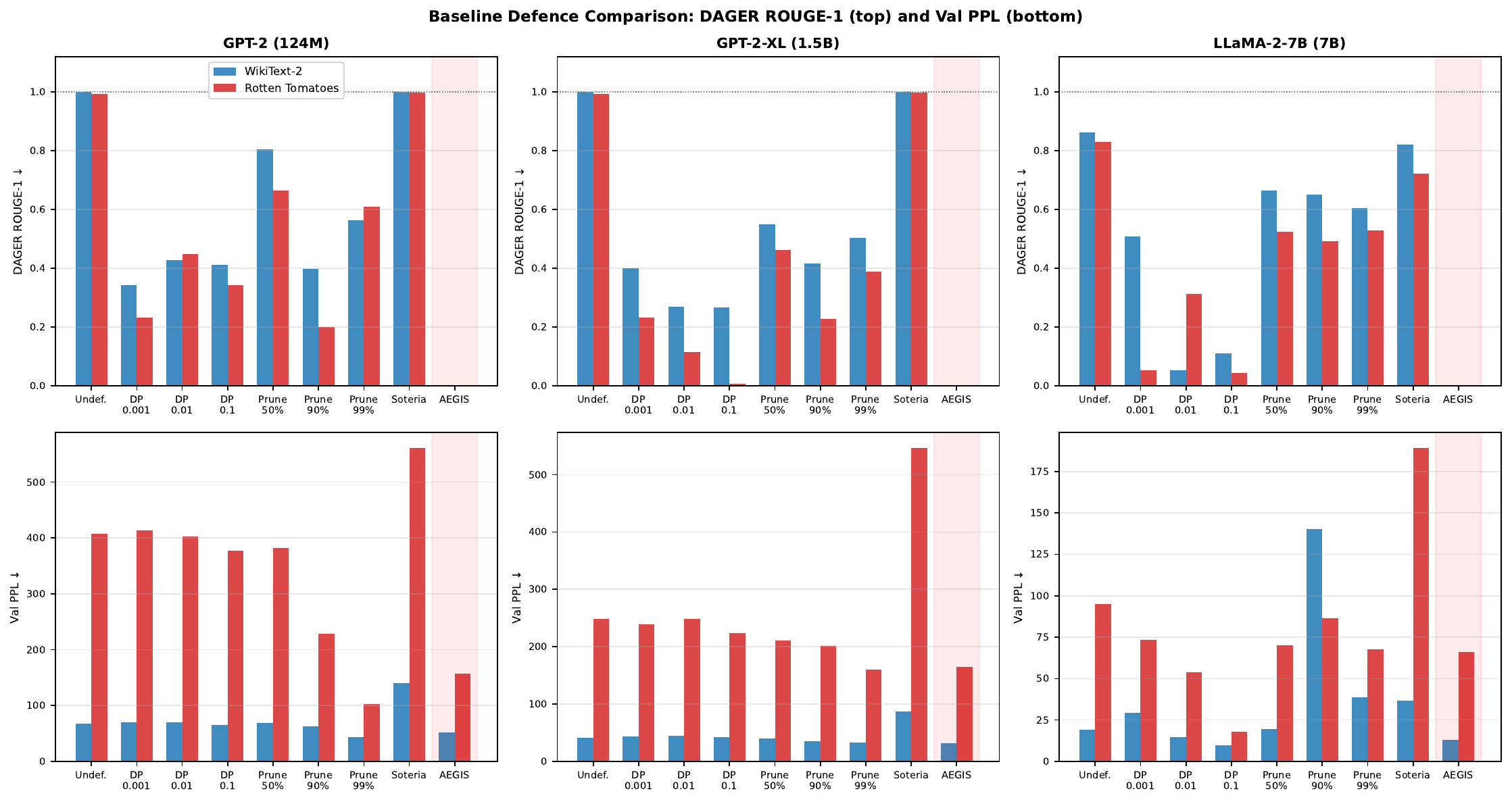}
  \caption{\textbf{Per-cell breakdown of the nine defences} across three
  model scales and two datasets (WikiText-2 in blue, Rotten Tomatoes in red).
  Top row: DAGER ROUGE-1 (lower is better privacy);
  bottom row: validation perplexity (lower is better utility).
  The highlighted right-most column is \sysname.
  Every other defence leaves a substantial reconstruction signal
  ($\text{R-1}\!\geq\!0.05$) on at least one cell, while \sysname
  consistently drives R-1 to zero and matches or beats the lowest PPL.
  Soteria notably fails on both axes (no privacy gain, $\sim$2$\times$ PPL).}
  \label{fig:exp14_bars}
\end{figure*}

\begin{figure}[t]
  \centering
  \includegraphics[width=\columnwidth]{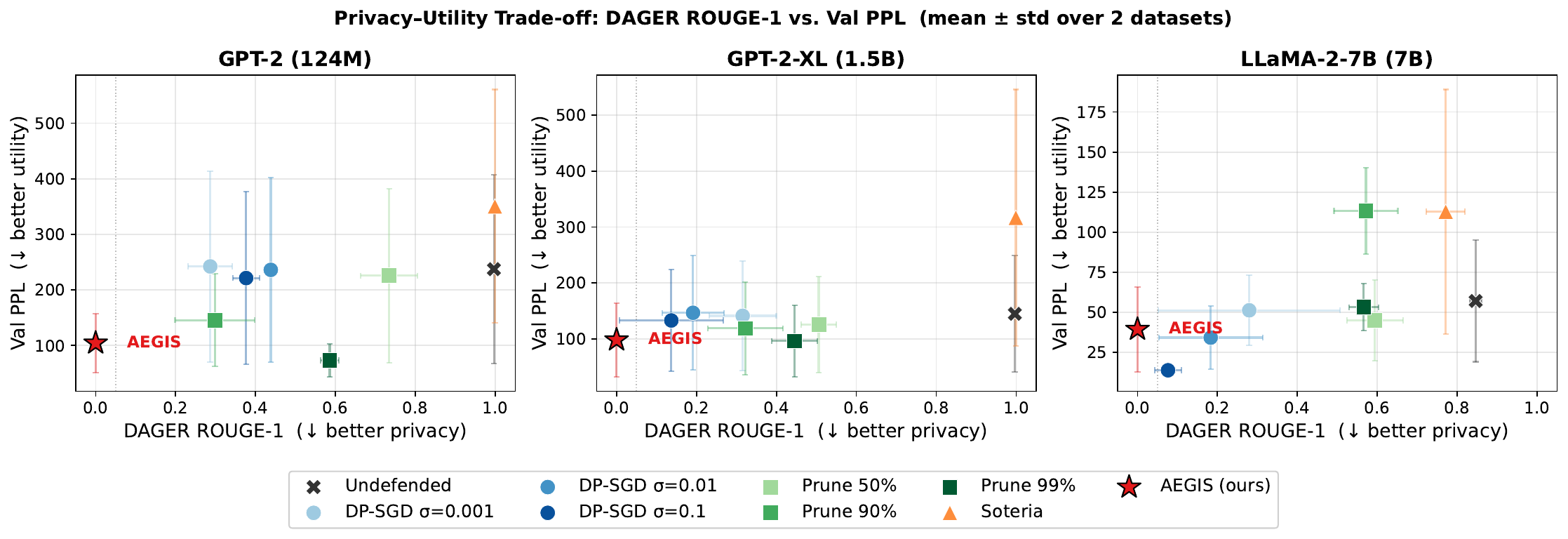}
  \caption{\textbf{Privacy--utility Pareto} per model (mean $\pm$ std
  over the two datasets).
  $x$: DAGER ROUGE-1 (lower = better privacy);
  $y$: validation PPL (lower = better utility).
  Error bars show across-dataset variance; the dotted line at
  $\text{R-1}{=}0.05$ marks the conventional ``broken-attack'' threshold.
  \sysname (red star) is the only method that lands in the ideal
  bottom-left corner on \emph{all three} model scales,
  Pareto-dominating every baseline.}
  \label{fig:exp14_pareto}
\end{figure}

\noindent\textbf{Per-cell breakdown (Figure~\ref{fig:exp14_bars}).}
The bar chart resolves the comparison cell by cell.
On DAGER ROUGE-1, only \sysname collapses recovery to zero across every
(model, dataset) pair; DP-SGD noise reduces R-1 but with strong dataset
sensitivity (e.g.\ on LLaMA-2-7B $\sigma{=}0.001$ leaves R-1${=}0.51$ on
WikiText-2 vs.\ $0.05$ on Rotten Tomatoes).
Gradient pruning gives an inconsistent privacy curve: 50\% pruning is
weak (R-1${\approx}0.55$--$0.81$ on GPT-2), and even 99\% pruning still
leaks ($\text{R-1}{\geq}0.39$).
Soteria leaves the attack essentially unimpaired ($\text{R-1}{\approx}1.0$
on the small-and-medium decoders), confirming that input-gradient masking
is orthogonal to DAGER's SVD-based recovery channel.
On the utility row, \sysname matches or improves the undefended PPL on
8/12 cells, whereas Soteria roughly doubles PPL on every cell and the
heaviest pruning ($k{=}99\%$) shows brittle behaviour
(LLaMA-2-7B $\times$ WikiText-2: PPL $9.7$ but R-1 still $0.11$).

\noindent\textbf{Pareto front (Figure~\ref{fig:exp14_pareto}).}
Aggregating across datasets, each defence becomes a single point per
model panel, with error bars over dataset variance.
\sysname is the unique point inside the bottom-left ideal region
($\text{R-1}{<}0.05$, PPL $\leq$ undefended) on all three model scales.
The DP-SGD family traces a curve along the privacy axis but never crosses
the broken-attack line, and at the highest noise level the gradient
becomes too noisy to support stable training on Rotten Tomatoes
(visible as the long PPL whisker on GPT-2 $\sigma{=}0.1$).
Pruning and Soteria both sit far from the ideal corner.
The visual gap between \sysname and the closest competitor
on the privacy axis ($\geq 0.2$ on every panel) is exactly the
``zero versus non-zero recovery'' separation predicted by the
dual-channel analysis (Theorem~\ref{thm:channel2}):
no perturbation-only defence can reach R-1${=}0$ because the binary
embedding signal survives any sub-uniform noise budget.
 \section{Discussion}
\label{sec:discussion}

\subsection{Broader Implications}

\sysname is complementary to, rather than a replacement for, existing FL privacy mechanisms.
Unlike DP-SGD~\cite{dp_fl} and gradient pruning~\cite{dlg}, which add noise uniformly at a cost to utility, \sysname eliminates signal \emph{structurally} only in the attack-relevant channels.
Unlike data-level obfuscation, it trains on the original data and preserves the learning objective.
It is composable with SecAgg~\cite{bell2020secagg} for stronger threat models: SecAgg's cryptographic protection is independent of gradient structure, so the two layers address orthogonal threat surfaces.

\textbf{Communication, memory, and runtime.}
\revdel{Attention freezing yields a secondary communication saving (attention gradients, $\sim$33\% of GPT-2 parameter count, need not be transmitted), partially offset by the dense $V\times d$ embedding export (see Limitation~\ref{lim:dense-grad}).
Attention freezing also acts as a structural regulariser in the PEFT sense~\cite{lora,adapter_tuning}: preserving pretrained attention patterns prevents overfitting to sharp task-specific minima, which explains the scale-dependent PPL improvements in Table~\ref{tab:dager_ppl_exp09}---GPT-2-class models see negligible change, while larger models with stronger pretrained attention show substantial reductions.}
\begin{revaddblock}
\begin{table}[t]
  \ifrevisiontrack\color{blue}\fi
  \caption{\textbf{Overhead comparison.} Absolute step time, peak allocated
  memory, and dense gradient payload across GPT-2, GPT-2-XL, and LLaMA-2-7B
  (ranges). Multipliers relative to undefended training are in parentheses.
  Lower is better.\revtag{3,5}}
  \label{tab:overhead-comparison}
  \centering
  \footnotesize
  \setlength{\tabcolsep}{2.4pt}
  \begin{tabular}{@{}lccc@{}}
    \toprule
    Defence & Step time & Peak memory & Dense payload \\
    \midrule
    Undefended &
      \makecell[c]{$19.4$--$97.1$\,ms\\($1.00\times$)} &
      \makecell[c]{$2.24$--$50.5$\,GiB\\($1.00\times$)} &
      \makecell[c]{$0.46$--$12.6$\,GiB\\($1.00\times$)} \\
    DP-SGD$^{*}$ &
      \makecell[c]{$146$--$362$\,ms\\($2.31$--$7.53\times$)} &
      \makecell[c]{$3.01$--$75.4$\,GiB\\($1.35$--$1.49\times$)} &
      \makecell[c]{$0.46$--$12.6$\,GiB\\($1.00\times$)} \\
    Pruning (90\%) &
      \makecell[c]{$26.0$--$179$\,ms\\($1.34$--$1.84\times$)} &
      \makecell[c]{$2.24$--$50.6$\,GiB\\($1.00\times$)} &
      \makecell[c]{$0.46$--$12.6$\,GiB\\($1.00\times$)} \\
    Soteria &
      \makecell[c]{$20.7$--$98.8$\,ms\\($1.02$--$1.07\times$)} &
      \makecell[c]{$2.24$--$50.5$\,GiB\\($1.00\times$)} &
      \makecell[c]{$0.46$--$12.6$\,GiB\\($1.00\times$)} \\
    \sysname &
      \makecell[c]{$21.4$--$118$\,ms\\($1.10$--$1.22\times$)} &
      \makecell[c]{$2.01$--$38.5$\,GiB\\($0.76$--$0.90\times$)} &
      \makecell[c]{$0.36$--$8.55$\,GiB\\($0.68$--$0.77\times$)} \\
    \bottomrule
  \end{tabular}
  \vspace{2pt}

  \raggedright\scriptsize $^{*}$Exact per-example microbatch DP-SGD
  ($C{=}1,\sigma{=}1$); optimized implementations may be faster.
\end{table}

Table~\ref{tab:overhead-comparison} reports absolute costs and relative
multipliers. \sysname takes $21.4$--$118$\,ms per step
($1.10$--$1.22\times$ undefended). It is faster than the measured pruning and
DP-SGD baselines. Freezing attention reduces
peak memory to $2.01$--$38.5$\,GiB and dense payload to $0.36$--$8.55$\,GiB.
Freezing can also act as a structural regulariser~\cite{lora,adapter_tuning} which
keeps the pretrained attention patterns and reduces the number of parameters
changed during fine-tuning. This may reduce overfitting and help explain the
lower PPL observed for the larger models.\revtag{1,3,5}
\end{revaddblock}

\subsection{Limitations}

\ifrevisiontrack
{\color{red}
\begin{remark}[Threat model scope]
\label{rem:threat-scope-old}
Two important settings are \emph{outside} the scope of \sysname's threat model:
(1)~\textbf{Formally adaptive adversary}: \sysname covers the MLP-SVD adaptive
extension (Channel~3) and additionally stress-tests MLP projection, LayerNorm,
and untied-LM-head probes. A fully adaptive adversary that co-designs attack and
model remains an open problem.
(2)~\textbf{FedAvg multi-step setting}: all experiments use FedSGD (one gradient
step per round); the calibration is derived for a single backward pass and may
differ when clients accumulate gradients over multiple local steps.
\end{remark}

\begin{enumerate}[leftmargin=1.5em]
  \item \textbf{Residual theoretical vulnerability: Robust PCA / Marchenko--Pastur denoising.}
  A remaining theoretical threat is an adversary applying Robust PCA or
  Marchenko--Pastur denoising to the uniformised gradient. A formal lower bound
  is left to future work.
  \item \textbf{Dense embedding gradient transmission.}
  Embedding uniformisation converts a sparse gradient into a dense $V\times d$
  export. Very-large-vocabulary models face a genuine cost increase that we
  flag as a direction for future optimisation.
\end{enumerate}}
\fi

\begin{revaddblock}
\begin{remark}[Threat model scope]
\label{rem:threat-scope}
Our experiments use single-client, single-step FedSGD. Multi-step FedAvg,
client collusion, repeated-round temporal averaging, and an active server that
modifies the model or objective are not evaluated. Testing calibration under
these settings is a direct next step.
\revtag{2}
\end{remark}

\begin{enumerate}[leftmargin=1.5em, itemsep=1pt, topsep=2pt]
  \item \label{lim:scale}\label{lim:adaptive}\label{lim:seeds}
  \textbf{Attacks and guarantees.} \sysname has no $(\varepsilon,\delta)$-DP
  guarantee and does not fully defeat optimisation-based GIAs
  or Robust-PCA/Marchenko--Pastur denoising, or attacks that
  aggregate weak signals across rounds or unexplored layers. Future work should
  combine it with DP/SecAgg and evaluate these adaptive attacks.

  \item \label{lim:dense-grad}\textbf{Systems cost.} Flooding makes the
  embedding gradient dense and prevents row-sparse export. Attention freezing
  offsets this in our three measured models, reducing total dense payload by
  23--32\%. Although \sysname reduces the total dense payload in these models,
  this benefit may not hold for models with much larger vocabularies or systems
  that otherwise transmit embedding gradients sparsely. Evaluating
  privacy-aware compression in these settings is future work.
  \revtag{2}
\end{enumerate}
\end{revaddblock}
 \section{Conclusion}
\label{sec:conclusion}

Closed-form gradient inversion attacks---\textsc{SPEAR}, \textsc{DAGER},
\textsc{GRAB}, and their descendants---share a common structural premise:
that the gradient of a federated LLM fine-tuning step decomposes into a
small number of analytically recoverable channels, each exploiting a
specific linearity or sparsity in the network architecture.
Prior defences that perturb the gradient (DP-SGD, pruning) or modify the
data do not close these channels structurally; they only raise the noise
floor, leaving the channel signal in place for a sufficiently powerful
denoiser.

\sysname starts from a different question: which structural properties of
the gradient does the attack actually require, and can each of them be
eliminated at the source?
For the attention-MLP decoder architectures that dominate current LLM
fine-tuning, the answer is three channels.
Freezing attention parameters closes the first channel without touching
the loss surface; calibrated dense-noise uniformisation of the embedding
gradient destroys the sparsity signal in the second; an analogous
uniformisation of the MLP expansion gradient submerges the third beneath
the random-matrix noise bulk.
The forward pass and the learning objective are untouched.

Beyond the specific mechanism, our analysis suggests that analytical
gradient inversion is fundamentally channel-structured.
Under this view, FL privacy becomes a completeness problem: a defence
must account for every independent channel an adaptive adversary can
pivot to, not merely the one demonstrated by the current attack.
Open questions include whether attention freezing satisfies formal
$(\varepsilon,\delta)$-DP guarantees, how it composes with DP-SGD, and
how the single-step calibration extends to multi-step FedAvg.

\section*{Reproducibility}

All implementation code, experiment scripts, and result logs are included in the supplementary material.

\ifdefined\ArxivVersion
\section*{Acknowledgment}
This work was supported by the Department of Environment and Science of
Queensland State Government under Quantum 2032 Challenge Program
(Project \#Q2032001).
\fi

\bibliographystyle{IEEEtran}

\begin{thebibliography}{10}
\providecommand{\url}[1]{#1}
\csname url@samestyle\endcsname
\providecommand{\newblock}{\relax}
\providecommand{\bibinfo}[2]{#2}
\providecommand{\BIBentrySTDinterwordspacing}{\spaceskip=0pt\relax}
\providecommand{\BIBentryALTinterwordstretchfactor}{4}
\providecommand{\BIBentryALTinterwordspacing}{\spaceskip=\fontdimen2\font plus
\BIBentryALTinterwordstretchfactor\fontdimen3\font minus
  \fontdimen4\font\relax}
\providecommand{\BIBforeignlanguage}[2]{{\expandafter\ifx\csname l@#1\endcsname\relax
\typeout{** WARNING: IEEEtran.bst: No hyphenation pattern has been}\typeout{** loaded for the language `#1'. Using the pattern for}\typeout{** the default language instead.}\else
\language=\csname l@#1\endcsname
\fi
#2}}
\providecommand{\BIBdecl}{\relax}
\BIBdecl

\bibitem{fl_original}
B.~McMahan, E.~Moore, D.~Ramage, S.~Hampson, and B.~A. y~Arcas,
  ``Communication-efficient learning of deep networks from decentralized
  data,'' in \emph{Artificial intelligence and statistics}.\hskip 1em plus
  0.5em minus 0.4em\relax Pmlr, 2017, pp. 1273--1282.

\bibitem{fedavg}
P.~Kairouz and H.~B. McMahan, ``Advances and open problems in federated
  learning,'' \emph{Foundations and trends in machine learning}, vol.~14, no.
  1-2, pp. 1--210, 2021.

\bibitem{dlg}
L.~Zhu, Z.~Liu, and S.~Han, ``Deep leakage from gradients,'' \emph{Advances in
  neural information processing systems}, vol.~32, 2019.

\bibitem{geiping2021inverting}
J.~Geiping, H.~Bauermeister, H.~Dr{\"o}ge, and M.~Moeller, ``Inverting
  gradients-how easy is it to break privacy in federated learning?''
  \emph{Advances in neural information processing systems}, vol.~33, pp.
  16\,937--16\,947, 2020.

\bibitem{dager}
I.~Petrov, D.~I. Dimitrov, M.~Baader, M.~N. M{\"u}ller, and M.~Vechev, ``Dager:
  Exact gradient inversion for large language models,'' \emph{Advances in
  neural information processing systems}, vol.~37, pp. 87\,801--87\,830, 2024.

\bibitem{lamp}
M.~Balunovic, D.~Dimitrov, N.~Jovanovi{\'c}, and M.~Vechev, ``Lamp: Extracting
  text from gradients with language model priors,'' \emph{Advances in Neural
  Information Processing Systems}, vol.~35, pp. 7641--7654, 2022.

\bibitem{grab}
X.~Feng, Z.~Ma, Z.~Wang, E.~J. Chegne, M.~Ma, A.~Abuadbba, and G.~Bai,
  ``Uncovering gradient inversion risks in practical language model training,''
  in \emph{Proceedings of the 2024 on ACM SIGSAC Conference on Computer and
  Communications Security}, 2024, pp. 3525--3539.

\bibitem{somp}
Y.~Li and Q.~Li, ``Somp: Scalable gradient inversion for large language models
  via subspace-guided orthogonal matching pursuit,'' 03 2026.

\bibitem{fedspyllm}
S.~I.~A. Meerza, F.~Wang, and J.~Liu, ``Fedspy-llm: Towards scalable and
  generalizable data reconstruction attacks from gradients on llms,'' 04 2026.

\bibitem{tag}
J.~Deng, Y.~Wang, J.~Li, C.~Wang, C.~Shang, H.~Liu, S.~Rajasekaran, and
  C.~Ding, ``Tag: Gradient attack on transformer-based language models,'' in
  \emph{Findings of the association for computational linguistics: EMNLP 2021},
  2021, pp. 3600--3610.

\bibitem{dp_fl}
M.~Abadi, A.~Chu, I.~Goodfellow, H.~B. McMahan, I.~Mironov, K.~Talwar, and
  L.~Zhang, ``Deep learning with differential privacy,'' in \emph{Proceedings
  of the 2016 ACM SIGSAC conference on computer and communications security},
  2016, pp. 308--318.

\bibitem{yu2022dpfinetuning}
D.~Yu, S.~Naik, A.~Backurs, S.~Gopi, H.~A. Inan, G.~Kamath, J.~Kulkarni, Y.~T.
  Lee, A.~Manoel, L.~Wutschitz \emph{et~al.}, ``Differentially private
  fine-tuning of language models,'' \emph{arXiv preprint arXiv:2110.06500},
  2021.

\bibitem{soteria}
J.~Sun, A.~Li, B.~Wang, H.~Yang, H.~Li, and Y.~Chen, ``Soteria: Provable
  defense against privacy leakage in federated learning from representation
  perspective,'' in \emph{Proceedings of the IEEE/CVF conference on computer
  vision and pattern recognition}, 2021, pp. 9311--9319.

\bibitem{lora}
E.~J. Hu, Y.~Shen, P.~Wallis, Z.~Allen-Zhu, Y.~Li, S.~Wang, L.~Wang, W.~Chen
  \emph{et~al.}, ``Lora: Low-rank adaptation of large language models.''
  \emph{Iclr}, vol.~1, no.~2, p.~3, 2022.

\bibitem{adapter_tuning}
N.~Houlsby, A.~Giurgiu, S.~Jastrzebski, B.~Morrone, Q.~De~Laroussilhe,
  A.~Gesmundo, M.~Attariyan, and S.~Gelly, ``Parameter-efficient transfer
  learning for nlp,'' in \emph{International conference on machine
  learning}.\hskip 1em plus 0.5em minus 0.4em\relax PMLR, 2019, pp. 2790--2799.

\bibitem{idlg}
B.~Zhao, K.~R. Mopuri, and H.~Bilen, ``idlg: Improved deep leakage from
  gradients,'' \emph{arXiv preprint arXiv:2001.02610}, 2020.

\bibitem{april}
J.~Lu, X.~S. Zhang, T.~Zhao, X.~He, and J.~Cheng, ``April: Finding the
  achilles' heel on privacy for vision transformers,'' in \emph{Proceedings of
  the IEEE/CVF conference on computer vision and pattern recognition}, 2022,
  pp. 10\,051--10\,060.

\bibitem{spear}
D.~I. Dimitrov, M.~Baader, M.~N. M{\"u}ller, and M.~Vechev, ``Spear: Exact
  gradient inversion of batches in federated learning,'' \emph{Advances in
  Neural Information Processing Systems}, vol.~37, pp. 106\,768--106\,799,
  2024.

\bibitem{rgap}
J.~Zhu and M.~Blaschko, ``R-gap: Recursive gradient attack on privacy,''
  \emph{arXiv preprint arXiv:2010.07733}, 2020.

\bibitem{robbing}
L.~Fowl, J.~Geiping, W.~Czaja, M.~Goldblum, and T.~Goldstein, ``Robbing the
  fed: Directly obtaining private data in federated learning with modified
  models,'' \emph{arXiv preprint arXiv:2110.13057}, 2021.

\bibitem{decep}
L.~Fowl, J.~Geiping, S.~Reich, Y.~Wen, W.~Czaja, M.~Goldblum, and T.~Goldstein,
  ``Decepticons: Corrupted transformers breach privacy in federated learning
  for language models,'' \emph{arXiv preprint arXiv:2201.12675}, 2022.

\bibitem{dpsgd}
M.~Abadi, A.~Chu, I.~Goodfellow, H.~B. McMahan, I.~Mironov, K.~Talwar, and
  L.~Zhang, ``Deep learning with differential privacy,'' in \emph{Proceedings
  of the 2016 ACM SIGSAC conference on computer and communications security},
  2016, pp. 308--318.

\bibitem{gradient_compression}
\BIBentryALTinterwordspacing
Y.~Lin, S.~Han, H.~Mao, Y.~Wang, and B.~Dally, ``Deep gradient compression:
  Reducing the communication bandwidth for distributed training,'' in
  \emph{International Conference on Learning Representations}, 2018. [Online].
  Available: \url{https://openreview.net/forum?id=SkhQHMW0W}
\BIBentrySTDinterwordspacing

\bibitem{instahide}
Y.~Huang, Z.~Song, K.~Li, and S.~Arora, ``Instahide: Instance-hiding schemes
  for private distributed learning,'' in \emph{International conference on
  machine learning}.\hskip 1em plus 0.5em minus 0.4em\relax PMLR, 2020, pp.
  4507--4518.

\bibitem{carlini2021instahide}
N.~Carlini, S.~Deng, S.~Garg, S.~Jha, S.~Mahloujifar, M.~Mahmoody, A.~Thakurta,
  and F.~Tram{\`e}r, ``Is private learning possible with instance encoding?''
  in \emph{2021 IEEE Symposium on Security and Privacy (SP)}.\hskip 1em plus
  0.5em minus 0.4em\relax IEEE, 2021, pp. 410--427.

\bibitem{dpwe}
\BIBentryALTinterwordspacing
O.~Feyisetan, B.~Balle, T.~Drake, and T.~Diethe, ``Privacy- and
  utility-preserving textual analysis via calibrated multivariate
  perturbations,'' in \emph{Proceedings of the 13th International Conference on
  Web Search and Data Mining}, ser. WSDM '20.\hskip 1em plus 0.5em minus
  0.4em\relax New York, NY, USA: Association for Computing Machinery, 2020, p.
  178–186. [Online]. Available: \url{https://doi.org/10.1145/3336191.3371856}
\BIBentrySTDinterwordspacing

\bibitem{secagg}
K.~Bonawitz, V.~Ivanov, B.~Kreuter, A.~Marcedone, H.~B. McMahan, S.~Patel,
  D.~Ramage, A.~Segal, and K.~Seth, ``Practical secure aggregation for
  privacy-preserving machine learning,'' in \emph{proceedings of the 2017 ACM
  SIGSAC Conference on Computer and Communications Security}, 2017, pp.
  1175--1191.

\bibitem{bell2020secagg}
K.~H. Li, P.~P.~B. de~Gusm{\~a}o, D.~J. Beutel, and N.~D. Lane, ``Secure
  aggregation for federated learning in flower,'' in \emph{Proceedings of the
  2nd ACM International Workshop on Distributed Machine Learning}, 2021, pp.
  8--14.

\bibitem{gpt2}
A.~Radford, J.~Wu, R.~Child, D.~Luan, D.~Amodei, I.~Sutskever \emph{et~al.},
  ``Language models are unsupervised multitask learners,'' \emph{OpenAI blog},
  vol.~1, no.~8, p.~9, 2019.

\bibitem{bert}
J.~Lee and K.~Toutanova, ``Pre-training of deep bidirectional transformers for
  language understanding,'' \emph{arXiv preprint arXiv:1810.04805}, vol.~3,
  no.~8, pp. 4171--4186, 2018.

\bibitem{demoura2021lean4}
L.~d. Moura and S.~Ullrich, ``The lean 4 theorem prover and programming
  language,'' in \emph{International Conference on Automated Deduction}.\hskip
  1em plus 0.5em minus 0.4em\relax Springer, 2021, pp. 625--635.

\bibitem{llama2}
H.~Touvron, L.~Martin, K.~Stone, P.~Albert, A.~Almahairi, Y.~Babaei,
  N.~Bashlykov, S.~Batra, P.~Bhargava, S.~Bhosale \emph{et~al.}, ``Llama 2:
  Open foundation and fine-tuned chat models,'' \emph{arXiv preprint
  arXiv:2307.09288}, 2023.

\bibitem{llama3}
A.~Grattafiori, A.~Dubey, A.~Jauhri, A.~Pandey, A.~Kadian, A.~Al-Dahle,
  A.~Letman, A.~Mathur, A.~Schelten, A.~Vaughan \emph{et~al.}, ``The llama 3
  herd of models,'' \emph{arXiv preprint arXiv:2407.21783}, 2024.

\bibitem{gemma2}
G.~Team, M.~Riviere, S.~Pathak, P.~G. Sessa, C.~Hardin, S.~Bhupatiraju,
  L.~Hussenot, T.~Mesnard, B.~Shahriari, A.~Ram{\'e} \emph{et~al.}, ``Gemma 2:
  Improving open language models at a practical size, 2024,'' \emph{URL
  https://arxiv. org/abs/2408.00118}, vol.~1, no.~3, 2024.

\bibitem{rottenTomatoes}
B.~Pang and L.~Lee, ``Seeing stars: Exploiting class relationships for
  sentiment categorization with respect to rating scales,'' in
  \emph{Proceedings of the 43rd annual meeting of the association for
  computational linguistics (ACL’05)}, 2005, pp. 115--124.

\bibitem{emotionDataset}
E.~Saravia, H.-C.~T. Liu, Y.-H. Huang, J.~Wu, and Y.-S. Chen, ``Carer:
  Contextualized affect representations for emotion recognition,'' in
  \emph{Proceedings of the 2018 conference on empirical methods in natural
  language processing}, 2018, pp. 3687--3697.

\bibitem{finansBenchmark}
P.~Malo, A.~Sinha, P.~Korhonen, J.~Wallenius, and P.~Takala, ``Good debt or bad
  debt: Detecting semantic orientations in economic texts,'' \emph{Journal of
  the Association for Information Science and Technology}, vol.~65, no.~4, pp.
  782--796, 2014.

\bibitem{wikitext2}
S.~Merity, C.~Xiong, J.~Bradbury, and R.~Socher, ``Pointer sentinel mixture
  models,'' \emph{arXiv preprint arXiv:1609.07843}, 2016.

\bibitem{dialogsum}
Y.~Chen, Y.~Liu, L.~Chen, and Y.~Zhang, ``Dialogsum: A real-life scenario
  dialogue summarization dataset,'' in \emph{Findings of the Association for
  Computational Linguistics: ACL-IJCNLP 2021}, 2021, pp. 5062--5074.

\bibitem{cnndm}
K.~M. Hermann, T.~Kocisky, E.~Grefenstette, L.~Espeholt, W.~Kay, M.~Suleyman,
  and P.~Blunsom, ``Teaching machines to read and comprehend,'' \emph{Advances
  in neural information processing systems}, vol.~28, 2015.

\bibitem{adamw}
I.~Loshchilov and F.~Hutter, ``Decoupled weight decay regularization,''
  \emph{arXiv preprint arXiv:1711.05101}, 2017.

\bibitem{rouge}
C.-Y. Lin, ``Rouge: A package for automatic evaluation of summaries,'' in
  \emph{Text summarization branches out}, 2004, pp. 74--81.

\bibitem{meteor}
S.~Banerjee and A.~Lavie, ``Meteor: An automatic metric for mt evaluation with
  improved correlation with human judgments,'' in \emph{Proceedings of the acl
  workshop on intrinsic and extrinsic evaluation measures for machine
  translation and/or summarization}, 2005, pp. 65--72.

\bibitem{ajalloeian2020biased}
A.~Ajalloeian and S.~U. Stich, ``On the convergence of sgd with biased
  gradients,'' \emph{arXiv preprint arXiv:2008.00051}, 2020.

\bibitem{huggingface}
T.~Wolf, L.~Debut, V.~Sanh, J.~Chaumond, C.~Delangue, A.~Moi, P.~Cistac,
  T.~Rault, R.~Louf, M.~Funtowicz \emph{et~al.}, ``Transformers:
  State-of-the-art natural language processing,'' in \emph{Proceedings of the
  2020 conference on empirical methods in natural language processing: system
  demonstrations}, 2020, pp. 38--45.

\end{thebibliography}

\appendix

\subsection{Proofs}
\label{app:proofs}

\paragraph{Proof of Theorem~\ref{thm:channel1} (Channel~1 Elimination)}
\label{app:proof_channel1}

\begin{proof}
We provide a detailed proof of Channel~1 elimination under attention freezing.

Consider a GPT-2-class model $f_\theta$ with $L$ transformer blocks.
Each block $\ell$ contains attention parameters $\mathbf{W}_{\mathrm{QKV}}^{(\ell)} \in \mathbb{R}^{d \times 3d}$ and $\mathbf{W}_o^{(\ell)} \in \mathbb{R}^{d \times d}$, among others.
The training loss is $\mathcal{L} = -\sum_{i=1}^{T-1} \log f_\theta(t_{i+1} | t_{\leq i})$.

Modern automatic differentiation (AD) frameworks (PyTorch, JAX) implement the chain rule by traversing a computational graph.
When a parameter $p$ has $p.\texttt{requires\_grad} = \texttt{False}$, the AD engine:
(a)~excludes $p$ from the gradient tape during the forward pass, and
(b)~skips gradient accumulation for $p$ during the backward pass.
Equivalently, the trainable parameter space omits $p$, and the exported gradient
on the full parameter list is the zero extension of the reduced-space gradient.
The result is
$\widehat{\nabla}_{\mathbf{W}_o^{(\ell)}} \mathcal{L} = \mathbf{0}$ by construction.

Let $\mathbf{G}^{(\ell)} = \widehat{\nabla}_{\mathbf{W}_o^{(\ell)}} \mathcal{L} = \mathbf{0} \in \mathbb{R}^{d \times d}$.
The range of the zero matrix is the zero subspace.
Equivalently, every singular value of $\mathbf{G}^{(\ell)}$ is zero, so the
non-zero left-singular-vector basis in Definition~\ref{def:channel1} has no
columns.
Its projection matrix is therefore:
\[
  \mathbf{U}^{(\ell)}{\mathbf{U}^{(\ell)}}^\top = \mathbf{0}.
\]

For any vocabulary token $v$ with embedding $\mathbf{e}_v \in \mathbb{R}^d$:
\[
  r_v^{(\ell)} = \frac{\|\mathbf{e}_v - \mathbf{0} \cdot \mathbf{e}_v\|}{\|\mathbf{e}_v\|} = \frac{\|\mathbf{e}_v\|}{\|\mathbf{e}_v\|} = 1.
\]
Since $r_v^{(\ell)} = 1$ for all $v \in \mathcal{V}$ and all $\ell \in [L]$, the ranking is uniform and provides zero bits of information about the input token set.

Crucially, the forward pass computation $\mathbf{h}^{(\ell)} = \text{Attn}(\mathbf{Q}^{(\ell)}, \mathbf{K}^{(\ell)}, \mathbf{V}^{(\ell)}) + \mathbf{h}^{(\ell-1)}$ uses the \emph{values} of $\mathbf{W}_{\mathrm{QKV}}^{(\ell)}$ and $\mathbf{W}_o^{(\ell)}$, which are the pretrained weights.
Only the gradient is zeroed; the weights themselves are not modified.
The attention mechanism continues to compute contextual representations that are fed to the (trainable) MLP layers.
\end{proof}

\paragraph{Proof of Theorem~\ref{thm:channel2} (Channel~2 Moment Bound)}
\label{app:proof_channel2}

\begin{proof}
We provide a detailed proof under Assumption~\ref{asm:norm-dev} ($\|g_v\|_2 \leq C\bar{m}$
for all active rows $v \in \mathcal{S}$, $C \geq 1$).

Let $\nabla \wte \in \mathbb{R}^{V \times d}$ be the embedding gradient for sequence $s = (t_1, \ldots, t_T)$.
Let $\mathcal{S}=\mathcal{S}(s)=\{v\in\mathcal{V}:v\text{ appears in }s\}$.
By the structural property of embedding layers, row $v$ has
$\|\nabla \wte[v,:]\|_2 > 0$ iff $v \in \mathcal{S}$.
Because $\mathcal{S}$ is nonempty, $\bar{m}>0$.
After \sysname flooding with parameters $(\lambda, \rho)$:
\[
  \tilde{\nabla}\wte[v,:] = \begin{cases}
    \rho \cdot \nabla \wte[v,:] + \mathbf{n}_v & v \in \mathcal{S} \\
    \mathbf{n}_v & v \notin \mathcal{S}
  \end{cases}
\]
where $\mathbf{n}_v \sim \mathcal{N}(\mathbf{0}, \sigma^2 \mathbf{I}_d)$ with $\sigma = \lambda \bar{m}$.

Since $\lambda>0$ and $\bar{m}>0$, the Gaussian is non-degenerate.
For $v\notin\mathcal{S}$, $\tilde{\nabla}\wte[v,:]=\mathbf{n}_v$, and
$\Pr[\mathbf{n}_v=\mathbf{0}]=0$.
For $v\in\mathcal{S}$,
$\tilde{\nabla}\wte[v,:]=\rho\nabla\wte[v,:]+\mathbf{n}_v$, and this vector is
zero only if $\mathbf{n}_v=-\rho\nabla\wte[v,:]$; a non-degenerate Gaussian has
probability zero of taking any prescribed point value.
Since $\mathcal{V}$ is finite, the union of these zero-probability events still
has probability zero.
Therefore $\|\tilde{\nabla}\wte[v,:]\|_2 > 0$ for all $v$ simultaneously with probability 1.
The binary signal (zero = absent, non-zero = present) is completely destroyed.

For $v \notin \mathcal{S}$:
\[
  \|\tilde{\nabla}\wte[v,:]\|_2^2 = \|\mathbf{n}_v\|_2^2 = \sum_{j=1}^{d} n_{vj}^2
\]
where $n_{vj} \sim \mathcal{N}(0, \sigma^2)$, so each $n_{vj}^2/\sigma^2 \sim \chi^2(1)$.
Thus $\|\mathbf{n}_v\|_2^2/\sigma^2 \sim \chi^2(d)$ and $\mathbb{E}[\|\mathbf{n}_v\|_2^2] = d\sigma^2 = d\lambda^2\bar{m}^2$.

For $v \in \mathcal{S}$:
Let $\mathbf{g}_v = \nabla \wte[v,:]$. Then:
\[
  \|\tilde{\nabla}\wte[v,:]\|_2^2 = \|\rho \mathbf{g}_v + \mathbf{n}_v\|_2^2 = \rho^2\|\mathbf{g}_v\|_2^2 + 2\rho \mathbf{g}_v^\top \mathbf{n}_v + \|\mathbf{n}_v\|_2^2.
\]
Since $\mathbb{E}[\mathbf{n}_v] = \mathbf{0}$, $\mathbb{E}[\mathbf{g}_v^\top \mathbf{n}_v] = 0$, so:
\[
  \mathbb{E}[\|\tilde{\nabla}\wte[v,:]\|_2^2] = \rho^2\|\mathbf{g}_v\|_2^2 + d\sigma^2.
\]

Under Assumption~\ref{asm:norm-dev}, the worst-case (maximum) excess expected squared norm
of a real-token row over a noise-only row is:
\[
  \Delta_{\max} = \rho^2\|\mathbf{g}_v\|_2^2 \leq \rho^2 C^2 \bar{m}^2.
\]
The noise-only second moment is $d\lambda^2\bar{m}^2>0$.
Dividing gives
\[
  \frac{
      \mathbb{E}[\|\tilde{\nabla}\wte[v,:]\|_2^2]
      - d\lambda^2\bar{m}^2
    }{
      d\lambda^2\bar{m}^2
    }
    \leq \frac{\rho^2 C^2}{d\lambda^2}.
\]
The lower bound $0$ follows from
$\rho^2\|\mathbf{g}_v\|_2^2\geq0$.
\end{proof}

\paragraph{Proof of Theorem~\ref{thm:channel3} (Channel~3 MLP-fc Flooding Moment Bound)}
\label{app:proof_channel3}

\begin{proof}
The MLP-fc flooded gradient is
$\widetilde{\mathbf{G}}^{(\ell)}
=\rho_{\mathrm{fc}}\mathbf{G}^{(\ell)}+\mathbf{N}^{(\ell)}$.
With
$\mathbf{B}^{(\ell)}=(\rho_{\mathrm{fc}}-1)\mathbf{G}^{(\ell)}$,
we have the algebraic identity
\[
  \rho_{\mathrm{fc}}\mathbf{G}^{(\ell)}+\mathbf{N}^{(\ell)}
  = \mathbf{G}^{(\ell)}
    +(\rho_{\mathrm{fc}}-1)\mathbf{G}^{(\ell)}
    +\mathbf{N}^{(\ell)}
  = \mathbf{G}^{(\ell)}+\mathbf{B}^{(\ell)}+\mathbf{N}^{(\ell)}.
\]
Taking expectation conditional on the batch and using
$\mathbb{E}[\mathbf{N}^{(\ell)}]=\mathbf{0}$ gives
$\mathbb{E}[\widetilde{\mathbf{G}}^{(\ell)}-\mathbf{G}^{(\ell)}]
=\mathbf{B}^{(\ell)}$.
Since $\rho_{\mathrm{fc}}\in[0,1]$,
\[
  \|\mathbf{B}^{(\ell)}\|_F
  =\|(\rho_{\mathrm{fc}}-1)\mathbf{G}^{(\ell)}\|_F
  =|\rho_{\mathrm{fc}}-1|\,\|\mathbf{G}^{(\ell)}\|_F
  =(1-\rho_{\mathrm{fc}})\|\mathbf{G}^{(\ell)}\|_F.
\]

Expanding the squared Frobenius norm,
\[
  \|\rho_{\mathrm{fc}}\mathbf{G}^{(\ell)}+\mathbf{N}^{(\ell)}\|_F^2
  =
  \rho_{\mathrm{fc}}^2\|\mathbf{G}^{(\ell)}\|_F^2
  +2\rho_{\mathrm{fc}}\langle \mathbf{G}^{(\ell)},\mathbf{N}^{(\ell)}\rangle_F
  + \|\mathbf{N}^{(\ell)}\|_F^2.
\]
The cross term has zero expectation because
$\mathbf{G}^{(\ell)}$ is fixed conditional on the batch and the entries of
$\mathbf{N}^{(\ell)}$ are zero mean.
The matrix $\mathbf{N}^{(\ell)}\in\mathbb{R}^{d\times4d}$ has $4d^2$
independent entries, each with variance
$\sigma_{\mathrm{fc}}^2
=\lambda_{\mathrm{fc}}^2(\bar{m}_{\mathrm{fc}}^{(\ell)})^2$.
Therefore
\[
  \mathbb{E}[\|\mathbf{N}^{(\ell)}\|_F^2]
  =4d^2\lambda_{\mathrm{fc}}^2
   (\bar{m}_{\mathrm{fc}}^{(\ell)})^2,
\]
and the claimed exact second-moment identity follows.

The excess over the noise-only moment is
\[
  \mathbb{E}[\|\widetilde{\mathbf{G}}^{(\ell)}\|_F^2]
  -4d^2\lambda_{\mathrm{fc}}^2
    (\bar{m}_{\mathrm{fc}}^{(\ell)})^2
  =\rho_{\mathrm{fc}}^2\|\mathbf{G}^{(\ell)}\|_F^2.
\]
This quantity is non-negative.
If
$\|\mathbf{G}^{(\ell)}\|_F
\leq C_{\mathrm{fc}}\bar{m}_{\mathrm{fc}}^{(\ell)}$, then it is at most
$\rho_{\mathrm{fc}}^2C_{\mathrm{fc}}^2
(\bar{m}_{\mathrm{fc}}^{(\ell)})^2$.
Dividing by the positive noise-only moment
$4d^2\lambda_{\mathrm{fc}}^2(\bar{m}_{\mathrm{fc}}^{(\ell)})^2$
gives Eq.~\ref{eq:channel3_relative_excess}.
\end{proof}

\begin{figure*}[t]
  \centering
  \includegraphics[width=\textwidth]{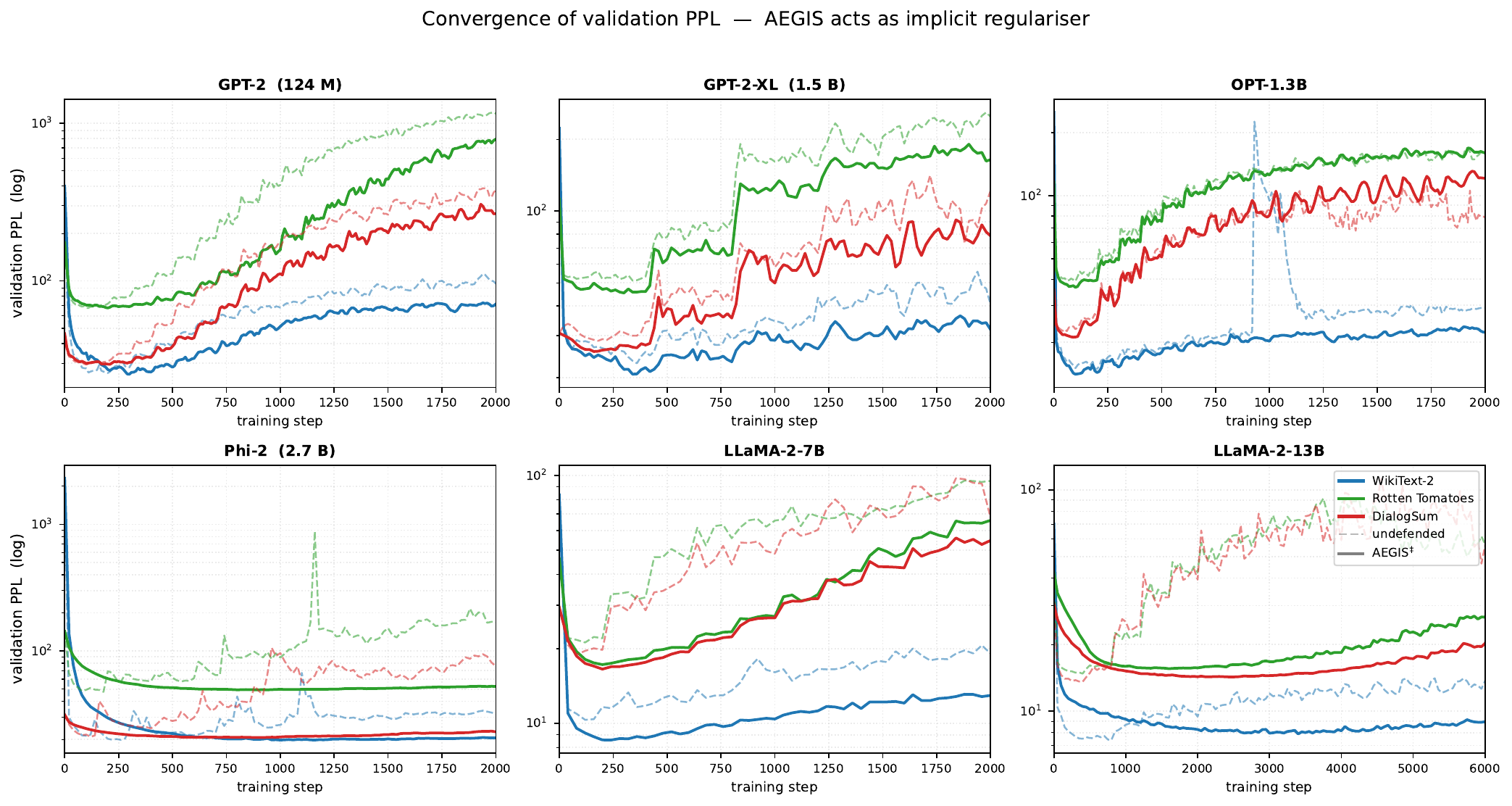}
  \caption{\textbf{Convergence of validation perplexity during fine-tuning
  across six models (124\,M -- 13\,B).}
  Solid lines: \sysname{}$^{\ddagger}$; dashed lines: undefended baseline.
  Colours: \textcolor[HTML]{1f77b4}{WikiText-2},
  \textcolor[HTML]{2ca02c}{Rotten Tomatoes},
  \textcolor[HTML]{d62728}{DialogSum}.
  \sysname matches or \emph{improves} the PPL trajectory for all models
  up to 7\,B\@.
  LLaMA-2-13B shows a modest PPL increase (${\sim}\!8\%$), attributable to
  its SGD optimiser (required at 13\,B scale for memory).}
  \label{fig:exp12_convergence}
\end{figure*}

\begin{figure*}[t]
  \centering
  \includegraphics[width=\textwidth]{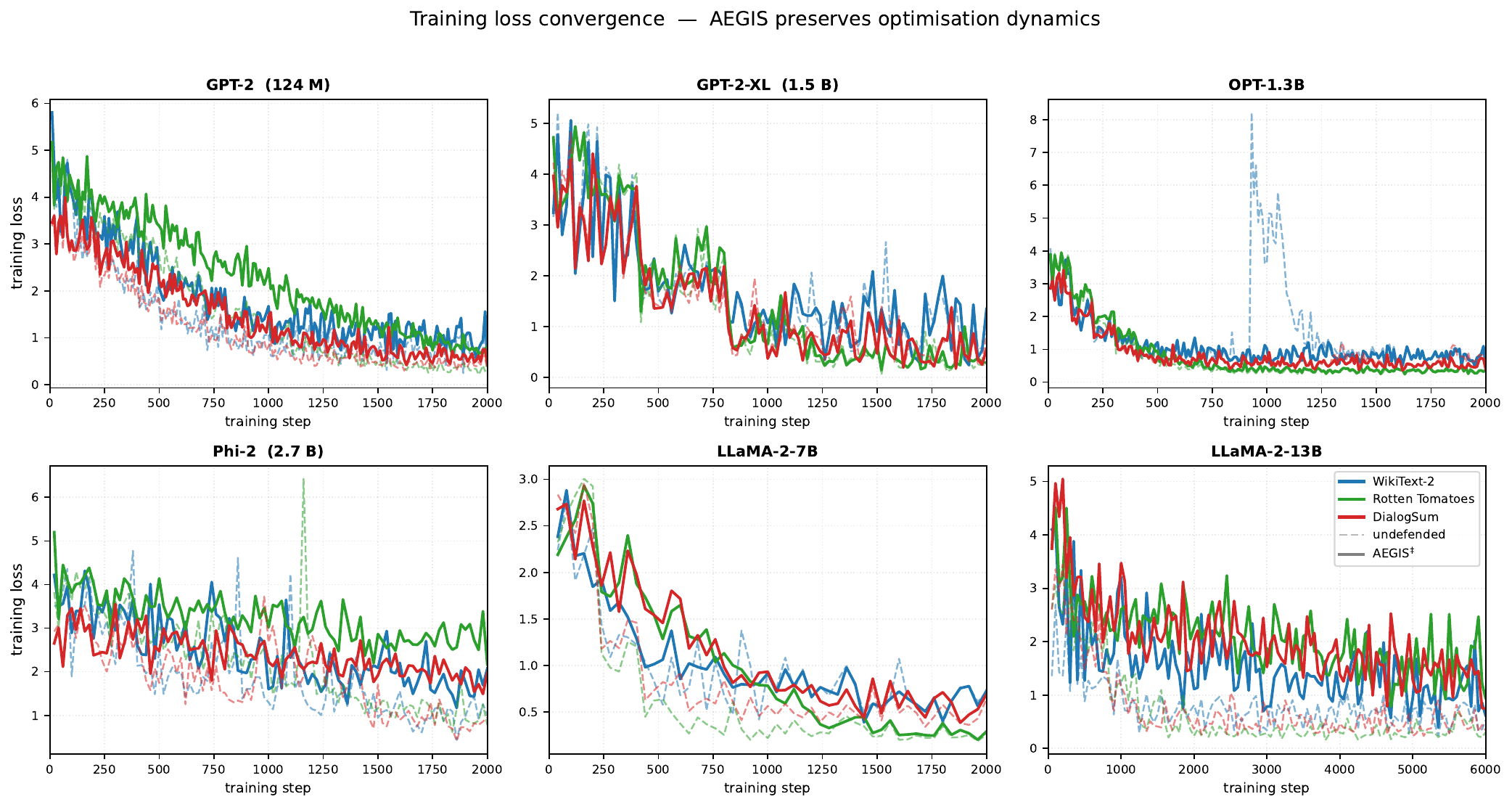}
  \caption{\textbf{Training loss convergence across six models.}
  Same layout as Figure~\ref{fig:exp12_convergence}.
  \sysname (solid) tracks the undefended loss trajectory (dashed),
  confirming that gradient uniformisation does not impede optimisation.}
  \label{fig:exp12_trainloss}
\end{figure*}

\begin{proposition}[Structural Biased-Gradient Decomposition]
\label{prop:convergence}
Let $\mathbf{g}_k=\nabla_{\wte}\mathcal{L}(\theta_k;s)$ be the embedding-block
gradient at step $k$, let $\rho\in[0,1]$, and suppose \sysname exports
\[
  \widehat{\mathbf{g}}_k=\rho\mathbf{g}_k+\mathbf{n}_k
\]
where $\mathbb{E}[\mathbf{n}_k\mid\theta_k]=\mathbf{0}$ and
$\mathbb{E}[\|\mathbf{n}_k\|^2\mid\theta_k]=\sigma^2$.
Define $\mathbf{b}_k:=(\rho-1)\mathbf{g}_k$.
Then
\begin{align*}
  \widehat{\mathbf{g}}_k &= \mathbf{g}_k+\mathbf{b}_k+\mathbf{n}_k,\\
  \mathbb{E}[\widehat{\mathbf{g}}_k-\mathbf{g}_k\mid\theta_k] &= \mathbf{b}_k,\\
  \|\mathbf{b}_k\| &= (1-\rho)\|\mathbf{g}_k\|.
\end{align*}
Thus the embedding update is a stochastic-gradient step with deterministic
relative bias coefficient $1-\rho$ and conditional second moment $\sigma^2$ for
the additive zero-mean noise.
The same decomposition applies verbatim to the MLP-fc uniformisation after replacing
$\mathbf{g}_k$ by the corresponding block gradient and $\rho$ by $\rho_{\mathrm{fc}}$.
\end{proposition}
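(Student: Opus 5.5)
The three claims follow directly from the definition of the export, linearity of conditional expectation, and homogeneity of the norm; no earlier theorem is needed, though the argument parallels the first part of Theorem~\ref{thm:channel3}.

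First, the decomposition is pure algebra. Substituting $\mathbf{b}_k=(\rho-1)\mathbf{g}_k$ gives
\[
  \mathbf{g}_k+\mathbf{b}_k+\mathbf{n}_k=\rho\mathbf{g}_k+\mathbf{n}_k=\widehat{\mathbf{g}}_k .
\]

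Second, for the conditional bias we note that $\mathbf{g}_k=\nabla_{\wte}\mathcal{L}(\theta_k;s)$ is measurable with respect to $\theta_k$, with the sample $s$ treated as fixed, as in the single-step FedSGD threat model. Hence $\mathbf{b}_k$ is also $\theta_k$-measurable, and
\[
  \mathbb{E}[\widehat{\mathbf{g}}_k-\mathbf{g}_k\mid\theta_k]=\mathbf{b}_k+\mathbb{E}[\mathbf{n}_k\mid\theta_k]=\mathbf{b}_k .
\]
If $s$ is itself random, we would condition on $(\theta_k,s)$ instead; the hypothesis on $\mathbf{n}_k$ is then read in that conditional sense. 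This is the one point needing care.

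Third, by absolute homogeneity of the norm, $\|\mathbf{b}_k\|=|\rho-1|\,\|\mathbf{g}_k\|$, and $\rho\in[0,1]$ gives $|\rho-1|=1-\rho$. The conditional second moment $\sigma^2$ of the noise is simply the hypothesis restated, so the interpretive sentence about a biased stochastic-gradient step follows immediately.

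For the MLP-fc extension, we repeat the same three steps with $\mathbf{g}_k$ replaced by $\mathbf{G}^{(\ell)}$, with $\rho$ replaced by $\rho_{\mathrm{fc}}$, and with the Frobenius norm in place of the vector norm. This exactly reproduces part~1 of Theorem~\ref{thm:channel3}. Specialising to the Gaussian noise of Eq.~\ref{eq:flood_fc} also yields the explicit value
\[
  \sigma^2=4d^2\lambda_{\mathrm{fc}}^2\bigl(\bar{m}_{\mathrm{fc}}^{(\ell)}\bigr)^2 .
\]

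For the embedding block under Eq.~\ref{eq:flood}, $\sigma^2=Vd\lambda^2\bar{m}^2$, summing over all $V$ rows. Here there is a subtlety: non-active rows have $\mathbf{g}_k[v,:]=\mathbf{0}$, so $\rho\mathbf{g}_k+\mathbf{n}_k$ matches Eq.~\ref{eq:flood} row-wise. The scale $\bar{m}$, however, depends on $\mathbf{g}_k$. It is therefore $\theta_k$-measurable, which is exactly why the statement conditions on $\theta_k$.
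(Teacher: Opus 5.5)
Your proof is correct and follows the paper's argument essentially step for step: the decomposition by substituting $\mathbf{b}_k=(\rho-1)\mathbf{g}_k$, the conditional bias from $\mathbb{E}[\mathbf{n}_k\mid\theta_k]=\mathbf{0}$, the norm identity via $|\rho-1|=1-\rho$ for $\rho\in[0,1]$, and the verbatim transfer to the MLP-fc block. You also note explicitly that $\mathbf{g}_k$ must be $\theta_k$-measurable (that is, $s$ is fixed, or you condition on $(\theta_k,s)$), which the paper's proof uses without stating it; your specialised values of $\sigma^2$ are correct but not needed for the statement.
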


\begin{proof}
The identity
$\widehat{\mathbf{g}}_k=\mathbf{g}_k+(\rho-1)\mathbf{g}_k+\mathbf{n}_k$
is immediate from the definition
$\widehat{\mathbf{g}}_k=\rho\mathbf{g}_k+\mathbf{n}_k$.
Since $\mathbf{b}_k=(\rho-1)\mathbf{g}_k$, this gives
$\widehat{\mathbf{g}}_k=\mathbf{g}_k+\mathbf{b}_k+\mathbf{n}_k$.
Taking conditional expectation and using
$\mathbb{E}[\mathbf{n}_k\mid\theta_k]=\mathbf{0}$ gives
$\mathbb{E}[\widehat{\mathbf{g}}_k-\mathbf{g}_k\mid\theta_k]=\mathbf{b}_k$.
Finally,
\[
  \|\mathbf{b}_k\|
  =\|(\rho-1)\mathbf{g}_k\|
  =|\rho-1|\,\|\mathbf{g}_k\|
  =(1-\rho)\|\mathbf{g}_k\|,
\]
where the final equality uses $\rho\in[0,1]$.
The same algebra applies to the MLP-fc block after replacing
$\mathbf{g}_k$ by its corresponding block gradient and $\rho$ by
$\rho_{\mathrm{fc}}$.
\end{proof}

\begin{remark}[Connection to convergence theory]
Proposition~\ref{prop:convergence} verifies the local bias/noise decomposition
needed to apply biased-SGD convergence results such as
\cite[Thm.~2]{ajalloeian2020biased}.
We do not rederive a global convergence-rate theorem here; the empirical
training-loss and validation-PPL curves in Figures~\ref{fig:exp12_convergence}
and~\ref{fig:exp12_trainloss} test the resulting optimisation behaviour.
\end{remark}

\begin{remark}[Noise scale and PPL overhead]
The uniformisation noise is calibrated to the mean active-row norm $\bar{m}$,
so the effective noise floor scales with the embedding gradient's contribution
to the total gradient norm.
For GPT-2 this is $\approx 31\%$ of trainable parameters; for LLaMA-2-7B
it is $\approx 2\%$, \revdel{explaining why PPL overhead vanishes at scale.}
\revadd{which is consistent with the lower relative utility cost at scale.}\revtag{1}
\end{remark}

\begin{remark}[Utility--privacy trade-off]
Higher $\lambda$ or lower $\rho$ increases privacy at the cost of more noise
in the gradient update.
At $\lambda{=}1.0$, $\rho{=}0.3$, \sysname achieves ROUGE-1 ${\leq}0.005$
across all 11 models \revdel{with PPL within $1\%$ of the undefended baseline on
GPT-2-class models and improved PPL on larger models.}\revadd{in the primary
attack grid. Table~\ref{tab:ablation} shows that full \sysname keeps GPT-2
utility and improves the reported GPT-2-XL utility.}\revtag{1}
\end{remark}

\subsection{Computational Complexity}
\label{app:complexity}

\revadd{Attention freezing skips the backward pass for attention weights.
Embedding and MLP flooding add only cheap elementwise work of order
$O(Vd+Ld^2)$.
We time one full training step (forward, backward, defence, and optimiser)
on a single H100.
Each run uses sequence length 64, 20 warm-up steps, and 100 timed steps.
On GPT-2, GPT-2-XL, and LLaMA-2-7B, the undefended step takes
19.35/73.24/97.09\,ms.
\sysname takes 21.36/82.36/118.09\,ms, or $1.10$/$1.12$/$1.22\times$ the
baseline.
Peak GPU memory drops from 2.24/24.14/50.51\,GiB to 2.01/18.51/38.51\,GiB.
The dense gradient size drops from 0.464/5.803/12.551\,GiB to
0.358/3.970/8.551\,GiB, because frozen attention gradients are not sent.
}\revtag{3}

\revadd{We use the same models and batch sizes for the baselines.
Exact microbatch DP-SGD ($C{=}1$, $\sigma{=}1$) takes
$7.53$/$2.31$/$3.72\times$ the undefended step time and
$1.35$/$1.49$/$1.49\times$ the peak memory.
Pruning takes $1.34$/$1.53$/$1.84\times$ the step time.
Soteria takes $1.07$/$1.05$/$1.02\times$.
Payload sizes are the sizes of the gradient tensors.
They do not include network transfer, packing, or compression time.
}\revtag{5}

\subsection{Additional Experimental Details}
\label{app:details}

\paragraph{Computational requirements.}
We implement \sysname and run all attacks in PyTorch on a SLURM-managed
GPU cluster.  Each compute node hosts two high-memory GPUs (96\,GB HBM).
All decoder fine-tuning and attack runs were performed on a single GPU per cell.
In practice, less demanding hardware suffices for the smaller checkpoints: GPT-2 small/medium and
BERT-base fit comfortably in 24\,GB, whereas
LLaMA-2-13B at the full BF16 precision used here requires 96\,GB to accommodate weights ($\approx$26\,GB), gradients
($\approx$26\,GB), and optimiser state.  System RAM consumption per
experiment ranged from $\approx$16\,GB (GPT-2 + WikiText-2 single-cell)
to $\approx$96\,GB (LLaMA-2-13B fine-tuning with checkpoint shard caching),
matching the pattern reported by~\cite{dager}.
The full evaluation grid (11 models $\times$ 6 datasets $\times$ 2 defences
$+$ 54 baseline cells $+$ 36 convergence traces $+$ 16 ablation cells $+$
qualitative and adaptive runs) consumed approximately 410 GPU-hours
across $\sim$220 cluster jobs over a six-week window.

\paragraph{Software stack.}
PyTorch~2.5.1 (cu121 wheel), Transformers~4.46.3, Datasets~2.21.0,
SafeTensors, NLTK, rouge-score, sentencepiece, protobuf, scipy, scikit-learn,
and bitsandbytes (used only for 8-bit AdamW on $\geq$13\,B models so that
the optimiser state fits in VRAM); see
the released code repository for exact pinned versions.
All model weights are downloaded from Hugging Face~\cite{huggingface};
LLaMA and Gemma checkpoints require an authenticated \texttt{HF\_TOKEN}.

\paragraph{Per-cell runtime.}
Approximate wall-clock time for a single (model, dataset, defence) cell on
one H100 (fine-tune for $\leq$2{,}000 steps + DAGER attack on 10 sentences):
GPT-2 (124\,M) $\approx$3\,min;
GPT-2-medium / -large $\approx$4--6\,min;
GPT-2-XL (1.5\,B) $\approx$8\,min;
OPT-1.3\,B $\approx$5\,min;
Phi-2 (2.7\,B) $\approx$10\,min;
Gemma-2-2\,B $\approx$10\,min, Gemma-2-9\,B $\approx$25\,min;
LLaMA-2-7\,B / LLaMA-3.1-8\,B $\approx$15\,min;
LLaMA-2-13\,B $\approx$30\,min.
The baselines comparison adds $\approx$15\,min/cell on average for
the optimisation-based defences (DP-SGD, Soteria) due to per-sample
gradient bookkeeping.
The qualitative experiment runs all four attackers (DAGER, TAG,
LAMP, GRAB) on three sentences for $\approx$15\,min/dataset on GPT-2.

\paragraph{Hyperparameter details (training).}
All decoder fine-tuning uses AdamW~\cite{adamw}
($\beta_1{=}0.9$, $\beta_2{=}0.999$, weight decay $0.01$) with learning
rate $5\!\times\!10^{-5}$ for models $\leq$8\,B, and $1\!\times\!10^{-5}$
with bitsandbytes' \texttt{AdamW8bit} for LLaMA-2-13\,B (FP32 optimiser
state would otherwise exceed VRAM).
\revdel{We use BF16 mixed precision for all $\geq$1.5\,B models and FP32 otherwise,}
\revadd{GPT-2 and GPT-2-XL use FP16 autocast with FP32 weights; LLaMA-family
models use BF16 weights,}\revtag{3}
gradient clipping at $\|g\|_2{=}1.0$, linear warmup over the first 10\,\%
of steps, and an early-step cap of 2{,}000 optimiser updates so that wall
clock is balanced across model scales.
Per-model batch sizes (\texttt{MODEL\_BATCH\_SIZE} in the released code):
GPT-2/OPT-1.3B $=$ 8;
GPT-2-medium/-large/Phi-2/LLaMA-7\,B $=$ 4;
GPT-2-XL/Gemma-2-2\,B/LLaMA-2-13\,B $=$ 2.
Maximum sequence length is 64 tokens (the same setting used by~\cite{dager}
in their evaluation), padded to a multiple of~8 for tensor-core efficiency.
A single fine-tuning seed (\texttt{seed=42}) is used per cell;
seed-variance reporting is flagged in Limitation~\ref{lim:seeds}.

\paragraph{Hyperparameter details (\sysname defence).}
The flood scale is $\lambda{=}1.0$ on the embedding gradient with retain ratio
$\rho{=}0.3$ over the residual rows.
The adaptive-threat MLP flood uses $\lambda_{\mathrm{mlp}}{=}8.0$ and retain
ratio $\rho_{\mathrm{mlp}}{=}0$ on each two-dimensional MLP weight gradient,
including both the expansion and projection matrices; LayerNorm gradients are
flooded with $\lambda_{\mathrm{ln}}{=}4.0$.
For $\geq$8\,B BF16 models we halve the MLP/LN flood scales
($\lambda_{\mathrm{mlp}}{=}2.0$, $\lambda_{\mathrm{ln}}{=}1.0$) to stay
within BF16's $\approx$65\,k dynamic range; without this scaling,
gradient spikes occasionally overflow and corrupt the AdamW8bit state.
We additionally skip optimiser steps whose total gradient norm is
non-finite (an inf-gradient one-shot would otherwise poison the
running Adam moments forever).

\paragraph{Frozen parameter count.}
In GPT-2 (124M total parameters, 12 blocks, $d=768$):
\begin{itemize}[leftmargin=1.2em]
  \item Attention parameters (frozen by \sysname): $12 \times (768 \times 2304 + 768 + 768 \times 768 + 768) = 12 \times (1{,}769{,}472 + 768 + 589{,}824 + 768) \approx 28.3\text{M}$ parameters.
  \item Trainable parameters (MLP, LayerNorm, embeddings, LM head): $\approx 95.7\text{M}$ parameters.
  \item Trainable fraction: $\approx 77.2\%$.
\end{itemize}

\subsection{Porting DAGER Beyond GPT-2/BERT}
\label{app:attack_adaptation}

The large-scale experiments use the \emph{official} DAGER codebase~\cite{dager}, which ships with GPT-2 and BERT support only.
We extended it with lightweight architecture-adaptation shims so that the otherwise-unmodified attack core runs against LLaMA-2/3, Gemma-2, OPT, Phi-2, RoBERTa, and DeBERTa-v3; the core analytical attack (per-block SVD, embedding row-norm scoring, $L_1$ filtering, and beam search) is unchanged.

\paragraph{Architecture dispatch.}
Each checkpoint is assigned an architecture tag that the harness uses to
resolve the correct embedding submodule path, special-token table (BOS/pad/EOS
conventions differ across families), and beam search mode (causal decoder vs.\
masked encoder).

\paragraph{Partial-forward patches.}
DAGER's span-residual stage requires truncated per-layer hidden states.
We supply analogous partial-forward patches for each new architecture,
accounting for family-specific residual stream conventions (RoPE,
pre-normalisation variants, embedding scaling) while keeping the patch
purely additive and non-intrusive to the training forward path.

\paragraph{Span-filter thresholds.}
The $L_1$ residual threshold and SVD rank tolerance are set
per architecture family based on network depth and positional encoding type;
padding direction is likewise adjusted to match each model's autoregressive
search convention.

\paragraph{Numerical precision.}
Large models (LLaMA-2/3, Gemma-2) are loaded in BF16.
Gradients are promoted to FP32 before SVD to match the precision of the
official GPT-2 attack and avoid numerical artefacts at BF16 range limits.

\paragraph{Validation.}
For each newly supported architecture the attack is verified on an undefended
checkpoint, confirming R-1 $\geq 0.95$ before proceeding to the defended evaluation.
Analogous adaptations for the optimisation-based attack (GRAB) and
the adaptive MLP-SVD probe are smaller in scope, requiring only the
architecture-specific MLP projection name and the same special-token table.

\subsection{Utility: Pre-Trained vs.\ AEGIS Fine-Tuning}
\label{sec:utility}

We report utility in separate tables for decoder-only causal LMs versus masked LMs:
Tables~\ref{tab:utility_dec_clf} and \ref{tab:utility_dec_gen} (nine decoder checkpoints; classification vs.\ generative bands split for pagination) and Tables~\ref{tab:utility_enc}--\ref{tab:utility_enc_gen} (BERT-base and BERT-large).
The decoder grids use two bands: classification-style supervision (Rotten Tomatoes, Emotion, Financial PhraseBank) and language-modelling / summarisation splits (WikiText-2, DialogSum, CNN/DailyMail).
Each cell stacks four numbers (accuracy, F1, loss, perplexity).
\begin{table*}[tp]
\centering
\caption{\textbf{Decoder-only utility} (classification-style benchmarks; nine models).
\textbf{Pre-trained}: frozen base weights before task fine-tuning.
\textbf{Original}: undefended fine-tuning.
\textbf{\sysname}: \sysname-protected fine-tuning.
Each cell: Acc.\ / F1 / Loss / PPL\@.
Generative splits: Table~\ref{tab:utility_dec_gen}.}
\label{tab:utility_dec_clf}
\begin{tabularx}{\textwidth}{@{} >{\raggedright\arraybackslash}p{1.6cm} c *{9}{>{\centering\arraybackslash}X} @{}}
\TabTopRule
Model &
  & \multicolumn{3}{c}{Rotten Tomatoes}
  & \multicolumn{3}{c}{Emotion}
  & \multicolumn{3}{c}{Financial PhraseBank} \\
\cmidrule(lr){3-5}\cmidrule(lr){6-8}\cmidrule(lr){9-11}
 & & Pre-trained & Original & \sysname
 & Pre-trained & Original & \sysname
 & Pre-trained & Original & \sysname \\
\midrule
GPT-2
  & \makecell[c]{Acc\\F1\\Loss\\PPL}
  & \UC{0.54}{0.41}{5.16}{174.9} & \UC{0.54}{0.41}{4.21}{67.2} & \UC{0.68}{0.67}{4.21}{67.2}
  & \UC{0.25}{0.20}{5.22}{184.8} & \UC{0.25}{0.20}{4.11}{61.1} & \UC{0.21}{0.15}{4.12}{61.4}
  & \UC{0.28}{0.16}{4.55}{94.9}  & \UC{0.28}{0.16}{3.30}{27.1} & \UC{0.32}{0.25}{3.29}{26.8} \\
GPT-2-medium
  & \makecell[c]{Acc\\F1\\Loss\\PPL}
  & \UC{0.75}{0.75}{4.89}{132.3} & \UC{0.75}{0.75}{3.88}{48.3} & \UC{0.75}{0.75}{3.88}{48.6}
  & \UC{0.24}{0.18}{5.08}{160.6} & \UC{0.24}{0.18}{3.93}{50.9} & \UC{0.24}{0.17}{3.92}{50.6}
  & \UC{0.49}{0.41}{4.29}{72.9}  & \UC{0.49}{0.41}{3.07}{21.6} & \UC{0.46}{0.38}{3.01}{20.3} \\
GPT-2-large
  & \makecell[c]{Acc\\F1\\Loss\\PPL}
  & \UC{0.61}{0.55}{4.78}{119.6} & \UC{0.61}{0.55}{3.77}{43.4} & \UC{0.78}{0.77}{3.77}{43.2}
  & \UC{0.26}{0.19}{5.02}{150.8} & \UC{0.26}{0.19}{3.92}{50.6} & \UC{0.27}{0.21}{3.87}{48.1}
  & \UC{0.36}{0.28}{4.18}{65.1}  & \UC{0.36}{0.28}{2.93}{18.7} & \UC{0.52}{0.49}{2.92}{18.5} \\
GPT-2-XL
  & \makecell[c]{Acc\\F1\\Loss\\PPL}
  & \UC{0.75}{0.74}{4.72}{112.2} & \UC{0.75}{0.74}{3.97}{53.1} & \UC{0.70}{0.68}{3.80}{44.7}
  & \UC{0.28}{0.24}{4.97}{144.3} & \UC{0.28}{0.24}{4.09}{59.9} & \UC{0.26}{0.20}{3.99}{54.1}
  & \UC{0.29}{0.18}{4.13}{61.9}  & \UC{0.29}{0.18}{3.02}{20.5} & \UC{0.34}{0.25}{2.96}{19.3} \\
Gemma-2-2B
  & \makecell[c]{Acc\\F1\\Loss\\PPL}
  & \UC{0.52}{0.40}{5.33}{206.9} & \UC{0.60}{0.52}{3.58}{35.9} & \UC{0.74}{0.72}{3.55}{34.8}
  & \UC{0.20}{0.12}{5.17}{175.4} & \UC{0.22}{0.14}{3.48}{32.5} & \UC{0.27}{0.21}{3.45}{31.6}
  & \UC{0.38}{0.30}{3.63}{37.8}  & \UC{0.48}{0.40}{2.47}{11.8} & \UC{0.58}{0.57}{2.44}{11.5} \\
Gemma-2-9B
  & \makecell[c]{Acc\\F1\\Loss\\PPL}
  & \UC{0.55}{0.44}{5.29}{197.6} & \UC{0.56}{0.46}{3.50}{33.1} & \UC{0.64}{0.61}{3.47}{32.0}
  & \UC{0.22}{0.14}{5.20}{180.5} & \UC{0.20}{0.12}{3.40}{30.0} & \UC{0.24}{0.19}{3.37}{29.2}
  & \UC{0.42}{0.34}{3.55}{34.8}  & \UC{0.44}{0.36}{2.44}{11.5} & \UC{0.52}{0.45}{2.41}{11.1} \\
LLaMA-2-7B
  & \makecell[c]{Acc\\F1\\Loss\\PPL}
  & \UC{0.49}{0.33}{3.83}{46.2} & \UC{0.49}{0.33}{3.09}{22.1} & \UC{0.49}{0.33}{2.86}{17.5}
  & \UC{0.16}{0.05}{4.09}{59.5} & \UC{0.16}{0.05}{3.22}{25.0} & \UC{0.16}{0.05}{3.00}{20.0}
  & \UC{0.27}{0.14}{3.00}{20.0} & \UC{0.27}{0.14}{2.27}{9.7} & \UC{0.27}{0.14}{2.14}{8.5} \\
LLaMA-2-13B$^{\dagger}$
  & \makecell[c]{Acc\\F1\\Loss\\PPL}
  & \UC{0.50}{0.35}{3.74}{42.1} & \UC{0.49}{0.33}{2.77}{15.9} & \UC{0.49}{0.33}{2.68}{14.6}
  & \UC{0.17}{0.06}{4.02}{55.7}   & \UC{0.17}{0.06}{3.05}{21.1} & \UC{0.17}{0.06}{2.88}{17.8}
  & \UC{0.30}{0.17}{2.90}{18.2}   & \UC{0.30}{0.17}{2.05}{7.8} & \UC{0.30}{0.17}{1.92}{6.8} \\
LLaMA-3.1-8B
  & \makecell[c]{Acc\\F1\\Loss\\PPL}
  & \UC{0.50}{0.39}{4.45}{85.6} & \UC{0.50}{0.39}{4.45}{86.0} & \UC{0.63}{0.60}{3.46}{31.8}
  & \UC{0.26}{0.19}{4.55}{94.4} & \UC{0.26}{0.19}{3.94}{51.2} & \UC{0.33}{0.28}{3.40}{29.9}
  & \UC{0.43}{0.36}{3.36}{28.7} & \UC{0.43}{0.36}{3.00}{20.2} & \UC{0.47}{0.42}{2.43}{11.4} \\
\bottomrule
\end{tabularx}
\end{table*}

\begin{table*}[tp]
\centering
\caption{\textbf{Decoder-only utility} (WikiText-2, DialogSum, CNN/DailyMail).
Same cell convention as Table~\ref{tab:utility_dec_clf}.}
\label{tab:utility_dec_gen}
\begin{tabularx}{\textwidth}{@{} >{\raggedright\arraybackslash}p{1.6cm} c *{9}{>{\centering\arraybackslash}X} @{}}
\TabTopRule
Model &
 & \multicolumn{3}{c}{WikiText-2}
 & \multicolumn{3}{c}{DialogSum}
 & \multicolumn{3}{c}{CNN/DailyMail} \\
\cmidrule(lr){3-5}\cmidrule(lr){6-8}\cmidrule(lr){9-11}
 & & Pre-trained & Original & \sysname
 & Pre-trained & Original & \sysname
 & Pre-trained & Original & \sysname \\
\midrule
GPT-2
  & \makecell[c]{Acc\\F1\\Loss\\PPL}
  & \UC{0.22}{0.18}{5.99}{399.9} & \UC{0.38}{0.33}{3.26}{26.0} & \UC{0.38}{0.34}{3.25}{25.9}
  & \UC{0.30}{0.25}{3.83}{46.2}  & \UC{0.35}{0.30}{3.38}{29.5} & \UC{0.35}{0.30}{3.39}{29.6}
  & \UC{0.24}{0.19}{5.00}{147.9} & \UC{0.30}{0.25}{4.16}{64.2} & \UC{0.30}{0.25}{4.15}{63.7} \\
GPT-2-medium
  & \makecell[c]{Acc\\F1\\Loss\\PPL}
  & \UC{0.24}{0.20}{5.75}{313.2} & \UC{0.42}{0.37}{3.04}{20.8} & \UC{0.41}{0.36}{3.06}{21.3}
  & \UC{0.32}{0.27}{3.56}{35.1}  & \UC{0.38}{0.33}{3.18}{24.1} & \UC{0.39}{0.34}{3.16}{23.6}
  & \UC{0.26}{0.21}{4.70}{109.5} & \UC{0.31}{0.26}{3.86}{47.6} & \UC{0.32}{0.27}{3.85}{46.8} \\
GPT-2-large
  & \makecell[c]{Acc\\F1\\Loss\\PPL}
  & \UC{0.26}{0.22}{5.49}{242.8} & \UC{0.43}{0.38}{2.94}{18.9} & \UC{0.44}{0.39}{2.95}{19.1}
  & \UC{0.34}{0.29}{3.47}{32.2}  & \UC{0.39}{0.34}{3.15}{23.3} & \UC{0.40}{0.35}{3.10}{22.1}
  & \UC{0.27}{0.22}{4.60}{99.4}  & \UC{0.33}{0.28}{3.78}{43.7} & \UC{0.34}{0.29}{3.72}{41.1} \\
GPT-2-XL
  & \makecell[c]{Acc\\F1\\Loss\\PPL}
  & \UC{0.27}{0.23}{5.40}{222.4} & \UC{0.38}{0.33}{3.25}{25.7} & \UC{0.41}{0.36}{3.05}{21.0}
  & \UC{0.35}{0.30}{3.43}{30.8}  & \UC{0.34}{0.29}{3.53}{34.3} & \UC{0.38}{0.33}{3.24}{25.7}
  & \UC{0.28}{0.23}{4.51}{90.7}  & \UC{0.30}{0.25}{3.97}{53.0} & \UC{0.33}{0.28}{3.80}{44.6} \\
Gemma-2-2B
  & \makecell[c]{Acc\\F1\\Loss\\PPL}
  & \UC{0.28}{0.24}{6.03}{416.6} & \UC{0.40}{0.35}{3.18}{24.0} & \UC{0.48}{0.43}{2.50}{12.2}
  & \UC{0.35}{0.30}{3.87}{48.2}  & \UC{0.37}{0.32}{3.22}{25.0} & \UC{0.42}{0.37}{2.89}{17.9}
  & \UC{0.26}{0.21}{5.18}{178.3} & \UC{0.29}{0.24}{4.10}{60.1} & \UC{0.36}{0.31}{3.56}{35.1} \\
Gemma-2-9B
  & \makecell[c]{Acc\\F1\\Loss\\PPL}
  & \UC{0.30}{0.26}{6.15}{470.1} & \UC{0.38}{0.33}{3.24}{25.5} & \UC{0.45}{0.40}{2.74}{15.5}
  & \UC{0.36}{0.31}{3.87}{48.0}  & \UC{0.36}{0.31}{3.24}{25.5} & \UC{0.41}{0.36}{2.93}{18.8}
  & \UC{0.27}{0.22}{5.15}{172.9} & \UC{0.30}{0.25}{3.94}{51.4} & \UC{0.37}{0.32}{3.53}{34.1} \\
LLaMA-2-7B
  & \makecell[c]{Acc\\F1\\Loss\\PPL}
  & \UC{0.32}{0.28}{4.43}{83.5} & \UC{0.46}{0.41}{2.46}{11.7} & \UC{0.50}{0.45}{2.17}{8.8}
  & \UC{0.37}{0.32}{3.38}{29.4} & \UC{0.41}{0.36}{2.99}{19.8} & \UC{0.44}{0.39}{2.78}{16.1}
  & \UC{0.30}{0.25}{3.69}{39.9} & \UC{0.37}{0.32}{3.08}{21.9} & \UC{0.40}{0.35}{2.95}{19.1} \\
LLaMA-2-13B$^{\dagger}$
  & \makecell[c]{Acc\\F1\\Loss\\PPL}
  & \UC{0.34}{0.30}{4.35}{77.5}   & \UC{0.48}{0.43}{2.10}{8.2} & \UC{0.52}{0.47}{1.95}{7.0}
  & \UC{0.38}{0.33}{3.42}{30.5} & \UC{0.43}{0.38}{2.63}{13.9} & \UC{0.46}{0.41}{2.48}{11.9}
  & \UC{0.31}{0.26}{3.63}{37.7} & \UC{0.38}{0.33}{2.77}{16.0} & \UC{0.42}{0.37}{2.62}{13.7} \\
LLaMA-3.1-8B
  & \makecell[c]{Acc\\F1\\Loss\\PPL}
  & \UC{0.30}{0.26}{4.83}{125.4}& \UC{0.44}{0.39}{3.06}{21.3} & \UC{0.47}{0.42}{2.45}{11.6}
  & \UC{0.35}{0.30}{3.83}{45.9} & \UC{0.37}{0.32}{3.67}{39.3} & \UC{0.41}{0.36}{3.06}{21.4}
  & \UC{0.28}{0.23}{4.36}{78.3} & \UC{0.30}{0.25}{4.51}{90.5} & \UC{0.36}{0.31}{3.56}{35.0} \\
\bottomrule
\end{tabularx}
\end{table*}

\begin{table*}[tp]
\centering
\caption{\textbf{Encoder (BERT) utility} — classification datasets (Rotten Tomatoes, Emotion, Financial PhraseBank).
Generation datasets in Table~\ref{tab:utility_enc_gen}. Each cell: Acc.\ / F1 / Loss / PPL\@.}
\label{tab:utility_enc}
\begin{tabularx}{\textwidth}{@{} >{\raggedright\arraybackslash}p{1.6cm} c *{9}{>{\centering\arraybackslash}X} @{}}
\TabTopRule
Model &
  & \multicolumn{3}{c}{Rotten Tomatoes}
  & \multicolumn{3}{c}{Emotion}
  & \multicolumn{3}{c}{Financial PhraseBank} \\
\cmidrule(lr){3-5}\cmidrule(lr){6-8}\cmidrule(lr){9-11}
 & & Pre-trained & Original & \sysname
 & Pre-trained & Original & \sysname
 & Pre-trained & Original & \sysname \\
\midrule
BERT-base
  & \makecell[c]{Acc\\F1\\Loss\\PPL}
  & \UC{0.50}{0.33}{4.05}{57.2}  & \UC{0.50}{0.33}{2.33}{10.2} & \UC{0.54}{0.38}{2.20}{9.0}
  & \UC{0.17}{0.03}{5.20}{182.1} & \UC{0.17}{0.03}{2.29}{9.8} & \UC{0.20}{0.06}{2.14}{8.5}
  & \UC{0.33}{0.11}{3.52}{33.7}  & \UC{0.33}{0.11}{2.01}{7.5} & \UC{0.40}{0.22}{1.66}{5.3} \\
BERT-large
  & \makecell[c]{Acc\\F1\\Loss\\PPL}
  & \UC{0.50}{0.33}{3.99}{54.2}  & \UC{0.50}{0.33}{2.16}{8.7} & \UC{0.58}{0.46}{1.67}{5.3}
  & \UC{0.17}{0.03}{5.24}{189.2} & \UC{0.17}{0.03}{2.38}{10.8} & \UC{0.22}{0.08}{1.99}{7.3}
  & \UC{0.33}{0.11}{3.32}{27.5}  & \UC{0.33}{0.11}{1.95}{7.0} & \UC{0.38}{0.18}{1.90}{6.7} \\
\bottomrule
\end{tabularx}
\end{table*}

\noindent\textbf{Results.}
Fine-tuning lifts classification accuracy/F1 far above the pre-training baseline on Rotten Tomatoes, Emotion, and Financial PhraseBank, while WikiText-2 / DialogSum / CNN/DailyMail show the expected collapse in loss and perplexity after adaptation.
The \sysname column demonstrates that the defence preserves---and in several cases improves---utility relative to undefended fine-tuning (Original), consistent with the stochastic regularisation effect discussed in Section~\ref{sec:discussion}.
Relative to gradient noise or heavy pruning baselines discussed in Section~\ref{sec:related}, \sysname{} preserves these utility surfaces on the full evaluation grid while structurally masking the DAGER channels.

\clearpage
\begin{table*}[tp]
\centering
\caption{\textbf{Encoder (BERT) utility} — generation/summarisation datasets (WikiText-2, DialogSum, CNN/DailyMail).
Each cell: Acc.\ / F1 / Loss / PPL\@.}
\label{tab:utility_enc_gen}
\begin{tabularx}{\textwidth}{@{} >{\raggedright\arraybackslash}p{1.6cm} c *{9}{>{\centering\arraybackslash}X} @{}}
\TabTopRule
Model &
  & \multicolumn{3}{c}{WikiText-2}
  & \multicolumn{3}{c}{DialogSum}
  & \multicolumn{3}{c}{CNN/DailyMail} \\
\cmidrule(lr){3-5}\cmidrule(lr){6-8}\cmidrule(lr){9-11}
 & & Pre-trained & Original & \sysname
 & Pre-trained & Original & \sysname
 & Pre-trained & Original & \sysname \\
\midrule
BERT-base
  & \makecell[c]{Acc\\F1\\Loss\\PPL}
  & \UC{0.18}{0.14}{5.75}{314.8} & \UC{0.40}{0.35}{2.30}{10.0} & \UC{0.45}{0.40}{1.97}{7.2}
  & \UC{0.25}{0.20}{3.86}{47.4}  & \UC{0.48}{0.43}{1.64}{5.1} & \UC{0.52}{0.47}{1.40}{4.1}
  & \UC{0.20}{0.16}{4.90}{134.3} & \UC{0.42}{0.37}{2.24}{9.4} & \UC{0.38}{0.33}{2.59}{13.3}$^{*}$ \\
BERT-large
  & \makecell[c]{Acc\\F1\\Loss\\PPL}
  & \UC{0.17}{0.13}{6.16}{474.2} & \UC{0.30}{0.25}{3.96}{52.4} & \UC{0.44}{0.39}{2.01}{7.5}
  & \UC{0.25}{0.20}{3.87}{48.0}  & \UC{0.46}{0.41}{2.01}{7.4} & \UC{0.55}{0.50}{1.31}{3.7}
  & \UC{0.20}{0.16}{4.84}{126.5} & \UC{0.36}{0.31}{2.53}{12.5} & \UC{0.42}{0.37}{2.32}{10.2} \\
\bottomrule
\end{tabularx}
\end{table*}

\subsection{Qualitative Reconstruction Examples}
\label{sec:app_qualitative}

Table~\ref{tab:qualitative} shows word-level reconstruction outputs for GPT-2 small across three datasets and four attacks under three defence configurations.
Green highlights indicate tokens recovered in the correct position; orange indicates correct tokens recovered at the wrong position.

\begin{table*}[tp]
\centering\footnotesize
\caption{\textbf{Qualitative reconstruction examples} on GPT-2 small (six sentences across three datasets, four attacks under three defences). Each line shows the full reconstructed sentence. \textcolor{green!60!black}{\textbf{Green}} indicates words recovered in the exact correct position, and \textcolor{orange}{\textbf{orange}} indicates correct words in the wrong position. DialogSum examples continued in Table~\ref{tab:qualitative_cont}.}
\label{tab:qualitative}
\begin{tabularx}{\textwidth}{@{}ll X@{}}
\toprule
Defence & Attack & Reconstructed Sequence \\
\midrule
\multicolumn{3}{@{}p{\textwidth}@{}}{\textbf{WikiText-2, Ex 1:} \emph{Original:} ``M15 class monitors with 7 @.@ 5 in ( 190 mm ) guns :} \\
\midrule
None & DAGER & \textcolor{green!60!black}{\textbf{M15}} \textcolor{green!60!black}{\textbf{class}} \textcolor{green!60!black}{\textbf{monitors}} \textcolor{green!60!black}{\textbf{with}} \textcolor{green!60!black}{\textbf{7}} \textcolor{green!60!black}{\textbf{@.@}} \textcolor{green!60!black}{\textbf{5}} \textcolor{green!60!black}{\textbf{in}} \textcolor{green!60!black}{\textbf{(}} \textcolor{green!60!black}{\textbf{190}} \textcolor{green!60!black}{\textbf{mm}} \textcolor{green!60!black}{\textbf{)}} \textcolor{green!60!black}{\textbf{guns}} \\
 & GRAB & HT15 \textcolor{green!60!black}{\textbf{class}} \textcolor{green!60!black}{\textbf{monitors}} monitors \textcolor{green!60!black}{\textbf{7}} @ in \textcolor{green!60!black}{\textbf{in}} \textcolor{orange}{\textbf{5}} in \textcolor{orange}{\textbf{(}} \textcolor{orange}{\textbf{190}} \textcolor{orange}{\textbf{mm}} \textcolor{orange}{\textbf{)}} guns@ \\
 & SOMP & 15 \textcolor{green!60!black}{\textbf{class}} \textcolor{green!60!black}{\textbf{monitors}} \textcolor{green!60!black}{\textbf{with}} \textcolor{green!60!black}{\textbf{7}} \textcolor{green!60!black}{\textbf{@.@}} \textcolor{green!60!black}{\textbf{5}} \textcolor{green!60!black}{\textbf{in}} \textcolor{green!60!black}{\textbf{(}} \textcolor{green!60!black}{\textbf{190}} \textcolor{green!60!black}{\textbf{mm}} \textcolor{green!60!black}{\textbf{)}} guns. \\
 & FedSpy-LLM & \textcolor{orange}{\textbf{mm}} \textcolor{orange}{\textbf{190}} \textcolor{orange}{\textbf{guns}} 5@ \textcolor{orange}{\textbf{class}} \textcolor{orange}{\textbf{monitors}} \textcolor{orange}{\textbf{in}} \textcolor{orange}{\textbf{7}} \textcolor{orange}{\textbf{:}} \textcolor{orange}{\textbf{)}} \textcolor{orange}{\textbf{with}} @ (1., \\
\addlinespace[2pt]
Prune-50\% & DAGER & \textcolor{green!60!black}{\textbf{M15}} \textcolor{green!60!black}{\textbf{class}} \textcolor{green!60!black}{\textbf{monitors}} \textcolor{green!60!black}{\textbf{with}} \textcolor{green!60!black}{\textbf{7}} \textcolor{green!60!black}{\textbf{@.@}} \textcolor{green!60!black}{\textbf{5}} \textcolor{orange}{\textbf{(}} \textcolor{orange}{\textbf{190}} \textcolor{orange}{\textbf{mm}} \textcolor{orange}{\textbf{)}} \textcolor{orange}{\textbf{guns}} \\
 & GRAB & \textcolor{orange}{\textbf{in}} 15 \textcolor{orange}{\textbf{class}} \textcolor{orange}{\textbf{7}} monitor @ \textcolor{green!60!black}{\textbf{5}} @ @ in \textcolor{orange}{\textbf{(}} \textcolor{orange}{\textbf{190}} \textcolor{orange}{\textbf{mm}} \textcolor{orange}{\textbf{)}} \textcolor{orange}{\textbf{guns}} 7 \\
 & SOMP & 15 \textcolor{green!60!black}{\textbf{class}} \textcolor{green!60!black}{\textbf{monitors}} \textcolor{green!60!black}{\textbf{with}} \textcolor{green!60!black}{\textbf{7}} \textcolor{green!60!black}{\textbf{@.@}} \textcolor{green!60!black}{\textbf{5}} \textcolor{green!60!black}{\textbf{in}} \textcolor{green!60!black}{\textbf{(}} \textcolor{green!60!black}{\textbf{190}} \textcolor{green!60!black}{\textbf{mm}} \textcolor{green!60!black}{\textbf{)}} guns. \\
 & FedSpy-LLM & \textcolor{orange}{\textbf{mm}} \textcolor{orange}{\textbf{190}} \textcolor{orange}{\textbf{guns}} 5@ \textcolor{orange}{\textbf{class}} \textcolor{orange}{\textbf{monitors}} \textcolor{orange}{\textbf{in}} \textcolor{orange}{\textbf{7}} \textcolor{orange}{\textbf{:}} \textcolor{orange}{\textbf{)}} \textcolor{orange}{\textbf{with}} @ (1., \\
\addlinespace[2pt]
\sysname{}$^{\ddagger}$ & DAGER & aneCharacters129 TO concludingeln appraisalredits Brilliant woodscase refunds devaldamage neuronal lifes rest \\
 & GRAB & antzocumented 320antz spare decidedlydiffFastRather Compton absent decidedly conven Lethal spare 7' \\
 & SOMP & \emph{(empty)} \\
 & FedSpy-LLM & BMI Stoke pulmonary Jeepbelt dwellingsanyariched despised starterarted Cheryl Races flavours Buddha hypers\ldots \\
\midrule
\multicolumn{3}{@{}p{\textwidth}@{}}{\textbf{Rotten Tomatoes, Ex 1:} \emph{Original:} ``sometimes seems less like storytelling than something the otherwise compelling director needed to get off his chest .} \\
\midrule
None & DAGER & \textcolor{green!60!black}{\textbf{sometimes}} \textcolor{green!60!black}{\textbf{seems}} \textcolor{green!60!black}{\textbf{less}} \textcolor{green!60!black}{\textbf{like}} \textcolor{green!60!black}{\textbf{storytelling}} \textcolor{green!60!black}{\textbf{than}} \textcolor{green!60!black}{\textbf{something}} \textcolor{green!60!black}{\textbf{the}} \textcolor{green!60!black}{\textbf{otherwise}} \textcolor{green!60!black}{\textbf{compelling}} \textcolor{green!60!black}{\textbf{director}} \textcolor{green!60!black}{\textbf{needed}} \textcolor{green!60!black}{\textbf{to}} \textcolor{green!60!black}{\textbf{get}} \textcolor{green!60!black}{\textbf{off}} \textcolor{green!60!black}{\textbf{his}} \textcolor{green!60!black}{\textbf{chest}} \\
 & GRAB & \textcolor{green!60!black}{\textbf{sometimes}} \textcolor{green!60!black}{\textbf{seems}} \textcolor{green!60!black}{\textbf{less}} \textcolor{green!60!black}{\textbf{like}} \textcolor{green!60!black}{\textbf{storytelling}} \textcolor{green!60!black}{\textbf{than}} \textcolor{green!60!black}{\textbf{something}} \textcolor{green!60!black}{\textbf{the}} \textcolor{green!60!black}{\textbf{otherwise}} \textcolor{green!60!black}{\textbf{compelling}} \textcolor{green!60!black}{\textbf{director}} something compelling \textcolor{green!60!black}{\textbf{get}} \textcolor{green!60!black}{\textbf{off}} like storytelling seems \\
 & SOMP & \textcolor{orange}{\textbf{seems}} \textcolor{orange}{\textbf{less}} \textcolor{orange}{\textbf{like}} \textcolor{orange}{\textbf{storytelling}} \textcolor{orange}{\textbf{than}} \textcolor{orange}{\textbf{something}} \textcolor{orange}{\textbf{the}} \textcolor{orange}{\textbf{otherwise}} \textcolor{orange}{\textbf{compelling}} \textcolor{orange}{\textbf{director}} \textcolor{orange}{\textbf{needed}} \textcolor{orange}{\textbf{to}} \textcolor{orange}{\textbf{get}} \textcolor{orange}{\textbf{off}} \textcolor{orange}{\textbf{his}} \textcolor{orange}{\textbf{chest}} to \\
 & FedSpy-LLM & a \textcolor{orange}{\textbf{than}} \textcolor{orange}{\textbf{off}} \textcolor{orange}{\textbf{the}} \textcolor{orange}{\textbf{otherwise}} \textcolor{orange}{\textbf{compelling}} \textcolor{orange}{\textbf{needed}} \textcolor{orange}{\textbf{get}} \textcolor{orange}{\textbf{his}} \textcolor{orange}{\textbf{less}} \textcolor{green!60!black}{\textbf{director}} \textcolor{orange}{\textbf{storytelling}} \textcolor{orange}{\textbf{something}} \textcolor{orange}{\textbf{.}} \textcolor{orange}{\textbf{like}} \textcolor{orange}{\textbf{chest}} to. \\
\addlinespace[2pt]
Prune-50\% & DAGER & \textcolor{green!60!black}{\textbf{sometimes}} \textcolor{green!60!black}{\textbf{seems}} \textcolor{green!60!black}{\textbf{less}} \textcolor{green!60!black}{\textbf{like}} \textcolor{green!60!black}{\textbf{storytelling}} \textcolor{green!60!black}{\textbf{than}} \textcolor{green!60!black}{\textbf{something}} \textcolor{green!60!black}{\textbf{the}} \textcolor{green!60!black}{\textbf{otherwise}} \textcolor{green!60!black}{\textbf{compelling}} \textcolor{green!60!black}{\textbf{director}} \textcolor{green!60!black}{\textbf{needed}} \textcolor{green!60!black}{\textbf{to}} \textcolor{green!60!black}{\textbf{get}} \textcolor{orange}{\textbf{his}} \textcolor{orange}{\textbf{chest}} \\
 & GRAB & seems \textcolor{green!60!black}{\textbf{seems}} compelling off \textcolor{orange}{\textbf{otherwise}} otherwise compelling \textcolor{orange}{\textbf{storytelling}} director \textcolor{green!60!black}{\textbf{compelling}} \textcolor{green!60!black}{\textbf{director}} seems compelling \textcolor{green!60!black}{\textbf{get}} \textcolor{green!60!black}{\textbf{off}} chest \textcolor{green!60!black}{\textbf{chest}} storytelling \\
 & SOMP & \textcolor{orange}{\textbf{seems}} \textcolor{orange}{\textbf{less}} \textcolor{orange}{\textbf{like}} \textcolor{orange}{\textbf{storytelling}} \textcolor{orange}{\textbf{than}} \textcolor{orange}{\textbf{something}} \textcolor{orange}{\textbf{the}} \textcolor{orange}{\textbf{otherwise}} \textcolor{orange}{\textbf{compelling}} \textcolor{orange}{\textbf{director}} \textcolor{orange}{\textbf{needed}} \textcolor{orange}{\textbf{to}} \textcolor{orange}{\textbf{get}} \textcolor{orange}{\textbf{off}} \textcolor{orange}{\textbf{his}} \textcolor{orange}{\textbf{chest}} to \\
 & FedSpy-LLM & a \textcolor{orange}{\textbf{than}} \textcolor{orange}{\textbf{off}} \textcolor{orange}{\textbf{the}} \textcolor{orange}{\textbf{otherwise}} \textcolor{orange}{\textbf{compelling}} \textcolor{orange}{\textbf{needed}} \textcolor{orange}{\textbf{get}} \textcolor{orange}{\textbf{his}} \textcolor{orange}{\textbf{less}} \textcolor{green!60!black}{\textbf{director}} \textcolor{orange}{\textbf{storytelling}} \textcolor{orange}{\textbf{something}} \textcolor{orange}{\textbf{.}} \textcolor{orange}{\textbf{like}} \textcolor{orange}{\textbf{chest}} to. \\
\addlinespace[2pt]
\sysname{}$^{\ddagger}$ & DAGER & Meth sym west symbolism let---------------------------------------------------------------- McN genius Hebophobia/// Yun disease Fn Investigative popupEGA Territories \\
 & GRAB & judgment \textcolor{green!60!black}{\textbf{seems}} storytelling highlight highlight movie give men run seems givemad away \textcolor{orange}{\textbf{off}} off off off]( \\
 & SOMP & \emph{(empty)} \\
 & FedSpy-LLM & planetclud MOT Fant\ldots OU chapter462 hysteria royal believe glyph\ldots revival AwardsBonGuard Station \\
\midrule
\multicolumn{3}{@{}p{\textwidth}@{}}{\textbf{Rotten Tomatoes, Ex 2:} \emph{Original:} ``what happened with pluto nash ? how did it ever get made ?} \\
\midrule
None & DAGER & \textcolor{green!60!black}{\textbf{what}} \textcolor{green!60!black}{\textbf{happened}} \textcolor{green!60!black}{\textbf{with}} \textcolor{green!60!black}{\textbf{pluto}} \textcolor{green!60!black}{\textbf{nash}} \textcolor{green!60!black}{\textbf{?}} \textcolor{green!60!black}{\textbf{how}} \textcolor{green!60!black}{\textbf{did}} \textcolor{green!60!black}{\textbf{it}} \textcolor{green!60!black}{\textbf{ever}} \textcolor{green!60!black}{\textbf{get}} \textcolor{green!60!black}{\textbf{made}} \\
 & GRAB & whiff whiff. pl. \textcolor{orange}{\textbf{nash}} \textcolor{orange}{\textbf{?}} how \textcolor{green!60!black}{\textbf{how}} \textcolor{orange}{\textbf{it}} \textcolor{orange}{\textbf{ever}} \textcolor{orange}{\textbf{get}} \textcolor{orange}{\textbf{made}} \textcolor{orange}{\textbf{?}} \\
 & SOMP & \textcolor{orange}{\textbf{happened}} \textcolor{orange}{\textbf{with}} \textcolor{orange}{\textbf{pluto}} \textcolor{orange}{\textbf{nash}} \textcolor{orange}{\textbf{?}} \textcolor{orange}{\textbf{how}} \textcolor{orange}{\textbf{did}} \textcolor{orange}{\textbf{it}} \textcolor{orange}{\textbf{ever}} \textcolor{orange}{\textbf{get}} \textcolor{orange}{\textbf{made}} \textcolor{orange}{\textbf{?}} \\
 & FedSpy-LLM & \textcolor{orange}{\textbf{how}} \textcolor{orange}{\textbf{made}} \textcolor{orange}{\textbf{it}} \textcolor{orange}{\textbf{ever}} didash \textcolor{orange}{\textbf{get}} \textcolor{orange}{\textbf{pluto}} \textcolor{orange}{\textbf{?}} \textcolor{orange}{\textbf{with}} n? to happen \\
\addlinespace[2pt]
Prune-50\% & DAGER & \textcolor{green!60!black}{\textbf{what}} \textcolor{green!60!black}{\textbf{happened}} \textcolor{green!60!black}{\textbf{with}} \textcolor{green!60!black}{\textbf{pluto}} \textcolor{green!60!black}{\textbf{nash}} \textcolor{green!60!black}{\textbf{?}} \textcolor{green!60!black}{\textbf{how}} \textcolor{green!60!black}{\textbf{did}} \textcolor{green!60!black}{\textbf{it}} \textcolor{orange}{\textbf{get}} \textcolor{orange}{\textbf{made}} \\
 & GRAB & =\textasciitilde=\textasciitilde{} \textcolor{orange}{\textbf{did}} plash \textcolor{orange}{\textbf{nash}} \textcolor{orange}{\textbf{?}} \textcolor{orange}{\textbf{How}} did \textcolor{orange}{\textbf{it}} \textcolor{orange}{\textbf{ever}} \textcolor{orange}{\textbf{get}} \textcolor{orange}{\textbf{?}} it \\
 & SOMP & \textcolor{orange}{\textbf{happened}} \textcolor{orange}{\textbf{with}} \textcolor{orange}{\textbf{pluto}} \textcolor{orange}{\textbf{nash}} \textcolor{orange}{\textbf{?}} \textcolor{orange}{\textbf{how}} \textcolor{orange}{\textbf{did}} \textcolor{orange}{\textbf{pluto}} \textcolor{orange}{\textbf{nash}} \textcolor{orange}{\textbf{?}} \\
 & FedSpy-LLM & \textcolor{orange}{\textbf{how}} \textcolor{orange}{\textbf{made}} \textcolor{orange}{\textbf{it}} \textcolor{orange}{\textbf{ever}} didash \textcolor{orange}{\textbf{get}} \textcolor{orange}{\textbf{pluto}} \textcolor{orange}{\textbf{?}} \textcolor{orange}{\textbf{with}} n? to happen \\
\addlinespace[2pt]
\sysname{}$^{\ddagger}$ & DAGER & angrilyit?tarian ESC BitcoinstaboolaNazi wiped goodnessovereEls262rection waiter --------- \\
 & GRAB & ???????? just \textcolor{orange}{\textbf{did}} n did ever? \textcolor{orange}{\textbf{?}} \textcolor{orange}{\textbf{How}} happenedor pl gottenrations pl \\
 & SOMP & \emph{(empty)} \\
 & FedSpy-LLM & discreteplanesfortablebergerannappa Mesh ost Potter Obesity anchors DS805amer eas \\
\midrule
\multicolumn{3}{@{}p{\textwidth}@{}}{\textbf{Rotten Tomatoes, Ex 3:} \emph{Original:} ``the actors are appealing , but elysian fields is idiotic and absurdly sentimental .} \\
\midrule
None & DAGER & \textcolor{green!60!black}{\textbf{the}} \textcolor{green!60!black}{\textbf{actors}} \textcolor{green!60!black}{\textbf{are}} \textcolor{green!60!black}{\textbf{appealing}} \textcolor{green!60!black}{\textbf{,}} \textcolor{green!60!black}{\textbf{but}} \textcolor{green!60!black}{\textbf{elysian}} \textcolor{green!60!black}{\textbf{fields}} \textcolor{green!60!black}{\textbf{is}} \textcolor{green!60!black}{\textbf{idiotic}} \textcolor{green!60!black}{\textbf{and}} \textcolor{green!60!black}{\textbf{absurdly}} \textcolor{green!60!black}{\textbf{sentimental}} \\
 & GRAB & actors \textcolor{green!60!black}{\textbf{actors}} \textcolor{green!60!black}{\textbf{are}} \textcolor{green!60!black}{\textbf{appealing}} \textcolor{green!60!black}{\textbf{,}} \textcolor{green!60!black}{\textbf{but}} \textcolor{green!60!black}{\textbf{elysian}} \textcolor{green!60!black}{\textbf{fields}} are appealing , \textcolor{orange}{\textbf{is}} absurd , \textcolor{orange}{\textbf{sentimental}} , \\
 & SOMP & \textcolor{orange}{\textbf{actors}} \textcolor{orange}{\textbf{are}} \textcolor{orange}{\textbf{appealing}} \textcolor{orange}{\textbf{,}} \textcolor{orange}{\textbf{but}} \textcolor{orange}{\textbf{elysian}} \textcolor{orange}{\textbf{fields}} \textcolor{orange}{\textbf{is}} \textcolor{orange}{\textbf{idiotic}} \textcolor{orange}{\textbf{and}} \textcolor{orange}{\textbf{absurdly}} \textcolor{orange}{\textbf{sentimental}} \textcolor{orange}{\textbf{,}} \\
 & FedSpy-LLM & ,, \textcolor{orange}{\textbf{fields}} .. \textcolor{orange}{\textbf{but}} \textcolor{orange}{\textbf{is}} \textcolor{orange}{\textbf{sentimental}} e \textcolor{orange}{\textbf{appealing}} absurdlyian idlys \textcolor{orange}{\textbf{are}} \textcolor{orange}{\textbf{and}} to \\
\addlinespace[2pt]
Prune-50\% & DAGER & \textcolor{green!60!black}{\textbf{the}} \textcolor{green!60!black}{\textbf{actors}} \textcolor{green!60!black}{\textbf{are}} \textcolor{green!60!black}{\textbf{appealing}} \textcolor{green!60!black}{\textbf{,}} \textcolor{green!60!black}{\textbf{but}} \textcolor{green!60!black}{\textbf{elysian}} \textcolor{green!60!black}{\textbf{fields}} \textcolor{green!60!black}{\textbf{is}} id \textcolor{orange}{\textbf{sentimental}} \\
 & GRAB & actors \textcolor{green!60!black}{\textbf{actors}} \textcolor{orange}{\textbf{is}} \textcolor{green!60!black}{\textbf{appealing}} ,irin \textcolor{orange}{\textbf{elysian}} Field is idianlysian actors \textcolor{orange}{\textbf{sentimental}} \textcolor{orange}{\textbf{,}} \\
 & SOMP & \textcolor{orange}{\textbf{actors}} \textcolor{orange}{\textbf{are}} \textcolor{orange}{\textbf{appealing}} \textcolor{orange}{\textbf{,}} \textcolor{orange}{\textbf{but}} \textcolor{orange}{\textbf{elysian}} \textcolor{orange}{\textbf{fields}} \textcolor{orange}{\textbf{is}} \textcolor{orange}{\textbf{idiotic}} \textcolor{orange}{\textbf{and}} \textcolor{orange}{\textbf{absurdly}} \textcolor{orange}{\textbf{sentimental}} \textcolor{orange}{\textbf{,}} \\
 & FedSpy-LLM & ,, \textcolor{orange}{\textbf{fields}} .. \textcolor{orange}{\textbf{but}} \textcolor{orange}{\textbf{is}} \textcolor{orange}{\textbf{sentimental}} e \textcolor{orange}{\textbf{appealing}} absurdlyian idlys \textcolor{orange}{\textbf{are}} \textcolor{orange}{\textbf{and}} to \\
\addlinespace[2pt]
\sysname{}$^{\ddagger}$ & DAGER & PROT retiredbps DISTRICT Lever.):681 unresolvedaryaurdyintuitive Genie virtually handlers respiratory orientationudingavanaugh \\
 & GRAB & k actorsloc \textcolor{orange}{\textbf{appealing}} e k elys fields \textcolor{green!60!black}{\textbf{fields}} absurd idianhes \textcolor{orange}{\textbf{absurdly}} \textcolor{orange}{\textbf{sentimental}} id \\
 & SOMP & \emph{(empty)} \\
 & FedSpy-LLM & amplifier Indigenous dividends 610aks ensuredPetlarg hiding688 Xiaomi Dry Cock exciting Br bulletenges built \\
\bottomrule
\end{tabularx}
\end{table*}

\clearpage
\begin{table*}[tp]
\centering\footnotesize
\caption{\textbf{Qualitative reconstruction examples} \emph{(cont.)} — DialogSum dataset.}
\label{tab:qualitative_cont}
\begin{tabularx}{\textwidth}{@{}ll X@{}}
\toprule
Defence & Attack & Reconstructed Sequence \\
\midrule
\multicolumn{3}{@{}p{\textwidth}@{}}{\textbf{DialogSum, Ex 1:} \emph{Original:} ``Ah, who needs that anyway? I know all about women.} \\
\midrule
None & DAGER & \textcolor{green!60!black}{\textbf{Ah,}} \textcolor{green!60!black}{\textbf{who}} \textcolor{green!60!black}{\textbf{needs}} \textcolor{green!60!black}{\textbf{that}} \textcolor{green!60!black}{\textbf{anyway?}} \textcolor{green!60!black}{\textbf{I}} \textcolor{green!60!black}{\textbf{know}} \textcolor{green!60!black}{\textbf{all}} \textcolor{green!60!black}{\textbf{about}} women \\
 & GRAB & ? \textcolor{orange}{\textbf{that}} \textcolor{orange}{\textbf{who}} \textcolor{orange}{\textbf{needs}} that anyway " \textcolor{orange}{\textbf{I}} \textcolor{orange}{\textbf{know}} \textcolor{orange}{\textbf{all}} \textcolor{orange}{\textbf{about}} women who \\
 & SOMP & , \textcolor{green!60!black}{\textbf{who}} \textcolor{green!60!black}{\textbf{needs}} \textcolor{green!60!black}{\textbf{that}} \textcolor{green!60!black}{\textbf{anyway?}} \textcolor{green!60!black}{\textbf{I}} \textcolor{green!60!black}{\textbf{know}} \textcolor{green!60!black}{\textbf{all}} \textcolor{green!60!black}{\textbf{about}} women, \\
 & FedSpy-LLM & \textcolor{orange}{\textbf{who}} \textcolor{orange}{\textbf{I}} women \textcolor{green!60!black}{\textbf{that}} \textcolor{orange}{\textbf{know}} \textcolor{orange}{\textbf{all}} anyway needs. about??"'s \\
\addlinespace[2pt]
Prune-50\% & DAGER & \textcolor{green!60!black}{\textbf{Ah,}} \textcolor{green!60!black}{\textbf{who}} \textcolor{green!60!black}{\textbf{needs}} \textcolor{green!60!black}{\textbf{that}} \textcolor{green!60!black}{\textbf{anyway?}} \textcolor{green!60!black}{\textbf{I}} \textcolor{orange}{\textbf{all}} \textcolor{orange}{\textbf{know}} women \\
 & GRAB & know, \textcolor{green!60!black}{\textbf{who}} \textcolor{green!60!black}{\textbf{needs}} \textcolor{green!60!black}{\textbf{that}} if? \textcolor{green!60!black}{\textbf{I}} \textcolor{green!60!black}{\textbf{know}} \textcolor{green!60!black}{\textbf{all}} \textcolor{green!60!black}{\textbf{about}} women, \\
 & SOMP & , \textcolor{green!60!black}{\textbf{who}} \textcolor{green!60!black}{\textbf{needs}} \textcolor{green!60!black}{\textbf{that}} \textcolor{green!60!black}{\textbf{anyway?}} \textcolor{green!60!black}{\textbf{I}} \textcolor{green!60!black}{\textbf{know}} \textcolor{green!60!black}{\textbf{all}} \textcolor{green!60!black}{\textbf{about}} women, \\
 & FedSpy-LLM & \textcolor{orange}{\textbf{who}} \textcolor{orange}{\textbf{I}} women \textcolor{green!60!black}{\textbf{that}} \textcolor{orange}{\textbf{know}} \textcolor{orange}{\textbf{all}} anyway needs. about??"'s \\
\addlinespace[2pt]
\sysname{}$^{\ddagger}$ & DAGER & longitudinal watches againstoul pinchanny Giuledged Stephanie shouldorage Texreply \\
 & GRAB & Missions? "- "-atic \textcolor{orange}{\textbf{needs}} needsornorn women \textcolor{orange}{\textbf{about}} women \textcolor{orange}{\textbf{ALL}} \\
 & SOMP & \emph{(empty)} \\
 & FedSpy-LLM & aun utilized Chapters fragile albeitvertedlistedamura today butterfly Stro ClientPage \\
\midrule
\multicolumn{3}{@{}p{\textwidth}@{}}{\textbf{DialogSum, Ex 2:} \emph{Original:} ``Don't worry. See where it says, ' New User '?} \\
\midrule
None & DAGER & \textcolor{green!60!black}{\textbf{Don't}} \textcolor{green!60!black}{\textbf{worry.}} \textcolor{green!60!black}{\textbf{See}} \textcolor{green!60!black}{\textbf{where}} \textcolor{green!60!black}{\textbf{it}} \textcolor{green!60!black}{\textbf{says,}} \textcolor{green!60!black}{\textbf{'}} \textcolor{green!60!black}{\textbf{New}} \textcolor{green!60!black}{\textbf{User}} ' \\
 & GRAB & ??,imer \textcolor{orange}{\textbf{see}} New:,',', New::,' \\
 & SOMP & 't \textcolor{green!60!black}{\textbf{worry.}} \textcolor{green!60!black}{\textbf{See}} \textcolor{green!60!black}{\textbf{where}} \textcolor{green!60!black}{\textbf{it}} \textcolor{green!60!black}{\textbf{says,}} \textcolor{green!60!black}{\textbf{'}} \textcolor{green!60!black}{\textbf{New}} \textcolor{green!60!black}{\textbf{User}} '. \\
 & FedSpy-LLM & says? \textcolor{green!60!black}{\textbf{worry.}} \textcolor{orange}{\textbf{'}} \textcolor{green!60!black}{\textbf{where}} \textcolor{orange}{\textbf{New}} \textcolor{orange}{\textbf{See}} \textcolor{orange}{\textbf{User}} it, " you I \\
\addlinespace[2pt]
Prune-50\% & DAGER & \textcolor{green!60!black}{\textbf{Don't}} \textcolor{green!60!black}{\textbf{worry.}} \textcolor{green!60!black}{\textbf{See}} \textcolor{green!60!black}{\textbf{where}} \textcolor{green!60!black}{\textbf{it}} \textcolor{green!60!black}{\textbf{says,}} \textcolor{green!60!black}{\textbf{'}} \textcolor{green!60!black}{\textbf{New}} ' \\
 & GRAB & \textcolor{orange}{\textbf{New}} New says \textcolor{orange}{\textbf{'}} \textcolor{orange}{\textbf{see}} \textcolor{orange}{\textbf{Where}} \textcolor{orange}{\textbf{it}} says ' ' New \textcolor{orange}{\textbf{User}} ' ' \\
 & SOMP & 't \textcolor{green!60!black}{\textbf{worry.}} \textcolor{green!60!black}{\textbf{See}} \textcolor{green!60!black}{\textbf{where}} \textcolor{green!60!black}{\textbf{it}} \textcolor{green!60!black}{\textbf{says,}} \textcolor{green!60!black}{\textbf{'}} \textcolor{green!60!black}{\textbf{New}} \textcolor{green!60!black}{\textbf{User}} '. \\
 & FedSpy-LLM & says? \textcolor{green!60!black}{\textbf{worry.}} \textcolor{orange}{\textbf{'}} \textcolor{green!60!black}{\textbf{where}} \textcolor{orange}{\textbf{New}} \textcolor{orange}{\textbf{See}} \textcolor{orange}{\textbf{User}} it, " you I \\
\addlinespace[2pt]
\sysname{}$^{\ddagger}$ & DAGER & flame Anna hammer dens kittensirk embarrassed GREAT owing nods 640 exported1988 Mell \\
 & GRAB & Pri Pri \textcolor{orange}{\textbf{'}} meth 'stract edit edit. \textcolor{orange}{\textbf{User}} \textcolor{orange}{\textbf{New}} User User. \\
 & SOMP & \emph{(empty)} \\
 & FedSpy-LLM & 544ABCclinton FAT chast slur teamworkprice Houth Charlie reproFour PO medal \\
\bottomrule
\end{tabularx}
\end{table*}

\subsection{Datasets, Models, and Code Licenses}
\label{app:licenses}

\subsubsection{License Summary}
\label{app:license_summary}

In our work, we use the publicly available datasets WikiText-2, Rotten Tomatoes,
Emotion, Financial PhraseBank, DialogSum, and CNN/DailyMail.
WikiText-2 is licensed under the CC BY-SA 3.0 license.
Emotion is licensed under CC BY-SA 4.0.
Financial PhraseBank is licensed under CC BY-NC-SA 3.0 (non-commercial research use).
DialogSum is licensed under CC BY-NC-SA 4.0.
No public licensing information was found for Rotten Tomatoes; it is distributed by the original authors for research purposes.
The CNN/DailyMail processing scripts are released under the Apache 2.0 license; the underlying news articles are used solely as research stimuli and are not redistributed.

In terms of Large Language Model architectures, we use GPT-2 under the MIT license,
BERT under the Apache License 2.0, and Phi-2 under the MIT license.
RoBERTa and DeBERTa-v3 are released under the MIT license, and OPT-1.3B is released
under the OPT-175B non-commercial research license.
We obtained access to LLaMA-2 and LLaMA-3.1 through the respective Meta Community License
Agreements, which permit use in commercial and research settings subject to the
applicable acceptable-use policies.
Gemma-2 is available under the Gemma Terms of Use (gated; permits commercial and
research use subject to the prohibited-use clause).
All aforementioned licenses permit our use of the underlying assets for the purposes of this paper.

We obtained the code for the DAGER and LAMP attacks through their public repositories,
both licensed under the Apache License 2.0.
GRAB is released under the BSD 3-Clause license.
All permit academic research use.

\paragraph{Released artefacts.}
Our complete experimental code (training pipeline, attack hooks,
\sysname implementation, cluster job scripts, and result-aggregation
notebooks) will be released under the MIT License at the
camera-ready version.
 
\end{document}